\pgfplotsset{compat=1.18}
\newcommand{\nats}{\mathbb{N}}
\newcommand{\reals}{\mathbb{R}}
\newcommand{\nnreals}{\reals_{\geq 0}}
\newcommand{\rats}{\mathbb{Q}}
\newcommand{\nnrats}{\rats_{\geq 0}}
\newcommand{\aut}{\mathcal{A}}
\newcommand{\lang}[1]{L(#1)}
\newcommand{\infset}[1]{\mathrm{Inf}(#1)}
\newcommand{\transition}[1]{\xrightarrow{(\sigma_{#1},\tau_{#1})}}
\newcommand{\nonempty}[1]{S_{#1}^{\textit{ne}}}
\newcommand{\terminal}[3]{\mathcal{R}_{#1}^{#2}(#3)}
\newcommand{\terminalprime}[3]{\widehat{\mathcal{R}}_{#1}^{#2}(#3)}
\newcommand{\compautomaton}{\overline{\mathcal{A}}}
\newcommand{\tbaty}{\mathbb{A}}
\newcommand{\tba}{TBA\xspace}
\newcommand{\mitl}{MITL\xspace}
\newcommand{\true}{\texttt{true}}
\newcommand{\shortsim}{%
  \settowidth{\@tempdima}{n}% Width of n
  \resizebox{\@tempdima}{\height}{$\sim$}%
}
\newcommand{\TSigma}{T\Sigma}
\newcommand{\pow}[1]{2^{#1}}
\newcommand\proj[1]{|_{#1}}
\newcommand{\bools}{\mathbb{B}}
\newcommand{\verdicts}{\bools_3}
\newcommand{\unknown}{\textbf{\textit{?}}}
\newcommand{\globaltime}{{\mathit{time}}}
\newcommand{\etime}{{\mathit{etime}}}
\newcommand{\evalfuncsymbol}{\mathcal{V}\xspace}
\newcommand{\evalfuncd}[3]{\evalfuncsymbol_{#1}(#2)(#3)}
\newcommand{\strict}{\mathrm{el}}
\newcommand{\evalfuncsymbolstrict}{\mathcal{V}^\strict\xspace}
\newcommand{\evalfuncdstrict}[3]{\evalfuncsymbolstrict_{#1}(#2)(#3)}
\newcommand{\testevalfuncsymbol}{\widehat{\mathcal{V}}\xspace}
\newcommand{\testevalfuncd}[3]{\testevalfuncsymbol_{#1}(#2)(#3)}
\newcommand{\monitordelayssymbol}{\mathcal{M}_{\delays}\xspace}
\newcommand{\monitordelays}[2]{\monitordelayssymbol(#1)(#2)}
\newcommand{\monitordelayssymbolprime}{\widehat{\mathcal{M}}_{\delays}\xspace}
\newcommand{\monitordelaysprime}[2]{\monitordelayssymbolprime(#1)(#2)}
\newcommand{\chan}{{\it Chan}}
\newcommand{\GT}{{\mathit GT}}
\newcommand{\testGT}{{\widehat{\mathit GT}}}
\newcommand{\testDelta}{\widehat{\Delta}}
\newcommand{\delays}{\mathcal{D}}
\newcommand{\lat}{\delta}
\newcommand{\jit}{\varepsilon}
\newcommand{\myquot}[1]{``#1''}
\newcommand{\decorate}[1]{#1'}
\newcommand{\post}{\mathrm{Post}}
\newcommand{\succc}{\mathrm{Succ}}
\newcommand\dir{\text{\it direction}}
\newcommand{\IO}{\mathrm{IO}}
    \theoremstyle{acmdefinition}%
    \newtheorem{remark}[theorem]{Remark}
\begin{document}

%%
%% The "title" command has an optional parameter,
%% allowing the author to define a "short title" to be used in page headers.
\title{Efficient Runtime Verification of Real-Time Systems under Parametric Communication Delays}

%%
%% The "author" command and its associated commands are used to define
%% the authors and their affiliations.
%% Of note is the shared affiliation of the first two authors, and the
%% "authornote" and "authornotemark" commands
%% used to denote shared contribution to the research.
\author{Martin Fränzle}
\email{martin.fraenzle@uol.de}
\orcid{0000-0002-9138-8340}
\affiliation{%
  \institution{Carl von Ossietzky Universität Oldenburg}
  \city{Oldenburg}
  \country{Germany}
}
\author{Thomas M.\ Grosen}
\email{tmgr@cs.aau.dk}
\orcid{0009-0008-7719-6482}
\author{Kim G.\ Larsen}
\email{kgl@cs.aau.dk}
\orcid{0000-0002-5953-3384}
\author{Martin Zimmermann}
\email{mzi@cs.aau.dk}
\orcid{0000-0002-8038-2453}
\affiliation{%
  \institution{Aalborg University}
  \city{Aalborg}
  \country{Denmark}
}

%%
%% By default, the full list of authors will be used in the page
%% headers. Often, this list is too long, and will overlap
%% other information printed in the page headers. This command allows
%% the author to define a more concise list
%% of authors' names for this purpose.
%\renewcommand{\shortauthors}{Fränzle et al.}

%%
%% The abstract is a short summary of the work to be presented in the
%% article.
\begin{abstract}
Timed Büchi automata provide a very expressive formalism for expressing requirements of real-time systems.  Online monitoring and active testing of embedded real-time systems can then be achieved by symbolic execution of such  automata on the trace observed from the system. However, this direct construction is only faithful if the observation of the trace is immediate in the sense that the monitor (or test harness, respectively) can assign exact timestamps to the actions it observes. This is rarely true in practice due to the substantial and fluctuating parametric delays introduced by the circuitry connecting the observed system to its monitoring or testing device. 

We present purely zone-based online monitoring and testing algorithms, which handle such parametric delays exactly without recurrence to costly verification procedures for parametric timed automata. We have implemented our algorithms on top of the real-time model checking tool \textsc{Uppaal}, and report on encouraging initial results.
\end{abstract}

%%
%% The code below is generated by the tool at http://dl.acm.org/ccs.cfm.
%% Please copy and paste the code instead of the example below.
%%
\begin{CCSXML}
<ccs2012>
   <concept>
       <concept_id>10003752.10003790.10002990</concept_id>
       <concept_desc>Theory of computation~Logic and verification</concept_desc>
       <concept_significance>500</concept_significance>
       </concept>
   <concept>
       <concept_id>10003752.10003753.10003765</concept_id>
       <concept_desc>Theory of computation~Timed and hybrid models</concept_desc>
       <concept_significance>500</concept_significance>
       </concept>
   <concept>
       <concept_id>10003752.10003790.10003793</concept_id>
       <concept_desc>Theory of computation~Modal and temporal logics</concept_desc>
       <concept_significance>300</concept_significance>
       </concept>
 </ccs2012>
\end{CCSXML}

\ccsdesc[300]{Theory of computation~Modal and temporal logics}
\ccsdesc[500]{Theory of computation~Logic and verification}
\ccsdesc[500]{Theory of computation~Timed and hybrid models}

%%
%% Keywords. The author(s) should pick words that accurately describe
%% the work being presented. Separate the keywords with commas.
\keywords{Monitoring, Timing uncertainty, Timed Büchi Automata}

\received{20 February 2007}
\received[revised]{12 March 2009}
\received[accepted]{5 June 2009}

%%
%% This command processes the author and affiliation and title
%% information and builds the first part of the formatted document.
\maketitle

%%%%%%%%%%%%%%%%%%%%%%%%%%%%%%%%%%%%%%%%%%%%%%%%%%%%%%%%%%%%%%%%%%%%%%%%%%%
%%%%%%%%%%%%%%%%%%%%%%%%%%%%%%%%%%%%%%%%%%%%%%%%%%%%%%%%%%%%%%%%%%%%%%%%%%%
%%%%%%%%%%%%%%%%%%%%%%%%%%%%%%%%%%%%%%%%%%%%%%%%%%%%%%%%%%%%%%%%%%%%%%%%%%%

\section{Introduction}
\label{sec:intro}

Online monitoring and testing are two important tools to achieve functional correctness of safety-critical systems. 
They analyse the execution traces observed from a system during its runtime by determining in real-time whether the observed traces satisfy the system's specification. 
Online monitoring passively observes an (execution) trace of the system. A typical application is to ensure that an unmanned aerial vehicle (UAV) stays within its safety envelope. In testing, the system is actively subjected to stimuli, which allows to cover a wider range of traces which might not be observed by passively monitoring the system. A typical application is a UAV on a test stand that allows to control wind speed and direction.

Both continuous online monitoring and testing are therefore concerned with unbounded time horizons, unlike offline monitoring where a fixed finite trace is analysed after the execution has terminated.
Hence, specifications for online monitoring and testing are typically defined over infinite traces, with the most significant approach being temporal logics. 
As specifications often include  real-time requirements, e.g., ``every request is answered within 10 milliseconds (ms)'', we focus here on metric-time temporal logics over timed words.
More precisely, we consider Metric Interval Temporal Logic (MITL) \cite{alur1996mitl}, which offers a good balance between expressiveness and algorithmic properties. 
For example, the request-response specification above is expressed by the MITL formula~$ G_{\ge 0}(\texttt{req} \rightarrow F_{\le 10} \texttt{resp})$.

While the specifications classify infinite traces, the traces observed online and to be checked against the specification remain finite. 
Nevertheless, one can still return verdicts~\cite{bauer2006monitoring}: for example, \emph{every} infinite extension of a finite trace with some request that is not answered within 10 ms violates the request-response specification above.
Hence, violation of the specification is already witnessed by such a finite trace.
Dually, consider the specification ``system calibration is completed within 500 ms'', expressed by the formula~$F_{\le 500}\texttt{cc}$ with the proposition~$\texttt{cc}$ representing the completion of calibration. 
Every infinite extension of a finite trace on which the calibration is completed within 400 ms satisfies the specification.
Hence, satisfaction of the specification is already witnessed by such a finite trace.
However, there are also traces and specifications for which no verdict can currently be drawn, like in the situation where no calibration has been observed yet at current time of 350 ms.
As usual, we capture these three situations with the three verdicts $\top$ (satisfaction for every extension), $\bot$ (violation for every extension), and $\unknown$ (inconclusive).

Online monitoring and testing can be achieved by compiling the MITL specification into an equivalent timed Büchi automaton and then symbolically executing the automaton on the observed trace of the system \cite{bauer2006monitoring,GrosenKLZ22}.
However, this approach is correct only if the actions of the system can be observed immediately by the monitor or test harness, respectively.
In practice, there is usually a communication delay between the system and the monitor or testing harness. This delay is induced by various types of circuitry at their interfaces, like technical sensors, conversion between analog and digital signals, and communication networks forwarding signals to the monitor.
We follow the approach described in McGraw-Hill's Encyclopedia of Networking and Telecommunications~\cite{mcgraw01} where a communication delay consists of a constant part (latency) and varying part (jitter).
Here, we consider a monolithic system that is monitored or tested, i.e., there is a single communication channel from the system to the monitor respectively one from the system to the test harness and one from the test harness to the system.

Due to the delay, the system and the symbolic execution are no longer synchronized but deviate by a delay, for which only bounds, yet not exact values tend to be known.
But even then, one can still provide meaningful verdicts, see Fig.~\ref{fig:example1}: 
\begin{figure*}[tb]
    \centering
    \includegraphics[width=.85\linewidth]{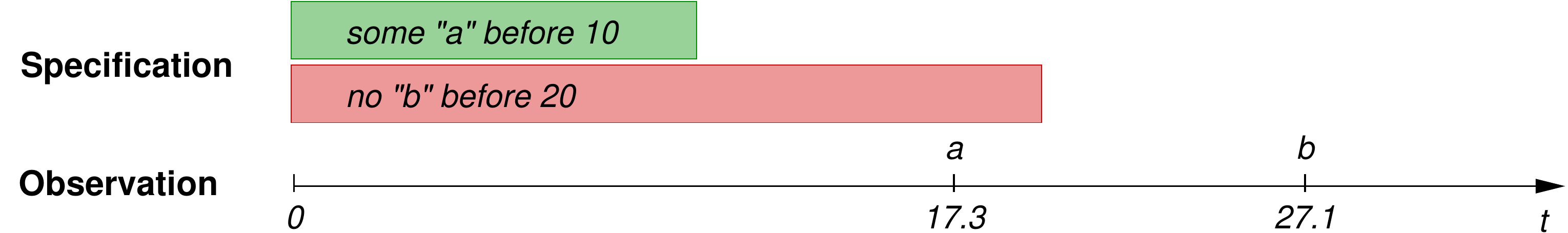}
    \caption{Monitoring under observation delay: at time $t=27.1$ we can conclusively decide that the MITL property $F_{[0,10]}a \land G_{[0,20]}\neg b$ is violated irrespective of the latency of the observation channel, provided the jitter is less than 0.2.}
    \label{fig:example1}
\end{figure*}
Consider monitoring of the specification $F_{\le 10} a \land G_{\le 20}\neg b$, which expresses that an $a$ occurs within % the first 
10 ms and no $b$ occurs within % the first 
20 ms, under delayed observations. The observed trace shows the first $a$ at 17.3 ms and the first $b$ at 27.1 ms. 
This observation  is only consistent with satisfaction of the constraint~$F_{\le 10} a$ if $a$'s observation delay exceeds 7.3 ms, while satisfaction of $G_{\le 20}\neg b$ requires a delay of at most 7.1 ms for $b$.
Thus, if the jitter is strictly smaller than 0.2 ms, the specification is definitely violated. Note that the verdict \myquot{violated} is true independently of the actual value of the unknown, parametric communication latency.

On the other hand, if the parametric latency is known to be in the range $[4.5,8]$ ms and the jitter is in $[0,0.3]$ ms, then we cannot give a definitive verdict: The $a$ may have occurred at 10 ms and then has been observed with 7 ms latency plus 0.3 ms jitter at 17.3 = 10 + 7 + 0.3 ms, and the $b$ may have occurred at 20.1 ms and then observed with the same latency (yet independent jitter) at 27.1 = 20.1 + 7 + 0 ms. In this case, the property would be satisfied. 
But the $a$ may also have occurred at 10.3 ms, violating the property, and still be observed with the same latency at 17.3 = 10.3 + 7 + 0 ms.
From the observations, we can nevertheless derive bounds on the parametric latency, as the property definitely is violated irrespective of the actual (unknown) value of the jitter whenever the actual latency is smaller than 7 ms or larger than 7.1 ms. It however cannot be guaranteed to be satisfied when the latency is in the range of $[7,7.1]$ ms, as satisfaction then depends on the exact value of the jitter, which is not detectable.
Thus, one can determine information beyond the verdicts~$\top$, $\bot$, and $\unknown$ in terms of bounds on the delay that imply definitive verdicts. 

\subsection{Our Contribution}

Based on previous work by Grosen et al.~\cite{GrosenKLZ22} on online monitoring of MITL specifications without delay via timed Büchi automata, we present a symbolic MITL monitoring algorithm and a symbolic MITL testing algorithm that provide exact verdicts under unknown delay consisting of parametric (i.e.\ unknown within bounds) latency and jitter. 
While an unknown delay is a timing parameter, our constructions avoid the semidecidability~\cite{DBLP:conf/stoc/AlurHV93} of analysis for parameterized timed automata, and instead uses only classical clock zones~\cite{DBLP:conf/ac/BengtssonY03}.

In addition, our approach has the advantage that it is even more informative than typical monitoring and testing algorithms, which only return a verdict in $\{\top,\bot,\unknown\}$. 
Recall the example specification~$F_{\le 10} a \land G_{\le 20}\neg b$ in the case where the jitter is constrained to [0,0.3] ms.
As argued above, this specification can, given this bound on the jitter, only be satisfied if $7 \le \ell \le 7.1$, where $\ell$ denotes the actual latency.
Our algorithms, for which we also provide a prototype implementation and experimental evaluation, compute such parametric constraints on the set of potential latencies under which the specification can be satisfied as well as on the set of potential latencies under which the specification can be violated. 

The implementation is built on top of the real-time model checking tool \textsc{Upp\-aal} \cite{DBLP:journals/sttt/LarsenPY97} using the difference-bounded matrix (DBM) data structure allowing for  representation of convex polytopes called zones.  Most importantly, the DBM data structure can be used for efficient implementation of various geometrical operations over zones needed for the symbolic analysis of timed automata, such as testing for emptiness, inclusion, equality, and computing projection and intersection of zones \cite{DBLP:conf/ac/BengtssonY03}. Our experiments show encouraging initial results on an industrial gear controller model from~\cite{DBLP:conf/tacas/LindahlPY98}.  

This article is an revised and extended version of an article published at IFM 2024~\cite{DBLP:conf/ifm/FranzleGLZ24}, which contains all proofs omitted in the conference version, a new section on testing under delay, and additional experiments.

\subsection{Related Work}
Our automata-based monitoring of finite traces against specifications over infinite words using the three verdicts~$\{\top,\bot,\unknown \}$ follows the seminal work of Bauer~et al.~\cite{bauer2006monitoring}, who presented monitoring algorithms for the qualitative-time linear temporal logic LTL and its metric-time variant Timed LTL.
Their algorithm for Timed LTL is based on clock regions~\cite{alur1994tba}, while we follow the approach of Grosen et al.~\cite{GrosenKLZ22} and leverage the performance advantages of clock zones~\cite{DBLP:conf/ac/BengtssonY03}, which account for roughly an order of magnitude of improvement in runtime (see, e.g.,~\cite{DBLP:conf/fct/LarsenPY95}).
Furthermore, Bauer et al.\ in \cite{bauer2006monitoring} translated Timed LTL into event-clock automata, which are less expressive than the timed Büchi automata (\tba) used both by Grosen et al.~\cite{GrosenKLZ22} and here.
More recently, the same approach has been used to monitor real-time properties under assumptions~\cite{cgltz24}.

As our algorithms work with \tba, we also support MITL specifications, as these can be compiled into \tba.
The monitoring problem for MITL under perfect observability, i.e.\ without delay, has been investigated before.
Baldor~et al.\ in \cite{baldor2013monitoring} showed how to construct a monitor for dense-time MITL formulas by constructing a tree of timed transducers.
Ho et al.\ split unbounded and bounded parts of MITL formulas for monitoring, using traditional LTL monitoring for the unbounded parts and permitting a simpler construction for the (finite-word) bounded parts~\cite{ho2014online}.
Bulychev et al.\ in \cite{DBLP:conf/rv/BulychevDLLLP12} apply a technique of rewriting a given Weighted Metric Temporal Logic (WMTL) formula during monitoring as part of performing statistical model checking. None of the above works makes use of the efficient difference-bounded matrix (DBM) data structure or extends to the setting of TBA that provides the basis of our approach. To this end, we note that as a specification formalism, TBA exceeds the expressive power of MITL, with the additional expressive power clearly being useful in certain application contexts (e.g., in the presence of counting properties). 

There is also a large body of work on monitoring with finite-word semantics.
Roşu et al.\ focussed on discrete-time finite-word MTL~\cite{rosu2005monitoring}, while Basin et al.\ proposed algorithms for monitoring real-time finite-word properties~\cite{basin2012algorithms} and elucidated the differences between different time models. Andr\'e et al. consider monitoring finite logs of parameterized timed and hybrid systems \cite{DBLP:journals/tcps/WagaAH22}.
Finally, Ulus et al.\ described algorithms for monitoring finite timed words against specifications given as timed regular expressions. Their monitoring procedures exploit unions of two-dimensional zones~\cite{ulus2014offline,ulus2016online}.

Common to the aforementioned technical developments is the assumption of perfect observability of the system being monitored, where the monitor can rely on always exact measurements of states and/or events of the system as well as of their occurrence times.
As this is an unrealistic assumption, the problem of monitoring trace properties under uncertain observation has been addressed before \cite{DonzeFM13,NickovicYamaguchi20,ViscontiBLN21,FFKK22,KallwiesLS22}, most notably based on Signal Temporal Logic (STL) and exploiting STL's quantitative semantics~\cite{MalerN04} that characterizes robustness against variation in state variables. 
These approaches are mostly orthogonal to ours, as they tend to address uncertainty in the state observed at a time instant rather than uncertainty in the timestamps associated to state observations. 
It would consequently be interesting to combine the two approaches, thus permitting both state uncertainty due to inexact measurements and time uncertainty due to inexact clocks and fluctuating communication latencies. 
It should also be noted that robust STL monitoring comes in diverse variants representing different error models and different levels of exactness in representing them. Early and efficient procedures for monitoring under state uncertainty implement the compositional real-valued robustness semantics \cite{DonzeFM13,NickovicYamaguchi20}. This semantics however underapproximates the factual robustness of the verdict against state shifts in the observed trace such that monitoring algorithms based on this compositional semantics are sound and computationally efficient, yet incomplete. 
Due to the safe approximation, they may yield inconclusive verdicts in actually determined situations. 
Complete and thus optimally informed STL monitoring under uncertainty, which guarantees a verdict whenever the property is determined, has only recently been investigated.  
Visconti et al.\ in \cite{ViscontiBLN21} developed sound and complete monitoring wrt.\ an interval model of state measurement error, where each single measurement features an independent displacement ranging over a bounded interval. Finkbeiner et al.\ in \cite{FFKK22,FFKK24} address a refined model distinguishing between a constant, yet unknown up to bounds, offset and a time-varying, interval-bounded noise, as suggested by the pertinent ISO norm~5725 on measurement accuracy (there called ``trueness'' and ``precision'' of a measurement). 
We here and previously in \cite{DBLP:conf/ifm/FranzleGLZ24} adopt the latter, more refined model of measurement error and transfer it into the time domain, thus implementing sound and complete monitoring for the case when timestamps are affected by a parametric (unknown, yet constant) observation latency plus a fluctuating jitter that differs between observations.
Closest to our approach is \cite{KohlH23}, which addresses a more confined model of observation delay comprising a fixed  known (non-parametric) latency plus a varying jitter. It also covers clock drift, which is an additional source of (relative) jitter that we have excluded to simplify the exposition.

Closely related to (passive) system monitoring is (active) testing, where pre-defined stimuli are supplied to the system under test (SUT) and the SUT's response is monitored in order to acquire a test verdict.
In this journal article, we expand on our previous work \cite{DBLP:conf/ifm/FranzleGLZ24} to also cover active testing, where the stimulus sequence, its timing, and the expected responses are specified together by \tba.
The work thus belongs to the large set of formal approaches to test-case generation and evaluation, which is a long-standing and  lively topic with enormous economic impact especially in the hardware domain, but also in the automotive and railway industries \cite{TeigeESSBHB21,HuangKP24}. 
Testing under timing uncertainties has been investigated from the perspective of distributed systems that exhibit asynchronous and complexly interleaved behavior due to variance in component timing \cite{DavidLLMN12,DammEA19,GraicsMMM25}. But asynchrony between the test harness and the SUT due to such timing variance in the communication infrastructure connecting the test harness and SUT has not hitherto been covered by formal approaches to testing, despite technical control of their latency and jitter having always been concerns in the design of hardware-in-the-loop testing facilities due to their fundamental impact on the feasibility of test conduct and evaluation \cite{Osorio2023}.

%%%%%%%%%%%%%%%%%%%%%%%%%%%%%%%%%%%%%%%%%%%%%%%%%%%%%%%%%%%%%%%%%%%%%%%%%%%
%%%%%%%%%%%%%%%%%%%%%%%%%%%%%%%%%%%%%%%%%%%%%%%%%%%%%%%%%%%%%%%%%%%%%%%%%%%
%%%%%%%%%%%%%%%%%%%%%%%%%%%%%%%%%%%%%%%%%%%%%%%%%%%%%%%%%%%%%%%%%%%%%%%%%%%
\section{Preliminaries}
\label{sec:prelim}

The set of natural numbers (excluding zero) is $\nats$, we define~$\nats_0 = \nats \cup \{0\}$, the set of rational numbers is $\rats$, the set of non-negative rational numbers  is $\nnrats$. 
the set of real numbers is $\reals$, and the set of non-negative real numbers  is $\nnreals$. 
The powerset of a set $S$ is denoted by $\pow{S}$.

\subsection{Timed Words}
A timed word over a finite alphabet~$\Sigma$ is a pair $\rho = (\sigma, \tau)$ where $\sigma$ is a word over $\Sigma$ and  $\tau$ is a sequence of non-decreasing non-negative real numbers of the same length as $\sigma$.
Timed words may be finite or infinite; in the latter case, we require $\limsup \tau = \infty$, i.e., time diverges. The set of finite timed words is denoted by $\TSigma^*$ and the set of infinite timed words by $\TSigma^\omega$.
We also represent a timed word as a sequence of pairs $(\sigma_1,\tau_1), (\sigma_2,\tau_2), \ldots$.  If $\rho=(\sigma_1,\tau_1), (\sigma_2,\tau_2), \ldots, (\sigma_n,\tau_n)$ is a finite timed word, we denote by $\tau(\rho)$ the total time duration of $\rho$, i.e., $\tau_n$ (with the convention that the duration of the empty word is \(0\)).

If $\rho_1=(\sigma^1_1,\tau^1_1),\ldots,(\sigma^1_n,\tau^1_n)$ is a finite timed word, $\rho_2=(\sigma^2_1,\tau^2_1),(\sigma^2_2,\tau^2_2),\ldots$ is a finite or infinite timed word, and $t \in \nnrats$ then the timed word concatenation $\rho_1 \cdot_t \rho_2$ is defined if and only if $\tau(\rho_1) \le t$. Then, $\rho_1 \cdot_t \rho_2 = (\sigma_1,\tau_1),(\sigma_2,\tau_2),\ldots$ such that 
\[\sigma_i = \begin{cases}
    \sigma^1_i & \text{if and only if } i\le n\\
    \sigma^2_{i-n} & \text{else}
\end{cases}\quad\text{and}\quad
\tau_i = \begin{cases}
    \tau^1_i & \text{if and only if } i\le n\\
    \tau^2_{i-n}+t & \text{else}.
\end{cases}\]

In the following, we often need to \myquot{shift} a timed word in the sense that we add or subtract a $t \in\nnreals$ to each time point of $\rho$.
In the latter case, we need to be careful to ensure that the time points in the shifted word are still nonnegative.
For the sake of readability, let us introduce some notation for these operations. 
Given a (finite or infinite) timed word~$\rho = (\sigma_1, \tau_1),(\sigma_2, \tau_2),\ldots$ and such a $t \in \nnreals$,
\begin{itemize}
    \item let $\rho+t$ denote the timed word~$(\sigma_1, \tau_1+t),(\sigma_2, \tau_2+t),\ldots$, and
    \item if $t \le \tau_1$, let $\rho-t$ denote the timed word~$(\sigma_1, \tau_1-t),(\sigma_2, \tau_2-t),\ldots$. This is well-defined, as we require that $\tau_1$ (and therefore each $\tau_i$) is at least $t$, so we never obtain negative time points in $\rho-t$.
\end{itemize}

The following properties follow directly from the definition of timed concatenation and will be applied in the proofs below.

\begin{remark}
\label{remark_concat}
Let $\rho = (\sigma_1, \tau_1),\ldots,(\sigma_n, \tau_n) \in \TSigma^*$, let $\mu = (\sigma_1', \tau_1'),(\sigma_2', \tau_2'),\ldots  \in \TSigma^\omega$, and let $\tau(\rho) \le t$.
\begin{enumerate}
    \item \label{remark_concat_case1}
    Let $t' \in \nnreals$ be such that $t -t' \ge \tau(\rho)$. Then
    \[
    \rho \cdot_t \mu = \rho \cdot_{t-t'} (\mu+t').
    \]
    
    \item \label{remark_concat_case2}
    Let $0 \le t' \le t$, let $n' \in \{0,1, \ldots, n\}$ be such that $\tau_{n'} \le t' \le \tau_{n'+1}$ (where we use $\tau_0 = -\infty$ to allow $n'=0$ and $\tau_{n+1} = \infty$ to allow $n' = n$) and define $\rho_1 = (\sigma_1, \tau_1),\ldots,(\sigma_{n'}, \tau_{n'})$ as well as $\rho_2 = (\sigma_{n'+1}, \tau_{n'+1}),\ldots,(\sigma_n, \tau_n)$. Then
    \[
    \rho \cdot_t \mu = \rho_1 \cdot_{t'} ( (\rho_2-t') \cdot_{t-t'}\mu ).
    \]
    
    \item \label{remark_concat_case3}
    Let $t' \ge 0$ be such that $\tau(\rho) \le t-t'$, let $n' \ge 0$ be such that $\tau_{n'}' \le t' \le \tau_{n'+1}'$ (where we use $\tau_0 = -\infty$ to allow $n'=0$), and define $\rho' = (\sigma_1', \tau_1'),\ldots,(\sigma_{n'}', \tau_{n'}')$ as well as $\mu' = (\sigma_{n'+1}', \tau_{n'+1}'),(\sigma_{n'+2}', \tau_{n'+2}'),\ldots$. Then
    \[
    \rho \cdot_{t-t'} \mu = (\rho \cdot_{t - t'} \rho') \cdot_t (\mu' - t').
    \]
    Note that $n'$ is well-defined due to time-divergence of $\mu$.
\end{enumerate}
\end{remark}

\subsection{Timed Automata}

A timed Büchi automaton~(\tba)~$\aut = (Q, Q_0, \Sigma, C, \Delta, \mathcal{F})$ consists of a finite alphabet~$\Sigma$, a finite set~$Q$ of locations, a set~$Q_0 \subseteq Q$ of initial locations, a finite set~$C$ of clocks, a finite set~$\Delta \subseteq Q \times Q \times \Sigma \times \pow{C} \times G(C)$ of transitions with $G(C)$ being the set of clock constraints over $C$, and a set~$\mathcal{F} \subseteq Q$ of accepting locations.
A transition~$(q,q',a,\lambda, g)$ is an edge from $q$ to $q'$ on input symbol $a$, where $\lambda$ is the set of clocks to reset and $g$ is a clock constraint over $C$.
A clock constraint is a conjunction of atomic constraints of the form $x \sim n$, where $x$ is a clock, $n \in \nats_0$, and $\sim\ \in \{<, \leq, =, \geq, >\}$.
A state of $\aut$ is a pair $(q,v)$ where $q$ is a location in $Q$ and $v \colon C \rightarrow \nnreals$ is a valuation mapping clocks to their values.
For any $d \in \nnreals$, $v+d$ is the valuation~$x \mapsto v(x) + d$.

A run of $\aut$ from a state $(q_0, v_0)$  over a timed word $(\sigma_1, \tau_1)  (\sigma_2, \tau_2)\cdots$ is a sequence of steps
$(q_0, v_0) \transition{1} (q_1,v_1) \transition{2} (q_2,v_2) \transition{3} \cdots $
where for all $i \geq 1$ there is a transition $(q_{i-1},q_{i},\sigma_{i},\lambda_i,g_i)$ such that $v_{i}(x) = 0$ for all $x$ in $\lambda_i$ and $v_{i}(x) = v_{i-1}(x) + (\tau_i - \tau_{i-1})$ otherwise, and $g_i$ is satisfied by the valuation $v_{i-1}+(\tau_{i} - \tau_{i-1})$.
Here, we use $\tau_0 = 0$.
Given a run $r$, we denote the set of locations visited infinitely many times by $r$ as $\infset{r}$.
A run $r$ of $\aut$ is accepting if $\infset{r} \cap \mathcal{F} \neq \emptyset$.
The language of $\aut$ from a starting state $(q,v)$, denoted $\lang{\aut,(q,v)}$, is the set of all infinite timed words with an accepting run in $\mathcal{A}$ starting from $(q,v)$. 
We define the language of $\aut$, written $\lang{\aut}$, to be $\bigcup_q \lang{\aut,(q,v_0)}$, where $q$ ranges over $Q_0$ and where $v_0(x) = 0$ for all $x \in C$.

\subsection{Logic}
We use Metric Interval Temporal Logic (\mitl) to express properties to be monitored; these are subsequently translated into equivalent \tba which we use in our monitoring algorithm.
The syntax of \mitl formulas over a finite alphabet~$\Sigma$ is defined as
\[
\varphi ::= p \mid  \neg \varphi \mid  \varphi \vee \varphi \mid  X_I \varphi \mid  \varphi\ U_{I} \varphi\
\]
where $p \in \Sigma$ and $I$ ranges over non-singular intervals over $\nnreals$ with endpoints in ${\nats_0 \cup \{\infty\}}$.
We  write $\shortsim\,n$ for $I=\{d\in\nnreals\mid d\shortsim n\}$ for $\shortsim \in \{<,\leq,\geq,>\}$ and~$n \in \nats$.
 We also define the standard syntactic sugar:
$\true = p \vee \neg p$,
$\varphi \wedge \psi = \neg (\neg \varphi \vee \neg \psi)$,
$F_{I} \varphi = \true\ U_{I} \varphi$, 
and $G_{I} \varphi = \neg F_{I} \neg \varphi$.

The satisfaction relation~$\rho, i \models \varphi$ is defined for infinite timed words~$\rho = (\sigma_1,\tau_1),(\sigma_2,\tau_2),\ldots$, positions~$i \geq 1$, and an \mitl formula~$\varphi$:
\begin{itemize}
    \item $\rho,i \models p $  if $p = \sigma_i$.
    \item $\rho,i \models \neg \varphi$ if $\rho,i \not\models \varphi$.
    \item $\rho,i \models \varphi \vee \psi$  if $\rho,i \models \varphi$ or $\rho,i \models \psi$.
    \item $\rho,i \models X_I \varphi$ if $\rho,(i+1) \models \varphi$ and $\tau_{i+1} - \tau_i \in I$.
    \item $\rho,i \models \varphi\ U_{I} \psi$ if there exists $k \geq i$ s.t.\ $\rho,k \models \psi$, $\tau_k - \tau_{i} \in I$, and $\rho,j \models \varphi$ for all $i \leq j < k$. 
\end{itemize}
  We write $\rho \models \varphi$ whenever $\rho, 1 \models \varphi$, and say that $\rho$ satisfies $\varphi$.
The language $\lang{\varphi}$ of an \mitl formula~$\varphi$ is the set of all $\rho \in \TSigma^\omega$ such that~$\rho\models\varphi$.

\begin{theorem}[\cite{alur1996mitl,brihaye2017mightyl}]
\label{thm:mtltba}
For each \mitl formula $\varphi$ there exists a \tba $\aut$ with $\lang{\varphi} = \lang{\aut}$.
\end{theorem}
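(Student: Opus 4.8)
The plan is to prove the theorem by structural induction on $\varphi$, building the equivalent \tba compositionally. Since \tba are \emph{not} closed under complementation, the induction cannot treat $\neg$ directly, so I would first rewrite $\varphi$ into negation normal form, with negations pushed down to the atomic propositions. This is possible because the \mitl modalities come in dual pairs: De~Morgan handles $\neg(\psi_1\vee\psi_2)$ and $\neg\neg\psi$; $\neg(\psi_1\,U_I\,\psi_2)$ is rewritten using a dual (``release'') modality $\psi_1\,\tilde U_I\,\psi_2$; and $\neg X_I\psi$ becomes $X_{[0,\infty)}(\neg\psi)\vee\bigvee_j X_{J_j}\true$, where the $J_j$ are the at most two maximal non-singular subintervals of $[0,\infty)\setminus I$ with endpoints in $\nats_0\cup\{\infty\}$. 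The key point is that the complement of a non-singular interval with integer endpoints inside $[0,\infty)$ is again a finite union of such intervals, so the normal form stays within \mitl, and non-singularity is preserved (half-open integer intervals of positive length are non-singular, and no singleton can arise as a complement component).

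The base case is immediate: for $p$ (respectively $\neg p$) take a two-location clockless \tba that on reading the first letter moves to an accepting sink iff the letter is (is not) $p$ and then loops on $\Sigma$. For $\vee$ and $\wedge$ I would invoke the standard closure of \tba under union (disjoint union) and intersection (product on the timed word with a generalised-Büchi acceptance condition that is subsequently degeneralised); these constructions are routine and, unlike complementation, introduce no expressiveness obstacle.

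For the temporal modalities it is cleanest to argue in the ``temporal tester'' style: introduce a fresh proposition for every temporal subformula $\chi$, work over the extended alphabet $\Sigma\times\pow{\Psi}$ with $\Psi$ the set of temporal subformulas, build for each $\chi$ a small \tba enforcing ``the proposition of $\chi$ holds at position $i$ iff $\chi$ holds at $i$'', form the product of all these testers together with the one-shot constraint that the proposition of $\varphi$ holds at position~$1$, and finally project the alphabet back onto $\Sigma$ (\tba are closed under letter-projection). Correctness then follows by a straightforward induction: an accepting run of the product corresponds exactly to annotating the timed word with the truth values of all temporal subformulas. The tester for $X_I\psi$ uses a single clock reset at every step to measure $\tau_{i+1}-\tau_i$ and test membership in $I$; the tester for an unbounded until $\psi_1\,U_{\ge a}\,\psi_2$ is essentially the untimed-until tester guarded by one clock, with a Büchi condition forbidding a permanently pending obligation.

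The main obstacle is the tester for a bounded until $\psi_1\,U_I\,\psi_2$ with $I$ bounded (and, dually, bounded release): at a given instant several obligations ``$\psi_2$ within the window'' with distinct deadlines may be simultaneously pending, and a naive automaton would need unboundedly many clocks to track them. This is precisely where the non-singularity of $I$ is used: obligations whose generation times lie within a window of length $\sup I-\inf I$ can be merged, so a bounded clock pool, of size proportional to the integer length of $I$ and cyclically reused, suffices to maintain the earliest and latest relevant deadlines, and a pigeonhole argument shows this is enough to evaluate the modality correctly at every position. I would develop this explicitly first for the sliding-window testers of $F_I$ and $G_I$ with bounded $I$ (which already contain the whole difficulty) and then reduce bounded $U_I$ to a product of a bounded-$F_I$ tester, an unbounded-until/release tester, and an automaton checking that $\psi_1$ holds throughout the window. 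Assembling the pieces, with a final generalised-to-ordinary Büchi degeneralisation, then yields the claimed \tba.
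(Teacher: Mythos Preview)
The paper does not prove this theorem at all: it is stated as a cited result from \cite{alur1996mitl,brihaye2017mightyl} and is merely illustrated by the example automaton in Fig.~\ref{fig:aut-example}. So there is no ``paper's own proof'' to compare against; your proposal is a standalone proof sketch of a result the paper imports from the literature.

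That said, your sketch is broadly faithful to the classical constructions in those references (negation normal form, compositional temporal testers, projection, and the crucial clock-reuse argument for bounded $U_I$ exploiting non-singularity). One technical wrinkle: your treatment of $\neg X_I\psi$ via complementing the interval $I$ inside $[0,\infty)$ is not quite right, because the complement of a non-singular integer-endpoint interval need not avoid singletons --- e.g.\ the complement of $(0,b)$ in $[0,\infty)$ contains $\{0\}$, which is singular and hence outside \mitl. The standard fix is to introduce a dual ``weak next'' operator~$\bar X_I\psi$ (holding iff $\tau_{i+1}-\tau_i\in I$ implies $\psi$ at $i{+}1$), so that $\neg X_I\psi \equiv \bar X_I\neg\psi$; the tester for $\bar X_I$ is then built directly without any interval complementation. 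With that adjustment your outline is sound and matches the cited constructions.
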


Fig.~\ref{fig:aut-example} illustrates Theorem~\ref{thm:mtltba} by providing \tba\ for the formula~$F_{[0,10]}a \land G_{[0,20]}\neg b$ from the introduction and its negation.
\begin{figure}[h]
    \centering
    \begin{tikzpicture} [node distance = 1.7cm, thick]
    \node (q0)     [state, initial text={}]          {$q_0$};
    \node (i)   at (0,.85) {};
    \node (q1)     [state, right = of q0]    {$q_1$};
    \node (phi)    [state, right = of q1]    {$\varphi$};
    \node (notphi) [state, left = of q0]    {$\neg\varphi$};
    
    \draw[->, thick] (i) edge (q0);
     \draw[->, thick] (q0) edge node[above] {$a$} node[below]{$x \leq 10$} (q1);
     \draw[->, thick] (q1) edge node[above] {$a, b$} node[below]{$x > 20$} (phi);

     \draw[->, thick] (q0) edge[bend right=30] node[above]{$b$} (notphi);
     \draw[->, thick] (q0) edge[] node[above]{$a$} node[below]{$x > 10$} (notphi);

     \draw[->, thick] (q1) edge[bend left=30] node[above]{$b$}  node[below]{$x\leq20$} (notphi);
    
    \draw[->, thick] (phi) edge [loop below] node[] {$a,b$} (phi);
    \draw[->, thick] (q1) edge [loop below] node[above right=0.1cm] {$a$} node[right] {$x \leq 20$} (q1);
    \draw[->, thick] (notphi) edge[loop below] node[]{$a,b$} (notphi);

\end{tikzpicture}
    \caption{An automaton for the language of the property~$\varphi = F_{[0,10]}a \land G_{[0,20]}\neg b$ and its negation: If location~$\varphi$ is accepting then it accepts $L(\varphi)$, if location~$\neg\varphi$ is accepting then it accepts $L(\neg \varphi)$.} 
    \label{fig:aut-example}
\end{figure}

\section{Monitoring under Delayed Observation}
\label{sec:MonitoringDelay}

According to McGraw-Hill's Encyclopedia of Networking and Telecommunications~\cite{mcgraw01}, a communication delay consists of a constant part (latency) and varying part (jitter). We describe the delay as a pair $(\lat, \jit) \in \nnreals^2$ where $\lat$ is the constant latency for all signals and $\jit$ is the bound on the jitter. Thus, all signals from the system are delayed within $[\lat, \lat + \jit]$ before they arrive at the monitor.

In the simplest case, our obligation is to monitor violation of an MITL specification $\varphi$ by a system while observing the events through a channel $\chan$ featuring a constant, yet unknown (up to a given, but maybe trivial, lower bound $l \in \nnreals$ and upper bound $u \in \nnreals$) transportation latency $\lat \in [\ell,u]$ and a varying jitter bounded by $\jit \in \nnreals$. 
Fig.~\ref{fig:example1} shows an example of a property and an observation that conclusively violates the specification at time 27.1, even if the channel latency $\lat \in[0,\infty[$ is unknown, as long as the 
%effectively monitorable if only the 
jitter is bounded by $0.2$.

Thus, we need to distinguish between observations (the timed word corresponding to the events as they are observed by the monitoring device, subject to delay) and the possible ground-truths, as they may have been emitted by the monitored system. 
We begin by formalizing the concept of observation, where the occurrence of observed events is constrained by a set~$\delays$ capturing known bounds on the delay.
Obviously, under latency~$\lat$ (and arbitrary jitter), the first observation can only be made after at least $\lat$ units of time.

\begin{definition}
A delay set~$\delays$ is a nonempty subset of $\nnreals^2$ containing pairs of latencies and jitters.   
A $\delays$-observation, i.e.\ an observation that can in principle be made under delay in $\delays$, is a finite timed word~$\rho^*=(\sigma^*_1,\tau^*_1),\ldots,(\sigma^*_m,\tau^*_m)$ with $\tau_1^* \ge \lat$ for some $(\lat,\jit) \in\delays$. 
\end{definition}

As the ground-truth occurrence times of events in the system cannot be determined exactly from their delayed copies that the monitor receives through the communication channel, we have to consider all ground-truth timed words that the particular observation is consistent with, as follows.\footnote{\label{footnote:overtake}Note that we simplify our definitions by assuming that jitter does not change the order of observations. Under the additional assumption that only a (uniformly) bounded number of events can be generated by the system in each unit of time, it is possible to take \myquot{overtaking} of events into account, by looking at all consistent permutations. However, this would lead to a severe overhead in the implementation.
}

\begin{definition}[Consistency]\label{def:Consistency}
   Let $\rho^*=(\sigma^*_1,\tau^*_1),\ldots,(\sigma^*_m,\tau^*_m)$ be a $\{(\lat,\jit)\}$-observation and let $\rho=(\sigma_1,\tau_1),\ldots,(\sigma_n,\tau_n)$ be  a finite timed  word. 
   We say that $\rho$ is \emph{consistent} with $\rho^*$ at observation time $t \in \nnreals$ under latency $\lat$ and jitter $\jit$ if and only if
    \begin{enumerate}
       \item $\tau_n \le t$ and $\tau^*_m \le t$,
       \item $n \ge m$, and $\sigma_i = \sigma^*_i$ and $\tau_i^* -\tau_i \in [\lat,\lat+\jit]$  for all $i \in \{1,\ldots,m\}$,\footnote{Note that the conference version~\cite{DBLP:conf/ifm/FranzleGLZ24} contained a typo in this condition.} and
        \item if $ n > m $ then $ \tau_{m+1} \ge t - (\lat + \jit )$.
   \end{enumerate}
   We denote the set of timed words $\rho$ that are consistent with a $\{(\lat,\jit)\}$-ob\-ser\-vation~$\rho^*$ at observation time $t$ under latency~$\lat$ and jitter $\jit$ by $\GT_{\lat,\jit}(\rho^*,t)$.
   Then, we define $\GT_\delays(\rho^*,t)=\bigcup_{(\lat,\jit)\in\delays}\GT_{\lat,\jit}(\rho^*,t)$.
\end{definition}

$\GT_\delays(\rho^*,t)$ thus collects the possible ground-truths that are consistent with the observation $\rho^*$ when the time elapsed since the system has started is $t$, and the delay $(\lat,\jit)$ is within the set $\delays$. 

\begin{example}
\label{example_gtdef}
Fig.~\ref{fig_gtdef} shows a $\{(\lat,\jit)\}$-observation and a consistent ground-truth and illustrates how the delay shifts the timestamps of the events. The length of $\rho$ is $n=9$ and the length of $\rho^*$ is $m=4$. Recall that $t$ is the time of observation.

\begin{figure}
\centering
    \begin{tikzpicture}[thick,xscale=1.25]
    \def\yunit{.1}

    %% timelines
    \draw (0,0) -- (10,0);
    \draw (0,-2) -- (10,-2);

    %% word names and labels
    \node[anchor = east] at (-.2,1) {time};
    \node[anchor = east] at (-.2,0) {$\rho^*$};    
    \node[anchor = east] at (-.2,-2) {$\rho$};

    %% marked timepoints
    %%%%%%%%%%%%%%%%%%%%%%%%%%%
    
    %% 0
    \draw[thin, densely dotted, red] (0,.75) -- (0,-2);
    \node[] at (0,1) {$0$};

    % %% latency (which is 1)
    % \node[] at (1,1) {$\lat$};
    % \draw[thin, densely dotted, red] (1,.65) -- (1,-2);

    %% t minus (lat + jit) (jit is .5)
    \node[] at (8.5,1) {$t - (\lat + \jit)$}; %% so jit is .5
    \draw[thin, densely dotted, red] (8.5,.75) -- (8.5,-2);

    %% t    
    \node[] at (10,1) {$t$};    
    \draw[thin, densely dotted, red] (10,.75) -- (10,-2);

    %% marked timepoints on observation side (dashed line, label, timepoint, marker on timeline. dashed line first so that label can go over line)
    \draw[thin, densely dotted, red] (1.5,.75) -- (1.5,-2.75);
    \node[anchor = south,fill=white, inner sep=0] at (1.5,2.5*\yunit) {a};
    \node[] at (1.5,1) {$\tau_1^*$};
    \draw (1.5,\yunit) -- (1.5,-\yunit);

    \draw[thin, densely dotted, red] (4.25,.75) -- (4.25,-2.75);
    \node[anchor = south,fill=white, inner sep=0] at (4.25,2.5*\yunit) {a};
    \node[] at (4.25,1) {$\tau_2^*$};
    \draw (4.25,\yunit) -- (4.25,-\yunit);

    \draw[thin, densely dotted, red] (6.5,.75) -- (6.5,-2.75);
    \node[anchor = south,fill=white, inner sep=0] at (6.5,2.5*\yunit) {b};
    \node[] at (6.5,1) {$\tau_3^*$};
    \draw (6.5,\yunit) -- (6.5,-\yunit);

    \draw[thin, densely dotted, red] (7.25,.75) -- (7.25,-3.5);
    \node[anchor = south,fill=white, inner sep=0] at (7.25,2.5*\yunit) {a};
    \node[] at (7.25,1) {$\tau_4^*$};
    \draw (7.25,\yunit) -- (7.25,-\yunit);
    
    %% marked timepoints on groundtruth side (dashed line, label, timepoint, marker on timeline. dashed line first so that label can go over line)
    \draw[thin, densely dotted, red] (0.5,.75) -- (0.5,-2.75);
    \node[anchor = south,fill=white, inner sep=0] at (0.5,-2-5*\yunit) {a};
    \node[] at (0.5,1) {$\tau_1$};
    \draw (0.5,-2+\yunit) -- (0.5,-2-\yunit);
    
    \draw[thin, densely dotted, red] (3,.75) -- (3,-2.75);
    \node[anchor = south,fill=white, inner sep=0] at (3,-2-5*\yunit) {a};
    \node[] at (3,1) {$\tau_2$};
    \draw (3,-2+\yunit) -- (3,-2-\yunit);
    
    \draw[thin, densely dotted, red] (5.5,.75) -- (5.5,-2.75);
    \node[anchor = south,fill=white, inner sep=0] at (5.5,-2-5*\yunit) {b};
    \node[] at (5.5,1) {$\tau_3$};
    \draw (5.5,-2+\yunit) -- (5.5,-2-\yunit);
    
    \draw[thin, densely dotted, red] (5.75,1.1) -- (5.75,-3.5);
    \node[anchor = south,fill=white, inner sep=0] at (5.75,-2-5*\yunit) {a};
    \node[] at (5.75,1.25) {$\tau_4$};
    \draw (5.75,-2+\yunit) -- (5.75,-2-\yunit);

    %% diagonal arrows connecting events and observations
    \draw[loosely dashed,stealth-, shorten >=3pt, shorten <=3pt] (1.5,0) -- (.5,-2);
    \draw[loosely dashed,stealth-, shorten >=3pt, shorten <=3pt] (4.25,0) -- (3,-2);
    \draw[loosely dashed,stealth-, shorten >=3pt, shorten <=3pt] (6.5,0) -- (5.5,-2);
    \draw[loosely dashed,stealth-, shorten >=3pt, shorten <=3pt] (7.25,0) -- (5.75,-2);

    %% underbraces for delays
    \draw [<->, >=stealth] (.5,-2.75) -- (2,-2.75)
    node [black,midway,yshift=-5pt] {$\lat + \jit$};

    \draw [<->, >=stealth] (3,-2.75) -- (4.5,-2.75)
    node [black,midway,yshift=-5pt] {$\lat + \jit$};

    \draw [<->, >=stealth] (5.5,-2.75) -- (7,-2.75)
    node [black,midway,yshift=-5pt] {$\lat + \jit$};

    \draw [<->, >=stealth] (5.75,-3.5) -- (7.25,-3.5)
    node [black,midway,yshift=-5pt] {$\lat + \jit$};

    %% additional events, not yet observed (label, marker on timeline)
    \node[anchor = north] at (8.5,-2-\yunit) {a};
    \draw (8.5,-2+\yunit) -- (8.5,-2-\yunit);
    
    \node[anchor = north] at (8.75,-2-\yunit) {a};
    \draw (8.75,-2+\yunit) -- (8.75,-2-\yunit);
    
    \node[anchor = north] at (9,-2-\yunit) {b};
    \draw (9,-2+\yunit) -- (9,-2-\yunit);
    
    \node[anchor = north] at (9.5,-2-\yunit) {b};
    \draw (9.5,-2+\yunit) -- (9.5,-2-\yunit);
    
    \node[anchor = north] at (9.75,-2-\yunit) {b};
    \draw (9.75,-2+\yunit) -- (9.75,-2-\yunit);

    \end{tikzpicture}
    \caption{A $\{(\lat,\jit)\}$-observation~$\rho^*$ and a consistent ground-truth~$\rho$.}
    \label{fig_gtdef}
\end{figure}

In particular, notice the following:
\begin{itemize}
    \item No event can occur in the observation~$\rho^*$ with a timestamp smaller than $\lat$, as it takes at least $\lat$ units of time for an event to be send from the system through the communication channel to the monitor. Obviously, at the system side (i.e., in the ground-truth~$\rho$) events can happen at any timestamp, also before $\lat$ (e.g., the first $a$).

    \item The difference~$\tau_i^*-\tau_i$ for $i \le 4$ (i.e., the difference between the time the event is observed and the time the event was emitted) must be in the interval~$[\lat,\lat+\jit]$.

    \item The time elapsed between the $b$ and the last $a$ in the observation~$\rho^*$ is larger than the time elapsed between the corresponding events in the ground-truth~$\rho$. This means the jitter for the $a$ is larger than the jitter for the $b$.

    \item The last five events in $\rho$ have not yet been observed in $\rho^*$. Such events can only have timestamps in the interval~$[t-(\lat+\jit), t]$, as all earlier events must necessarily have been observed. Said differently, there cannot be any events between timestamp~$\tau_4$ (corresponding to the last observed event in $\rho^*$ with timestamp~$\tau_4^*$) and timestamp~$t-(\lat+\jit)$, as any such event would have arrived at the monitor, even under the maximal possible delay of $\lat+\jit$.

    However, there can be an arbitrary number of events in the ground-truth~$\rho$ between timestamps~$t-(\lat+\jit)$ and $t$.
\end{itemize}
\end{example}

\begin{remark}
\label{remark_gtnonempty}
Note that $\GT_\delays(\rho^*,t)$ is always nonempty, if $\rho^*$ is a $\delays$-observation and $t \ge \tau(\rho^*)$, as it contains, e.g., $\rho^* - \lat$ for a $(\lat,\jit)\in\delays$ such that $\rho^*$ is a $(\lat,\jit)$-observation.
\end{remark}

The following property about consistent words will be useful in the proofs below.
Here, we use an extension relation over observations with time points:
Let $\rho=(\sigma_1,\tau_1),\ldots,(\sigma_n,\tau_n)$ and $\rho'=(\sigma'_1,\tau'_1),\ldots,(\sigma'_{n'},\tau'_{n'})$ be two finite timed words. Also let $t, t' \in \nnreals$ with $\tau(\rho)\le t$ and $\tau(\rho')\le t'$.  
Then, we define $(\rho,t)\sqsubseteq (\rho',t')$, if $n\leq n'$,  $\sigma_i=\sigma'_i $ and $\tau_i=\tau'_i$ for all $i\leq n$, and either $n = n'$ and $t\leq t'$ or $n< n'$ and $t\leq \tau'_{n+1}$.

\begin{lemma}
\label{lemma_GT}
Let $(\rho^*_1,t_1) \sqsubseteq (\rho_2^*, t_2)$ with $\tau(\rho_1^*) \le t_1$ and $\tau(\rho_2^*) \le t_2$, let $\rho_2 = (\sigma_1, \tau_1),\ldots,(\sigma_n,\tau_n) \in \GT_{\lat,\jit}(\rho_2^*, t_2)$, let $n' \in \{0,1,\ldots,n\}$ be such that $\tau_{n'} \le t_1 \le \tau_{n'+1}$ (where we use $\tau_0 = -\infty$ to allow $n'=0$ and $\tau_{n+1} = \infty$ to allow $n' = n$), and define $\rho_1 = (\sigma_1, \tau_1),\ldots,(\sigma_{n'},\tau_{n'})$. Then $\rho_1 \in \GT_{\lat,\jit}(\rho_1^*, t_1)$.
\end{lemma}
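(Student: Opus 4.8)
The plan is to verify directly that $\rho_1$ satisfies the three conditions of Definition~\ref{def:Consistency} with respect to $\rho_1^*$ at observation time $t_1$ under latency $\lat$ and jitter $\jit$, deriving each one from the corresponding condition for $\rho_2$ with respect to $\rho_2^*$ at time $t_2$ together with the definition of $\sqsubseteq$. I would write $\rho_1^* = (\sigma_1^*,\tau_1^*),\ldots,(\sigma_{m_1}^*,\tau_{m_1}^*)$ and $\rho_2^* = (\sigma_1^*,\tau_1^*),\ldots,(\sigma_{m_2}^*,\tau_{m_2}^*)$; from $(\rho_1^*,t_1) \sqsubseteq (\rho_2^*, t_2)$ we get $m_1 \le m_2$, agreement of the first $m_1$ letters and timestamps of the two words (justifying the reuse of the symbols $\sigma_i^*,\tau_i^*$ for $i \le m_1$), and moreover that either $m_1 = m_2$ and $t_1 \le t_2$, or $m_1 < m_2$ and $t_1 \le \tau_{m_1+1}^*$. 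Since $\rho_2 \in \GT_{\lat,\jit}(\rho_2^*,t_2)$, the word $\rho_2^*$ is a $\{(\lat,\jit)\}$-observation, so either $m_1=0$ (and $\rho_1^*$ is the empty word, trivially a $\{(\lat,\jit)\}$-observation) or $\tau_1^* \ge \lat$; either way $\rho_1^*$ is a $\{(\lat,\jit)\}$-observation and $\GT_{\lat,\jit}(\rho_1^*,t_1)$ is defined.

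For Condition~1, $\tau_{n'} \le t_1$ holds by the choice of $n'$ (and, when $n'=0$, because $\tau(\rho_1)=0\le t_1$), while $\tau_{m_1}^* = \tau(\rho_1^*) \le t_1$ is part of the hypothesis. For Condition~2, fix $i \in \{1,\ldots,m_1\}$; since $i \le m_1 \le m_2$, Condition~2 for $\rho_2$ gives $\sigma_i = \sigma_i^*$ and $\tau_i^* - \tau_i \in [\lat,\lat+\jit]$, which are precisely the required statements as the $i$-th entries of $\rho_1^*$ and $\rho_2^*$ coincide. It remains to check $n' \ge m_1$: from $\tau_{m_1}^* - \tau_{m_1} \ge \lat \ge 0$ we obtain $\tau_{m_1} \le \tau_{m_1}^* \le t_1$, so reading the defining condition on $n'$ as selecting the largest admissible index gives $n' \ge m_1$.

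For Condition~3, assume $n' > m_1$, so the event $(\sigma_{m_1+1},\tau_{m_1+1})$ occurs in $\rho_1$, hence in $\rho_2$; the goal is $\tau_{m_1+1} \ge t_1 - (\lat+\jit)$. If $m_1 < m_2$, then $m_1+1 \le m_2$, so Condition~2 for $\rho_2$ yields $\tau_{m_1+1}^* - \tau_{m_1+1} \le \lat+\jit$, and combining with $t_1 \le \tau_{m_1+1}^*$ gives $\tau_{m_1+1} \ge \tau_{m_1+1}^* - (\lat+\jit) \ge t_1 - (\lat+\jit)$. If instead $m_1 = m_2$, then $n \ge n' > m_1 = m_2$, so $n > m_2$, and Condition~3 for $\rho_2$ gives $\tau_{m_2+1} \ge t_2 - (\lat+\jit)$; since $m_1 = m_2$ and $t_1 \le t_2$, this is exactly $\tau_{m_1+1} \ge t_1 - (\lat+\jit)$. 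In both cases Condition~3 holds, which finishes the verification.

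The argument is essentially bookkeeping, and I do not expect a genuine obstacle. The two points that need a little care are the index bound $n' \ge m_1$ in Condition~2 — which uses non-negativity of the latency and the convention that $n'$ is chosen as large as possible, so the degenerate tie $\tau_{m_1} = t_1$ (only possible when $\lat = 0$) causes no trouble — and keeping the cases $m_1 < m_2$ and $m_1 = m_2$ separate when proving Condition~3, since the bound on $\tau_{m_1+1}$ is obtained from Condition~2 of $\rho_2$ in the first case and from Condition~3 of $\rho_2$ in the second.
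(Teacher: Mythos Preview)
Your proof is correct and follows essentially the same route as the paper's: verify the three consistency conditions directly, using the prefix relation for Conditions~1 and~2, and split Condition~3 into the two cases $m_1 = m_2$ (invoke Condition~3 for $\rho_2$ together with $t_1 \le t_2$) and $m_1 < m_2$ (invoke Condition~2 for $\rho_2$ at index $m_1+1$ together with $t_1 \le \tau_{m_1+1}^*$). You are in fact more careful than the paper in two places: you explicitly check that $\rho_1^*$ is a $\{(\lat,\jit)\}$-observation, and you address the bound $n' \ge m_1$ and the tie-breaking convention needed for it, both of which the paper leaves implicit.
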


\begin{proof}
We need to show that $\rho_1$ is consistent with $\rho_1^*$ at $t_1$ under $\lat$ and $\jit$. The first requirement of the definition of consistency follows from $\tau(\rho_1^*) \le t_1$ and the choice of $n'$ (which implies $\tau(\rho_1) \le t_1$).
The second requirement follows from the fact that $\rho_2$ is consistent with $\rho_2^*$ at $t_2$ under $\lat$ and $\jit$ and the fact that $\rho_1$ is a prefix of $\rho_2$ and $\rho_1^*$ is a prefix of $\rho_2^*$.

Finally, consider the third requirement and assume it is violated, i.e., let $\rho_1^*$ have $m$ letters and assume $\rho_1$ has at least $m+1$ letters, i.e., the $(m+1)$-th letter of $\rho_1$ has not yet been observed before time~$t_1$ in $(\rho_1^*)$.
Let $\tau_{m+1}$ denote the time point of the $(m+1)$-th letter of $\rho_1$, which is also the time-point of the $(m+1)$-th letter of $\rho_2$. 
We consider two cases.

If $\rho_1^*$ and $\rho_2^*$ have the same length, then the $(m+1)$-th letter of $\rho_1$ (which is also the $(m+1)$-th letter of $\rho_2$) has also not been observed before time~$t_2$ in $(\rho_2^*)$.
Hence, the third requirement of consistency (between $\rho_2^*$ and $\rho_2$) implies $t_{m+1} \ge t_2 - (\lat + \jit)$. Hence, we obtain the desired bound~$t_{m+1} \ge t_1 - (\lat + \jit)$ from the inequality~$t_2 \ge t_1$.

The only other option is that $\rho_1^*$ is strictly shorter than $\rho_2^*$. 
This implies that the $(m+1)$-th letter of $\rho_2$ (which is also the $(m+1)$-th letter of $\rho_1$) has been observed before time~$t_2$ in $\rho_2^*$.
Then, the second requirement of consistency (between $\rho_2^*$ and $\rho_2$) implies $\tau_{m+1}^* - \tau_{m+1} \in [\lat, \lat+\jit]$, where $\tau_{m+1}^*$ is the time-point at which the $(m+1)$-th letter of $\rho_2$ is observed in $\rho_2^*$.
Furthermore, the definition of $\sqsubseteq$ yields $\tau_{m+1}^* \ge t_1$. 
Combining these two yields the desired bound~$t_{m+1} \ge t_1 - (\lat + \jit)$.
\end{proof}

Next, we introduce monitoring under delay. A monitor obviously ought to supply a verdict if and only if that verdict applies across \emph{all possible} ground-truth timed words that the observed word explains. 
For our definition of monitor, we use the set~$\verdicts = \{\top, \unknown,\bot\}$ of verdicts, as usual.

\begin{definition}[Monitor verdicts under delay]
\label{def:delayedmonitor}
Given a language $L \subseteq \TSigma^\omega$, a set of possible observation delays $\delays$, a $\delays$-observation~$\rho^* \in \TSigma^*$, and an observation time~$t \ge \tau(\rho^*)$, the function $\evalfuncsymbol_{\delays} \colon \pow{\TSigma^\omega}\rightarrow \TSigma^*  \times \nnreals \rightarrow \verdicts$ evaluates to the verdict
\[
\evalfuncd{\delays}{L}{\rho^*,t} = \left.
  \begin{cases}
    \top & \text{if } \rho \cdot_t \mu \in L $ for all $\rho \in \GT_{\delays}(\rho^*,t)$ and all $ \mu \in \TSigma^{\omega}, \\
    \bot & \text{if } \rho \cdot_t \mu \notin L $ for all $\rho \in \GT_{\delays}(\rho^*,t)$ and all $ \mu \in \TSigma^{\omega}, \\
    \unknown & \text{otherwise}.
  \end{cases}
  \right.
\]
$\evalfuncd{\delays}{L}{\rho^*,t}$ is undefined when $t < \tau(\rho^*)$.
\end{definition}

\begin{example}
\label{example:verdicts}
    Consider the property $\varphi = F_{[0,10]}a \land G_{[0,20]}\neg b$ and observed word $\rho^*=(a, 17.3), (b, 27.1)$ shown in Fig.~\ref{fig:example1}, time point $t = 27.1$, and set of delays~$\delays = \{(\lat, 0.2) \mid \lat \in\nnreals\}$.
    As the jitter is bounded by $0.2$, in all ground-truths either $a$ occurred after time point~10, or $b$ occurred before time point~20. Thus, all extensions of all possible ground-truths satisfy $\neg\varphi$, i.e., $\evalfuncd{\delays}{L(\varphi))}{\rho^*,t} = \bot$.
\end{example}

Note that for the special case of $\delays=\{(0,0)\}$ we cover classical (i.e., delay-free) monitoring~\cite{GrosenKLZ22}.
Before we turn our attention to computing $\evalfuncsymbol$ in Sections~\ref{sec:towards} and \ref{sec:algo}, we study some properties of our definition. First, let us note that the ability to make firm verdicts increases with increased certainty of the observation channel delay.

\begin{lemma}
\label{lemma:delaycertainty}
Let $L \subseteq \TSigma^\omega$, $\rho^* \in \TSigma^*$, let $\delays \subseteq \delays'$ be delay sets, let $\rho^*$ be a $\delays$-observation, and let $t \ge \tau(\rho^*)$. 
Then, $\evalfuncd{\delays'}{L}{\rho^*,t}=\top$ implies $\evalfuncd{\delays}{L}{\rho^*,t}=\top$ and $\evalfuncd{\delays'}{L}{\rho^*,t}=\bot$ implies $\evalfuncd{\delays}{L}{\rho^*,t}=\bot$.
\end{lemma}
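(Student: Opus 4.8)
The plan is to reduce the statement to two simple observations: first, that $\GT_{(\cdot)}(\rho^*,t)$ is monotone in the delay set, and second, that both the $\top$-case and the $\bot$-case of Definition~\ref{def:delayedmonitor} are conditions \emph{universally quantified} over $\GT_{(\cdot)}(\rho^*,t)$, so shrinking that set can only make such a condition easier to satisfy.

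First I would note that, since $\delays \subseteq \delays'$ and $\rho^*$ is a $\delays$-observation, $\rho^*$ is also a $\delays'$-observation; together with $t \ge \tau(\rho^*)$ this ensures both $\evalfuncd{\delays}{L}{\rho^*,t}$ and $\evalfuncd{\delays'}{L}{\rho^*,t}$ are defined. Next, directly from the definition $\GT_\delays(\rho^*,t)=\bigcup_{(\lat,\jit)\in\delays}\GT_{\lat,\jit}(\rho^*,t)$ and $\delays \subseteq \delays'$, we obtain the inclusion $\GT_\delays(\rho^*,t) \subseteq \GT_{\delays'}(\rho^*,t)$.

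For the $\top$-case, assume $\evalfuncd{\delays'}{L}{\rho^*,t}=\top$, i.e., $\rho \cdot_t \mu \in L$ for all $\rho \in \GT_{\delays'}(\rho^*,t)$ and all $\mu \in \TSigma^\omega$. By the inclusion above, the same holds a fortiori for all $\rho \in \GT_\delays(\rho^*,t)$ and all $\mu \in \TSigma^\omega$, which is precisely the defining condition for $\evalfuncd{\delays}{L}{\rho^*,t}=\top$. The $\bot$-case is entirely symmetric, replacing the membership $\rho \cdot_t \mu \in L$ by its negation $\rho \cdot_t \mu \notin L$ throughout.

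I expect no real obstacle here beyond keeping the quantifier direction straight: because shrinking the domain of a $\forall$-statement weakens it, a firm verdict obtained under the larger (less certain) delay set $\delays'$ automatically transfers to the smaller set $\delays$. The only subtlety worth flagging is the degenerate possibility that $\GT_\delays(\rho^*,t)$ is empty, which would make both the $\top$- and $\bot$-conditions vacuously hold; this is excluded by Remark~\ref{remark_gtnonempty} (as $\rho^*$ is a $\delays$-observation and $t \ge \tau(\rho^*)$), so the resulting verdict is genuinely $\top$ respectively $\bot$ and the statement is unambiguous.
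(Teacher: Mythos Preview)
Your proposal is correct and matches the paper's own proof essentially verbatim: observe $\GT_\delays(\rho^*,t) \subseteq \GT_{\delays'}(\rho^*,t)$ from $\delays \subseteq \delays'$, then note that the $\top$ and $\bot$ cases are universal over this set. Your additional remarks on well-definedness and non-emptiness via Remark~\ref{remark_gtnonempty} are sound extra care that the paper omits.
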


\begin{proof}
Note that $\delays \subseteq \delays'$ implies $\GT_\delays(\rho^*,t) \subseteq \GT_{\delays'}(\rho^*,t)$. Thus, the universal quantification over possible ground-truths~$\rho$ in the first two cases of the definition of $\evalfuncd{\delays}{L}{\rho^*,t}$ ranges over a subset of the possible ground-truths that are considered for $\evalfuncd{\delays'}{L}{\rho^*,t}$.
\end{proof}

As a refinement of the verdict function in Definition~\ref{def:delayedmonitor}, one may provide information about the delay parameters~$(\lat,\jit)$ that can explain an observation.  
Given~$L \subseteq \TSigma^\omega$, a finite timed word~$\rho^* \in \TSigma^*$, and $t \ge \tau(\rho^*)$, the set~$\Delta(L,\rho^*,t)$ of delays that are consistent with the observation $\rho^*$ at $t$ is defined as
\[ \Delta(L,\rho^*,t) = \{ (\lat,\jit) \mid 
    \exists \rho\in\GT_{\lat,\jit}(\rho^*,t)\, \exists \mu\in\TSigma^{\omega} \text{ s.t. }\rho \cdot_t \mu \in L\}. \]
We denote by $\Delta_{\delays}(L,\rho^*,t)$ the set $\Delta(L,\rho^*,t) \cap \delays$.

The following fact follows directly from the nonemptiness of ground-truths and is later useful in proofs. 

\begin{lemma}
\label{lemma_consdelayunion}
Let $\rho^*$ be a $(\lat,\jit)$-observation, $t \ge \tau(\rho^*)$, and $L \subseteq \TSigma^\omega$.
Then, $(\lat,\jit) \in \Delta(L,\rho^*,t)$ or $(\lat,\jit) \in \Delta(\overline{L},\rho^*,t)$ (note that it may be in both).
\end{lemma}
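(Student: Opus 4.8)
The plan is to use that the set of consistent ground-truths is always nonempty together with the fact that $L$ and its complement $\overline{L} = \TSigma^\omega \setminus L$ partition $\TSigma^\omega$, so that any single infinite extension of a ground-truth lands in one of them, thereby witnessing membership of $(\lat,\jit)$ in $\Delta(L,\rho^*,t)$ or in $\Delta(\overline{L},\rho^*,t)$.

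First I would invoke Remark~\ref{remark_gtnonempty}: since $\rho^*$ is a $(\lat,\jit)$-observation and $t \ge \tau(\rho^*)$, the set $\GT_{\lat,\jit}(\rho^*,t)$ is nonempty, so fix some $\rho \in \GT_{\lat,\jit}(\rho^*,t)$ (for instance $\rho = \rho^* - \lat$). By the first requirement of Definition~\ref{def:Consistency} we have $\tau(\rho) \le t$, hence the timed concatenation $\rho \cdot_t \mu$ is defined for every $\mu \in \TSigma^\omega$. Next I would pick a concrete $\mu \in \TSigma^\omega$, e.g.\ $\mu = (a,1),(a,2),(a,3),\ldots$ for an arbitrary letter $a \in \Sigma$; one checks that $\rho \cdot_t \mu$ is a legal infinite timed word, as the timestamps of $\mu$ are shifted by $t \ge \tau(\rho)$ (so monotonicity is preserved) and they diverge.

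Finally, by the law of excluded middle either $\rho \cdot_t \mu \in L$ or $\rho \cdot_t \mu \in \overline{L}$. In the first case the pair $(\rho,\mu)$ witnesses $(\lat,\jit) \in \Delta(L,\rho^*,t)$ directly from the definition of $\Delta$; in the second case the same pair witnesses $(\lat,\jit) \in \Delta(\overline{L},\rho^*,t)$. The statement is essentially immediate once this bookkeeping is set up; there is no real obstacle, the only point requiring a little care being the verification that $\rho \cdot_t \mu$ is a well-formed infinite timed word (time-divergence and non-decreasing timestamps), which is exactly where the choice of a simple divergent $\mu$ and the bound $\tau(\rho) \le t$ coming from consistency are used.
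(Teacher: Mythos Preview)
Your proof is correct and follows essentially the same approach as the paper: pick a ground-truth $\rho$ via Remark~\ref{remark_gtnonempty}, pick any $\mu \in \TSigma^\omega$, and use that $\rho \cdot_t \mu$ lies in $L$ or $\overline{L}$ to witness the desired membership. The paper's version is terser and does not spell out the well-formedness of $\rho \cdot_t \mu$, but the argument is the same.
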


\begin{proof}
Fix some $\rho \in \GT_{\lat,\jit}(\rho^*,t)$ (which is always possible due to Remark~\ref{remark_gtnonempty}) and some $\mu \in \TSigma^\omega$.
Then, $\rho \cdot_t \mu$ is either in $L$ or in $\overline{L}$.
In the former case, we have $(\lat,\jit) \in \Delta(L,\rho^*,t)$, in the latter case, we have $(\lat,\jit) \in \Delta(\overline{L},\rho^*,t)$.
\end{proof}

In the following, we present an example of a delay that is in both sets of consistent delays.

\begin{example}
Let $L = \lang{F_{\le 10} a}$, consider the observation~$\rho^* = (b,3)$, and let $\delays = \{ (3,7) \}$.
Then, we have $(3,7) \in \Delta(L,\rho^*,t)$ (as we can extend the ground-truth~$(b,0)$ so that it is in $L$) and $(3,7) \in \Delta(\overline{L},\rho^*,t)$ (as we can extend the ground-truth~$(b,0)$ so that it is in $\overline{L}$).
\end{example}

Our next result shows that conclusive monitoring verdicts can be characterized via the sets of consistent delays, i.e., computing the sets of consistent delays generalizes monitoring under delay.

\begin{lemma}
\label{lemma:delay}
Given $L \subseteq \TSigma^\omega$, a set~$\delays$ of delays, a $\delays$-observation~$\rho^* \in \TSigma^*$, and $t \ge \tau(\rho^*)$, we have
    \begin{enumerate}
       \item  $\Delta_{\delays}(L,\rho^*,t)=\emptyset$ if and only if $\evalfuncd{\delays}{L}{\rho^*,t}=\bot$, and 
       \item  $\Delta_{\delays}(\overline{L},\rho^*,t)=\emptyset$ if and only if $\evalfuncd{\delays}{L}{\rho^*,t}=\top$.
    \end{enumerate}
\end{lemma}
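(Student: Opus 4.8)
The plan is to prove both biconditionals by directly unfolding the definitions of $\Delta_{\delays}$, of $\GT_{\delays}$, and of the verdict function $\evalfuncsymbol_{\delays}$, the only non-purely-logical ingredient being the fact (Remark~\ref{remark_gtnonempty}) that $\GT_{\delays}(\rho^*,t)$ is nonempty under the hypotheses of the lemma.

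First I would record a well-definedness observation. Since $\rho^*$ is a $\delays$-observation and $t \ge \tau(\rho^*)$, Remark~\ref{remark_gtnonempty} gives $\GT_{\delays}(\rho^*,t) \neq \emptyset$. Hence the two non-default cases in Definition~\ref{def:delayedmonitor} are mutually exclusive, so that $\evalfuncd{\delays}{L}{\rho^*,t} = \bot$ holds \emph{exactly} when $\rho \cdot_t \mu \notin L$ for all $\rho \in \GT_{\delays}(\rho^*,t)$ and all $\mu \in \TSigma^{\omega}$, and $\evalfuncd{\delays}{L}{\rho^*,t} = \top$ holds exactly when $\rho \cdot_t \mu \in L$ for all $\rho \in \GT_{\delays}(\rho^*,t)$ and all $\mu \in \TSigma^{\omega}$. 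This is the step that lets us read the verdict off the defining conditions without worrying about the ordering of the cases.

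For statement~(1), I would unfold $\Delta_{\delays}(L,\rho^*,t) = \Delta(L,\rho^*,t) \cap \delays = \emptyset$: this says precisely that there is no $(\lat,\jit) \in \delays$ admitting some $\rho \in \GT_{\lat,\jit}(\rho^*,t)$ and some $\mu \in \TSigma^{\omega}$ with $\rho \cdot_t \mu \in L$. Pushing the negation inward, this is equivalent to: for all $(\lat,\jit) \in \delays$, all $\rho \in \GT_{\lat,\jit}(\rho^*,t)$, and all $\mu \in \TSigma^{\omega}$, $\rho \cdot_t \mu \notin L$. Since $\GT_{\delays}(\rho^*,t) = \bigcup_{(\lat,\jit) \in \delays} \GT_{\lat,\jit}(\rho^*,t)$, quantifying over $(\lat,\jit) \in \delays$ and then over $\rho \in \GT_{\lat,\jit}(\rho^*,t)$ is the same as quantifying over $\rho \in \GT_{\delays}(\rho^*,t)$; the condition thus becomes ``$\rho \cdot_t \mu \notin L$ for all $\rho \in \GT_{\delays}(\rho^*,t)$ and all $\mu$'', which by the previous paragraph is exactly $\evalfuncd{\delays}{L}{\rho^*,t} = \bot$. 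Statement~(2) is the same chain of equivalences with $\overline{L}$ in place of $L$: $\Delta_{\delays}(\overline{L},\rho^*,t) = \emptyset$ unfolds to ``for all $(\lat,\jit) \in \delays$, all $\rho \in \GT_{\lat,\jit}(\rho^*,t)$, all $\mu$: $\rho \cdot_t \mu \notin \overline{L}$'', i.e.\ $\rho \cdot_t \mu \in L$, which collapses via the same union identity to the $\top$-condition, hence to $\evalfuncd{\delays}{L}{\rho^*,t} = \top$.

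I do not expect a real obstacle here; the proof is bookkeeping with quantifiers, and the only genuine fact used is the nonemptiness of $\GT_{\delays}(\rho^*,t)$. The points to be careful about are: (i) not conflating ``$\Delta_{\delays} = \emptyset$'' with ``the verdict is $\unknown$''; (ii) keeping both directions of each ``if and only if'' in view, even though they are obtained simultaneously from a chain of equivalences; and (iii) justifying the interchange of the $\delays$-quantifier with the $\GT_{\lat,\jit}$-quantifier via the union definition of $\GT_{\delays}$.
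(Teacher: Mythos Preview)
Your proof is correct and follows essentially the same chain of equivalences as the paper's own proof. The only difference is that you make explicit the appeal to Remark~\ref{remark_gtnonempty} to ensure the $\top$ and $\bot$ cases of Definition~\ref{def:delayedmonitor} are mutually exclusive, a point the paper leaves implicit.
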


\begin{proof}
 We have
\begin{align*}
  {}&{}  \Delta_{\delays}(L,\rho^*,t)=\emptyset\\
 \Leftrightarrow {}&{} \rho\cdot_t\mu \in\overline{L} \text{ for all $(\lat,\jit) \in \delays$, all $\rho \in \GT_{\lat,\jit}(\rho^*,t)$, and all $\mu \in \TSigma^\omega$}\\
 \Leftrightarrow {}&{} \rho\cdot_t\mu \in\overline{L} \text{ for all $\rho \in \GT_{\delays}(\rho^*,t)$ and all $\mu \in \TSigma^\omega$}\\
 \Leftrightarrow {}&{} \evalfuncd{\delays}{L}{\rho^*,t}=\bot.
\end{align*}

The second claim is obtained by a dual argument (swapping $\bot$ with $\top$ and $L$ with $\overline{L}$).
\end{proof}

But even in the case when both delay-sets are nonempty (i.e., the verdict is $\unknown$), we can still provide useful information in terms of the sets~$\Delta(L, \rho^*, t)$ and $\Delta(\overline{L}, \rho^*, t)$ of consistent delays. 
In particular, the set of consistent delays is non-increasing during observations: By extending the observations, we (potentially) reduce the set of consistent delays.

\begin{lemma}
\label{lemma:delaymono}
Let $(\rho_1^*,t_1)\sqsubseteq (\rho_2^*,t_2)$ for finite timed words~$\rho_1^*$ and $\rho_2^*$ with $t_1 \ge \tau(\rho_1^*)$ and $t_2 \ge \tau(\rho_2^*)$ and $t_2 \ge t_1$. Then, $\Delta(L,\rho_1^*,t_1)\supseteq\Delta(L,\rho_2^*,t_2) $. 
\end{lemma}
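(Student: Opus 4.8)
The goal is to show that extending an observation (in the $\sqsubseteq$-sense) can only shrink the set of consistent delays, i.e.\ $\Delta(L,\rho_1^*,t_1) \supseteq \Delta(L,\rho_2^*,t_2)$. Fix a delay $(\lat,\jit) \in \Delta(L,\rho_2^*,t_2)$; I must produce a witness showing $(\lat,\jit) \in \Delta(L,\rho_1^*,t_1)$. By definition of $\Delta$, there exist $\rho_2 \in \GT_{\lat,\jit}(\rho_2^*,t_2)$ and $\mu_2 \in \TSigma^\omega$ with $\rho_2 \cdot_{t_2} \mu_2 \in L$. The natural candidate witness for the smaller observation is the prefix $\rho_1$ of $\rho_2$ consisting of exactly those events with timestamp at most $t_1$, which is precisely the $\rho_1$ produced by Lemma~\ref{lemma_GT}. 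By that lemma, $\rho_1 \in \GT_{\lat,\jit}(\rho_1^*,t_1)$, so it remains only to exhibit some $\mu_1 \in \TSigma^\omega$ with $\rho_1 \cdot_{t_1} \mu_1 \in L$.

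**The key step.** The obvious choice is to take $\mu_1$ to be the ``tail'' of $\rho_2 \cdot_{t_2} \mu_2$ after the prefix $\rho_1$, appropriately time-shifted so that concatenating it onto $\rho_1$ at time $t_1$ reconstructs exactly the word $\rho_2 \cdot_{t_2} \mu_2$. Concretely, $\rho_2 \cdot_{t_2} \mu_2$ is of the form $\rho_1 \cdot_{t_1} \nu$ for a suitable $\nu$: the events of $\rho_2$ beyond $\rho_1$ (i.e.\ the events $(\sigma_{n'+1},\tau_{n'+1}),\ldots,(\sigma_n,\tau_n)$, which by choice of $n'$ all have timestamps in $[t_1, t_2]$) together with $\mu_2 + (t_2 - t_1)$ form $\nu$. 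This is exactly the content of Remark~\ref{remark_concat}, case~\ref{remark_concat_case2}, applied with $t = t_2$, $t' = t_1$, and the split point $n'$: it gives $\rho_2 \cdot_{t_2} \mu_2 = \rho_1 \cdot_{t_1}\bigl((\rho_2' - t_1) \cdot_{t_2 - t_1} \mu_2\bigr)$ where $\rho_2' = (\sigma_{n'+1},\tau_{n'+1}),\ldots,(\sigma_n,\tau_n)$. Note the side condition of that remark, $\tau_{n'} \le t_1 \le \tau_{n'+1}$, is exactly how $n'$ was chosen in the statement, and $\tau(\rho_2) = \tau_n \le t_2$ holds since $\rho_2 \in \GT_{\lat,\jit}(\rho_2^*,t_2)$. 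Setting $\mu_1 = (\rho_2' - t_1) \cdot_{t_2 - t_1} \mu_2$, we then have $\rho_1 \cdot_{t_1} \mu_1 = \rho_2 \cdot_{t_2} \mu_2 \in L$, which is what we needed; $\mu_1$ is infinite and time-divergent because $\mu_2$ is.

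**Main obstacle.** The reasoning is almost entirely bookkeeping about timed concatenation, so the only real care needed is to check that the hypotheses of Remark~\ref{remark_concat}, case~\ref{remark_concat_case2}, and of Lemma~\ref{lemma_GT} are all met with the same choice of $n'$ — in particular that $n'$ is well-defined and consistent across the two invocations, and that $\tau(\rho_1) \le t_1$ so that the concatenation $\rho_1 \cdot_{t_1} \mu_1$ is even defined. All of these follow directly from the choice $\tau_{n'} \le t_1 \le \tau_{n'+1}$ and from $(\rho_1^*,t_1)\sqsubseteq(\rho_2^*,t_2)$, so there is no genuine difficulty; the proof is short. One minor point worth stating explicitly: the lemma is stated only for $\Delta$, but since it holds for an arbitrary language $L$, it applies equally to $\overline{L}$, which is what Lemma~\ref{lemma:delaymono} will implicitly be used for later.
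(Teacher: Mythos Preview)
Your proposal is correct and follows essentially the same approach as the paper: pick a witness $(\rho_2,\mu_2)$ for $(\lat,\jit)\in\Delta(L,\rho_2^*,t_2)$, truncate $\rho_2$ at the split point~$n'$ determined by $t_1$, invoke Lemma~\ref{lemma_GT} to obtain $\rho_1\in\GT_{\lat,\jit}(\rho_1^*,t_1)$, and then apply Remark~\ref{remark_concat}.\ref{remark_concat_case2} to rewrite $\rho_2\cdot_{t_2}\mu_2$ as $\rho_1\cdot_{t_1}\mu_1$ for the obvious tail $\mu_1$. Your discussion of side conditions and your observation that the result applies equally to $\overline{L}$ are also in line with the paper's use of the lemma.
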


\begin{proof}
Let $(\lat,\jit) \in \Delta(L,\rho_2^*,t_2)$, i.e., there exists $\rho_2 \in \GT_{\lat,\jit}(\rho_2^*,t_2)$ and a $\mu_2 \in \TSigma^\omega$ such that $\rho_2\cdot_{t_2}\mu_2 \in L$. 

Let $\rho_2 = (\sigma_1, \tau_1), \ldots,(\sigma_n, \tau_n)$ and let $n'$ be maximal with $\tau_{n'} \le t_1$ (where we use $\tau_0 = -\infty$ to allow $n'=0$ and $\tau_{n+1} = \infty$ to allow $n' = n$). 
Then, Lemma~\ref{lemma_GT} yields
\[
\rho_1 = (\sigma_1, \tau_1), \ldots,(\sigma_{n'}, \tau_{n'}) \in \GT_{\lat,\jit}(\rho_1^*,t_1)
\]
and an application of Remark~\ref{remark_concat}.\ref{remark_concat_case2} yields
\[
\rho_1 \cdot_{t_1} \Big(\big[ ( (\sigma_{n'+1}, \tau_{n'+1}), \ldots,(\sigma_n, \tau_n ))  - t_1 \big] \cdot_{t_2-t_1} \mu\Big)  = \rho_2\cdot_{t_2}\mu_2 \in L .
\]
This implies $(\lat,\jit) \in \Delta(L,\rho_1^*,t_1)$.
\end{proof}

Another interesting point is that in some cases, no extension of the observed word will provide a definitive verdict. 
\begin{example}\label{ex:foreverinconclusive}
Consider the language $L(F_{\le 10}a)$, the observation~$\rho^* = (a,15)$, and the set~$\delays = \{(\lat, 0) \mid \lat \in [0,10]\}$ of delays.
For any given $t \ge \tau(\rho^*)$ the sets of consistent delays are $\Delta_\delays(L, \rho^*, t) = \{(\lat, 0) \mid \lat \in [5, 10]\}$ and $\Delta_\delays(\overline{L}, \rho^*, t) = \{(\lat, 0) \mid \lat \in [0, 5)\}$, i.e., both sets of consistent delays are a strict subset of $\delays$.
Further, due to Lemma~\ref{lemma:delaymono}, this will be the case, no matter what observations occur in the future, as the set of consistent delays can only shrink when further observations are made.

As the sets of consistent verdicts can only shrink, but must contain every possible~$(\lat,0) \in \delays$ (due to the fact that $\rho^*$ is a $(\lat,0)$-observation for each such $(\lat,0)$ and due to Lemma~\ref{lemma_consdelayunion}), we can conclude that neither of the sets can become empty. 
So, Lemma~\ref{lemma:delay} implies that the verdict is $\unknown$, even if additional observations occur.
\end{example}

The following lemma formalizes this: as soon as the set of consistent delays w.r.t.\ $L$ ($\overline{L}$) is no longer equal to $\delays$, then the verdict can never become $\top$ ($\bot$).

\begin{lemma}
\label{lemma:foreverinconclusive}
    Let~$L \subseteq \TSigma^\omega$, $\delays$ be a set of delays, and $\rho^*=(\sigma^*_1,\tau^*_1),\ldots,(\sigma^*_m,\tau^*_m)$ a nonempty $\delays$-observation. Then, for all $t \ge \tau(\rho^*)$
    \begin{enumerate}
        \item $\Delta_\delays(L,\rho^*, t) \subsetneq \delays \cap \{(\lat, \jit) \mid \lat \le \tau^*_1\} $ implies there is no $ \rho^*_1\in\TSigma^*$ such that $ \evalfuncd{\delays}{L}{\rho^* \cdot_t \rho^*_1, t'} = \top $ for any $ t' \ge t+\tau(\rho^*_1)$, and

        \item $\Delta_\delays(\overline{L},\rho^*, t) \subsetneq \delays \cap \{(\lat, \jit) \mid \lat \le \tau^*_1\} $ implies there is no $ \rho^*_1\in\TSigma^*$ such that $ \evalfuncd{\delays}{L}{\rho^* \cdot_t \rho^*_1, t'} = \bot  $ for any $ t' \ge t+\tau(\rho^*_1)$.
    \end{enumerate}
    
 \end{lemma}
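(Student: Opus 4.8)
The plan is to prove both items by contraposition, combining three earlier results: the characterization of conclusive verdicts via emptiness of consistent delay sets (Lemma~\ref{lemma:delay}), the monotonicity of consistent delay sets along $\sqsubseteq$ (Lemma~\ref{lemma:delaymono}), and the fact that every delay is consistent with $L$ or with $\overline{L}$ (Lemma~\ref{lemma_consdelayunion}). I would only carry out item~1 in detail; item~2 is the exact dual, obtained by swapping $L$ with $\overline{L}$ and $\top$ with $\bot$ and invoking part~1 of Lemma~\ref{lemma:delay} in place of part~2.

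First I would unpack the hypothesis of item~1. Since $\Delta_\delays(L,\rho^*,t) = \Delta(L,\rho^*,t)\cap\delays$, strictness of $\Delta_\delays(L,\rho^*,t) \subsetneq \delays\cap\{(\lat,\jit)\mid\lat\le\tau^*_1\}$ yields a pair $(\lat_0,\jit_0)\in\delays$ with $\lat_0\le\tau^*_1$ and $(\lat_0,\jit_0)\notin\Delta(L,\rho^*,t)$. Now fix an arbitrary $\rho^*_1\in\TSigma^*$ and $t'\ge t+\tau(\rho^*_1)$, and put $\rho^{**}=\rho^*\cdot_t\rho^*_1$ (well-defined, as $\tau(\rho^*)\le t$). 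The goal is $\evalfuncd{\delays}{L}{\rho^{**},t'}\neq\top$, and by Lemma~\ref{lemma:delay} this follows once I show $\Delta_\delays(\overline{L},\rho^{**},t')\neq\emptyset$; I will establish the stronger fact $(\lat_0,\jit_0)\in\Delta(\overline{L},\rho^{**},t')$.

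The core step is to show $(\rho^*,t)\sqsubseteq(\rho^{**},t')$. The events of $\rho^*$ form a prefix of those of $\rho^{**}$ with identical timestamps, and $t'\ge t$; for the remaining side condition of $\sqsubseteq$ I distinguish whether $\rho^*_1$ is empty (then $\rho^{**}=\rho^*$ and $t\le t+\tau(\rho^*_1)=t\le t'$) or nonempty (then the $(m{+}1)$-st timestamp of $\rho^{**}$ is $t$ plus the first timestamp of $\rho^*_1$, hence $\ge t$). Since $\rho^*$ is a $\delays$-observation and $\rho^{**}$ has the same first timestamp $\tau^*_1$, $\rho^{**}$ is a $\delays$-observation, and in fact a $(\lat_0,\jit_0)$-observation because $\lat_0\le\tau^*_1$; also $t'\ge\tau(\rho^{**})$ holds in both cases. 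Hence Lemma~\ref{lemma_consdelayunion} gives $(\lat_0,\jit_0)\in\Delta(L,\rho^{**},t')$ or $(\lat_0,\jit_0)\in\Delta(\overline{L},\rho^{**},t')$. The first disjunct is impossible: Lemma~\ref{lemma:delaymono} applied to $(\rho^*,t)\sqsubseteq(\rho^{**},t')$ gives $\Delta(L,\rho^*,t)\supseteq\Delta(L,\rho^{**},t')$, so $(\lat_0,\jit_0)\in\Delta(L,\rho^{**},t')$ would contradict $(\lat_0,\jit_0)\notin\Delta(L,\rho^*,t)$. Therefore $(\lat_0,\jit_0)\in\Delta(\overline{L},\rho^{**},t')$, which completes the argument.

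The main obstacle is purely the bookkeeping around $\sqsubseteq$ and the admissibility side conditions: checking $(\rho^*,t)\sqsubseteq(\rho^{**},t')$ uniformly for empty and nonempty $\rho^*_1$, confirming $t'\ge\tau(\rho^{**})$, and verifying that $\rho^{**}$ is a $(\lat_0,\jit_0)$-observation --- the last being exactly where the restriction to $\{(\lat,\jit)\mid\lat\le\tau^*_1\}$ in the statement is needed (without it the chosen $(\lat_0,\jit_0)$ might have $\tau^*_1<\lat_0$, so $\rho^{**}$ would not be a valid observation under that delay and Lemma~\ref{lemma_consdelayunion} could not be applied). Once these are in place, the proof is just a short chain through the three cited lemmas.
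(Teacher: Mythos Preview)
Your proposal is correct and follows essentially the same route as the paper: pick a delay $(\lat_0,\jit_0)$ witnessing the strict inclusion, use Lemma~\ref{lemma:delaymono} to carry its non-membership in $\Delta(L,\cdot,\cdot)$ forward to the extended observation, then Lemma~\ref{lemma_consdelayunion} to place it in $\Delta(\overline{L},\cdot,\cdot)$, and finally Lemma~\ref{lemma:delay} to rule out the $\top$ verdict. Your version is in fact more careful than the paper's in explicitly verifying the $\sqsubseteq$ relation and the admissibility of $\rho^{**}$ as a $(\lat_0,\jit_0)$-observation, and your remark that the constraint $\lat\le\tau^*_1$ is exactly what makes Lemma~\ref{lemma_consdelayunion} applicable is a useful clarification the paper leaves implicit.
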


\begin{proof}
Let  $\Delta_\delays(L,\rho^*, t) \subsetneq \delays \cap \{(\lat, \jit) \mid \lat \le \tau^*_1\} $, i.e., there is a $(\lat,\jit)\in\delays$ with $\lat \le \tau^*_1$ and 
$(\lat,\jit)\notin \Delta_\delays(L,\rho^*, t)$.
Thus, Lemma~\ref{lemma_consdelayunion} and Lemma~\ref{lemma:delaymono} imply that $(\lat,\jit)\in \Delta_\delays(\overline{L},\rho^* \cdot_t \rho^*_1, t')$ for all $\rho^*_1$ with $t \ge \tau(\rho^*)$ and $ t' \ge t+\tau(\rho^*_1)$.
Hence, $\Delta_\delays(\overline{L},\rho^* \cdot_t \rho^*_1, t') \neq \emptyset$ for all such $\rho^*_1$ and all such $t'$.
Finally, Lemma~\ref{lemma:delay} yields $ \evalfuncd{\delays}{L}{\rho^* \cdot_t \rho^*_1, t'} \neq \top $ for all such $\rho^*_1$ and all such $t'$.

The second claim is proven by a dual argument (swapping $L$ with $\overline{L}$ and $\top$ with $\bot$).
\end{proof}
 
Note that $\Delta_\delays(L,\rho^*, t) \subsetneq \delays \cap \{(\lat, \jit) \mid \lat \le \tau^*_1\} $ and $\Delta_\delays(\overline{L},\rho^*, t) \subsetneq \delays \cap \{(\lat, \jit) \mid \lat \le \tau^*_1\} $ can both be true simultaneously (as in Example~\ref{ex:foreverinconclusive}). In this situation, we will under no future observation reach a conclusive verdict.
 
%%%%%%%%%%%%%%%%%%%%%%%%%%%%%%%%%%%%%%%%%%%%%%%%%%%%%%
%%%%%%%%%%%%%%%%%%%%%%%%%%%%%%%%%%%%%%%%%%%%%%%%%%%%%%
%%%%%%%%%%%%%%%%%%%%%%%%%%%%%%%%%%%%%%%%%%%%%%%%%%%%%%

\section{Towards an Algorithm}
\label{sec:towards}

Typically, monitoring algorithms rely on automata-based techniques. 
To this end, first the specification and its complement are translated into suitable automata. 
Then one computes the set of states reachable by processing the observation and then checks whether from one of these states the automaton can still accept an infinite continuation. If this is the case for both automata, then the verdict is~$\unknown$, if it is only the case for the automaton for the specification, then the verdict is $\top$, and vice versa for the complement automaton and $\bot$.

We want to follow the same blueprint, but we need to make adjustments to handle delay. 
Intuitively, we need to compute all states that are reachable by possible ground-truths of a given observation. 
However, a ground-truth may contain more events than the observation, as some events may not yet have been observed due to delay. 
This complicates the construction of the set of reachable states, as an unbounded number of events may not yet have been observed. 

In the definition of $\GT_\delays$ (Definition~\ref{def:Consistency}) there is an implicit universal quantification over all possible sequences of such events that have not yet been observed (e.g., the last five events in $\rho$ in Fig.~\ref{fig_gtdef}).
We exploit the fact that the verdicts are defined with respect to all possible extensions~$\mu$ of a possible ground-truth (i.e., also a universal quantification over the~$\mu$'s) to \myquot{merge} the universal quantification over events that have not yet been observed into the universal quantification of the extension~$\mu$. 
Then, a possible ground-truth has exactly the same number of events as the observation (i.e., ground-truth and observation have equal length (EL)). We begin defining this restricted notion of possible ground-truth by strengthening Definition~\ref{def:Consistency}.

\begin{definition}[EL-Consistency]\label{def:Consistency-el}
    Let $\rho^*=(\sigma^*_1,\tau^*_1),\ldots,(\sigma^*_m,\tau^*_m)$ be a $\{(\lat,\jit)\}$-observation and $\rho=(\sigma_1,\tau_1), \ldots, (\sigma_n,\tau_n)$ be a finite timed word.
  We say that $\rho$ is \emph{EL-consistent} with $\rho^*$ at observation time~$t \in \nnreals$ under latency $\lat$ and jitter~$\jit$ if and only if $\rho$ is consistent with $\rho^*$ at $t$ under $\lat$ and $\jit$ and $m = n$.
   We denote the set of timed words~$\rho$ that are EL-consistent with an $\{(\lat,\jit)\}$-observation~$\rho^*$ at observation time $t$ under latency~$\lat$ and jitter~$\jit$ by $\GT^\strict_{\lat,\jit}(\rho^*,t)$ and define $\GT^\strict_\delays(\rho^*,t) = \bigcup_{(\lat,\jit)\in\delays}\GT^\strict_{\lat,\jit}(\rho^*,t)$.
\end{definition}

\begin{example}
Continuing Example~\ref{example_gtdef}, an EL-consistent ground-truth of the observation~$\rho^*$ in Fig.~\ref{fig_gtdef} has exactly four events corresponding to the four events in the observation. Thus, there cannot be any unobserved events between $t-(\lat+\jit)$ and $t$ in an EL-consistent ground-truth (e.g., the last five events of $\rho$ in Fig.~\ref{fig_gtdef}).
\end{example}

The following lemma relates the original definition of consistency with EL-consistency.

\begin{lemma}
\label{lemma_elGT}\hfill

\begin{enumerate}

    \item \label{lemma_elGT_case1}
    
Let $\rho^*$ be a $\{(\lat,\jit)\}$-observation,
let $t \ge \tau(\rho^*)$,
let $\rho \in \GT_{\lat,\jit}^\strict(\rho^*,t)$, and let $\rho'$ be a finite timed word with $\tau(\rho') \le t - \max(\tau(\rho), t - (\lat+\jit))$. Then, 
$\rho \cdot_{\max(\tau(\rho), t - (\lat+\jit))} \rho' \in \GT_{\lat,\jit}(\rho^*,t)$.

    \item \label{lemma_elGT_case2}
Let $\rho^*$ be a $\{(\lat,\jit)\}$-observation (say with $m$ letters), let $t \ge \tau(\rho^*)$,
let $\rho \in \GT_{\lat,\jit}(\rho^*,t)$, and let $\rho'$ be the prefix of $\rho$ with $m$ letters. Then, $\rho' \in \GT_{\lat,\jit}^\strict(\rho^*,t)$.
    
\end{enumerate}
\end{lemma}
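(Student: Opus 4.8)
The plan is to prove the two parts separately, in each case by checking the three requirements of Definition~\ref{def:Consistency} directly against the definitions, using that EL-consistency is just consistency plus the equal-length condition $m=n$.

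\textbf{Part~\ref{lemma_elGT_case1}.} Write $\rho = (\sigma_1,\tau_1),\ldots,(\sigma_m,\tau_m)$ (it has $m$ letters since $\rho \in \GT_{\lat,\jit}^\strict(\rho^*,t)$), set $s = \max(\tau(\rho), t-(\lat+\jit))$, and write $\rho' = (\sigma_1',\tau_1'),\ldots,(\sigma_k',\tau_k')$. The concatenation $\rho \cdot_s \rho'$ is well-defined because $\tau(\rho) \le s$ by choice of $s$; call it $\rho''= (\sigma_1'',\tau_1''),\ldots,(\sigma_{m+k}'',\tau_{m+k}'')$, so that $\tau_i'' = \tau_i$ for $i \le m$ and $\tau_{m+j}'' = \tau_j' + s$ for $1 \le j \le k$. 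First I would check requirement~(1): $\tau^*_m \le t$ holds since $\rho^*$ is consistent with $\rho$, and $\tau(\rho'') = \tau_k' + s \le t$ follows from the hypothesis $\tau(\rho') \le t - s$. Requirement~(2) concerns only the first $m$ letters, which agree with those of $\rho$, so it is inherited verbatim from $\rho \in \GT_{\lat,\jit}^\strict(\rho^*,t)$. For requirement~(3): if $k = 0$ there is nothing to show (the word has exactly $m = $ length-of-$\rho^*$ letters), and if $k \ge 1$ then the $(m+1)$-th letter of $\rho''$ has time-point $\tau_1' + s \ge s \ge t-(\lat+\jit)$, which is exactly the bound demanded. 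Hence $\rho'' \in \GT_{\lat,\jit}(\rho^*,t)$.

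\textbf{Part~\ref{lemma_elGT_case2}.} Here $\rho = (\sigma_1,\tau_1),\ldots,(\sigma_n,\tau_n) \in \GT_{\lat,\jit}(\rho^*,t)$ with $\rho^*$ having $m$ letters, so $n \ge m$, and $\rho'$ is its length-$m$ prefix. Requirement~(1) for $(\rho',\rho^*)$: $\tau^*_m \le t$ is inherited, and $\tau(\rho') = \tau_m \le \tau_n \le t$ (using $\tau_m \le \tau_n$ since time is non-decreasing, or just $\tau_m \le t$ directly, which also holds when $m=0$). Requirement~(2): the condition $n \ge m$ becomes the trivial $m \ge m$, and the letter-wise equalities $\sigma_i = \sigma^*_i$, $\tau^*_i - \tau_i \in [\lat,\lat+\jit]$ for $i \le m$ are exactly the ones already known for $\rho$. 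Requirement~(3) is vacuous, since $\rho'$ has exactly $m$ letters so the ``$n > m$'' premise fails. Thus $\rho'$ is consistent with $\rho^*$ at $t$, and since it has $m = m$ letters it is EL-consistent, i.e., $\rho' \in \GT_{\lat,\jit}^\strict(\rho^*,t)$.

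I do not expect any serious obstacle; the statement is essentially a bookkeeping lemma making precise that ``cutting a consistent ground-truth down to observation length, then re-extending it past $t-(\lat+\jit)$'' moves freely between the two notions. The only place requiring a moment's care is the choice of the shift point $s = \max(\tau(\rho), t-(\lat+\jit))$ in part~\ref{lemma_elGT_case1}: one must notice that taking the maximum is precisely what simultaneously guarantees the concatenation is defined ($\tau(\rho)\le s$) and that requirement~(3) holds ($s \ge t-(\lat+\jit)$), and that the hypothesis on $\tau(\rho')$ is exactly what is needed to keep $\tau(\rho'')\le t$. For part~\ref{lemma_elGT_case2} one should just be mindful of the degenerate cases $m=0$ and $n=m$, which are handled by the $\tau_0=-\infty$/empty-word conventions already in force in the paper.
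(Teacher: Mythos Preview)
Your proposal is correct and follows essentially the same approach as the paper: verifying the three conditions of Definition~\ref{def:Consistency} directly, using that the appended events in part~\ref{lemma_elGT_case1} all land in $[t-(\lat+\jit),t]$ and that in part~\ref{lemma_elGT_case2} the third condition becomes vacuous. Your write-up is in fact more explicit than the paper's (which compresses part~\ref{lemma_elGT_case1} into a single sentence), and your observation about why the shift point $s=\max(\tau(\rho),t-(\lat+\jit))$ is exactly right is a nice addition.
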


\begin{proof}
\ref{lemma_elGT_case1}.)
We have to show that $\rho \cdot_{\max(\tau(\rho), t - (\lat+\jit))} \rho'$ is consistent with $\rho^*$ at $t$ under $\lat$ and $\jit$. This follows directly from the fact that all events in $\rho'$ have time points (in $\rho \cdot_{\max(\tau(\rho), t - (\lat+\jit))} \rho'$) in the interval~$[t-(\delta+\jit),t]$ and are therefore covered by the third requirement of the definition of consistency.

\ref{lemma_elGT_case2}.)
We need to show that $\rho'$ is EL-consistent with $\rho^*$ at $t$ under $\lat$ and $\jit$. By definition, $\rho'$ has the same length as $\rho^*$ and the first two requirements of the definition of consistency are satisfied, as $\rho$ is consistent with $\rho^*$ at $t$ under $\lat$ and $\jit$ and $\rho'$ is a prefix of $\rho$. Hence, it is EL-consistent, as the third requirement only refers to ground-truths that have more letters than the observation.
\end{proof}

Now, we present the revised verdict function using only EL ground-truths. Note that merging the unobserved events from the possible ground-truth~$\rho$ into the extension~$\mu$ requires changing the time instant at which we concatenate the ground-truth and the extension: $t - (\lat + \jit)$ is the earliest time point at which an event can occur that may not yet have been observed at time~$t$. Due to jitter however, there might also be events after $t - (\lat + \jit)$ that have been observed, which are in the possible ground-truth~$\rho$: the last such event happened at time $\tau(\rho)$. Hence, we need to concatenate at time point~$\max(\tau(\rho), t - (\lat + \jit))$.

\begin{definition}[Monitor verdicts under delay -- EL version]
\label{def:delayedmonitor-el}
Given $L \subseteq \TSigma^\omega$, a set~$\delays$ of delays, a $\delays$-observation~$\rho^* \in \TSigma^*$, and $t \ge \tau(\rho^*)$, the function $\evalfuncsymbolstrict_{\delays} \colon \pow{\TSigma^\omega}\rightarrow \TSigma^*  \times \nnreals \rightarrow \verdicts$ evaluates to the verdict
\[
\evalfuncdstrict{\delays}{L}{\rho^*,t} = \left.
  \begin{cases}
    \top & \text{if $\rho \cdot_{\max(\tau(\rho), t - (\lat + \jit))} \mu \in L $ for all $(\lat,\jit) \in \delays$, all $\rho \in \GT^\strict_{\lat,\jit}(\rho^*,t)$ and all $ \mu \in \TSigma^{\omega}$}, \\
    \bot & \text{if $\rho \cdot_{\max(\tau(\rho), t - (\lat + \jit))} \mu \notin L $ for all $(\lat,\jit) \in \delays$, all $\rho \in \GT^\strict_{\lat,\jit}(\rho^*,t)$ and all $ \mu \in \TSigma^{\omega}$}, \\
    \unknown & \text{otherwise}.
  \end{cases}
  \right.
\]
$\evalfuncdstrict{\delays}{L}{\rho^*,t}$ is undefined when $t < \tau(\rho^*)$.
\end{definition}

Next, we show that both verdict functions coincide.

\begin{lemma}
\label{lemma:verdictvselverdict}
$\evalfuncdstrict{\delays}{L}{\rho^*,t} = \evalfuncd{\delays}{L}{\rho^*,t}$ for all $L \subseteq \TSigma^\omega$, all sets~$\delays$ of delays, all $\delays$-observations~$\rho^*$, and all $t \ge \tau(\rho^*)$.
\end{lemma}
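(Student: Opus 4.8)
The plan is to reduce the equality of the two verdict functions to an equality of two sets of infinite timed words, after which the three cases of the two definitions match up mechanically. Fix $L$, a delay set $\delays$, a $\delays$-observation $\rho^*$ (of length $m$, say), and $t \ge \tau(\rho^*)$, and set
\[
W = \{\, \rho\cdot_t\mu \mid \rho\in\GT_{\delays}(\rho^*,t),\ \mu\in\TSigma^\omega \,\},
\]
\[
W^\strict = \{\, \rho\cdot_{\max(\tau(\rho),\, t-(\lat+\jit))}\mu \mid (\lat,\jit)\in\delays,\ \rho\in\GT^\strict_{\lat,\jit}(\rho^*,t),\ \mu\in\TSigma^\omega \,\}.
\]
By Definition~\ref{def:delayedmonitor}, $\evalfuncd{\delays}{L}{\rho^*,t}$ is $\top$ iff $W\subseteq L$, $\bot$ iff $W\cap L=\emptyset$, and $\unknown$ otherwise; by Definition~\ref{def:delayedmonitor-el} the same holds for $\evalfuncdstrict{\delays}{L}{\rho^*,t}$ with $W^\strict$ in place of $W$. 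Both sets are nonempty (by Remark~\ref{remark_gtnonempty}, and an analogous argument for $W^\strict$), so in each case exactly one alternative applies. Hence it suffices to prove $W=W^\strict$; and as $\GT_\delays=\bigcup_{(\lat,\jit)\in\delays}\GT_{\lat,\jit}$ and $\GT^\strict_\delays=\bigcup_{(\lat,\jit)\in\delays}\GT^\strict_{\lat,\jit}$, it suffices to prove it for a singleton delay set $\{(\lat,\jit)\}$.

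For $W\subseteq W^\strict$: given $\rho\in\GT_{\lat,\jit}(\rho^*,t)$ with $n$ events and $\mu\in\TSigma^\omega$, let $\rho'=(\sigma_1,\tau_1),\ldots,(\sigma_m,\tau_m)$ be the length-$m$ prefix of $\rho$; Lemma~\ref{lemma_elGT}.\ref{lemma_elGT_case2} gives $\rho'\in\GT^\strict_{\lat,\jit}(\rho^*,t)$. Put $s=\max(\tau(\rho'),t-(\lat+\jit))$. Using $\tau_m\le\tau_n\le t$, $t-(\lat+\jit)\le t$, and — when $n>m$ — the third clause of consistency ($\tau_{m+1}\ge t-(\lat+\jit)$) together with $\tau_{m+1}\ge\tau_m$, one checks $0\le\tau_m\le s\le\tau_{m+1}\le t$ (with $\tau_{m+1}=\infty$ if $m=n$). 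So Remark~\ref{remark_concat}.\ref{remark_concat_case2} applies with split index $m$ and threshold $s$, yielding
\[
\rho\cdot_t\mu \;=\; \rho'\cdot_{s}\left((\rho_2-s)\cdot_{t-s}\mu\right),
\qquad \rho_2=(\sigma_{m+1},\tau_{m+1}),\ldots,(\sigma_n,\tau_n).
\]
The inner word is well-defined (as $\tau_n\le t$), infinite, and time-divergent, and $s=\max(\tau(\rho'),t-(\lat+\jit))$, so the right-hand side belongs to $W^\strict$. For $W^\strict\subseteq W$: given $\rho\in\GT^\strict_{\lat,\jit}(\rho^*,t)$ and $\mu=(\sigma'_1,\tau'_1),(\sigma'_2,\tau'_2),\ldots\in\TSigma^\omega$, put $s=\max(\tau(\rho),t-(\lat+\jit))$ and $t'=t-s$; since $\tau(\rho)\le t$ and $t-(\lat+\jit)\le t$ we have $0\le s\le t$, so $t'\ge 0$. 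Pick $n'\ge 0$ maximal with $\tau'_{n'}\le t'$ (with $\tau'_0=-\infty$), which exists and is finite by time divergence; let $\rho''=(\sigma'_1,\tau'_1),\ldots,(\sigma'_{n'},\tau'_{n'})$ and $\mu'=(\sigma'_{n'+1},\tau'_{n'+1}),(\sigma'_{n'+2},\tau'_{n'+2}),\ldots$. Then $\tau(\rho'')=\tau'_{n'}\le t'=t-s$ (and $\tau(\rho'')=0$ if $n'=0$), so Lemma~\ref{lemma_elGT}.\ref{lemma_elGT_case1} gives $\rho\cdot_{s}\rho''\in\GT_{\lat,\jit}(\rho^*,t)$, and Remark~\ref{remark_concat}.\ref{remark_concat_case3} (with $\rho$, $\mu$, and this $t'$) gives
\[
\rho\cdot_{s}\mu \;=\; \rho\cdot_{t-t'}\mu \;=\; (\rho\cdot_{t-t'}\rho'')\cdot_{t}(\mu'-t') \;=\; (\rho\cdot_{s}\rho'')\cdot_{t}(\mu'-t'),
\]
whose right-hand side lies in $W$ since $\mu'-t'$ is an infinite, time-divergent timed word. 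Hence $W=W^\strict$, and therefore $\evalfuncdstrict{\delays}{L}{\rho^*,t}=\evalfuncd{\delays}{L}{\rho^*,t}$.

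The conceptual heart of the argument is exactly the observation encoded in Lemma~\ref{lemma_elGT}: a (possibly longer-than-$\rho^*$) consistent ground-truth can be cut at position $m$ into an EL ground-truth followed by a block of events all sitting in the window $[\,t-(\lat+\jit),\,t\,]$, and conversely such a block can be reabsorbed from the extension $\mu$; the shift of the concatenation point from $t$ to $\max(\tau(\rho),t-(\lat+\jit))$ is precisely the amount of time occupied by that block. The main obstacle is therefore not the idea but the bookkeeping of the concatenation thresholds — in the first inclusion the inequality chain $\tau_m\le s\le\tau_{m+1}$ is the only place the third clause of consistency (the lower bound $t-(\lat+\jit)$ on the first not-yet-observed event) is actually used, and in the second inclusion one must choose the split point $n'$ of $\mu$ so that the resulting prefix has duration $\le t-s$, matching the hypothesis of Lemma~\ref{lemma_elGT}.\ref{lemma_elGT_case1}. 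Checking that the residual words $(\rho_2-s)\cdot_{t-s}\mu$ and $\mu'-t'$ are legitimate time-divergent infinite words, and disposing of the degenerate cases $m=n$ (empty $\rho_2$) and $n'=0$ (empty $\rho''$) via the stated conventions on $\tau_0$ and $\tau_{n+1}$, is routine. I do not expect any genuinely difficult step once Lemma~\ref{lemma_elGT} and Remark~\ref{remark_concat} are available.
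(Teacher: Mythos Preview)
Your proof is correct and follows essentially the same approach as the paper's, using Lemma~\ref{lemma_elGT} and Remark~\ref{remark_concat} to translate between the two notions of ground-truth. Your packaging via the set equality $W=W^\strict$ is somewhat cleaner than the paper's version, which proves the four verdict implications separately and performs an explicit case split on which term of $\max(\tau(\rho),t-(\lat+\jit))$ is active, whereas you handle this uniformly by working with $s$ symbolically.
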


\begin{proof}
Let $\evalfuncd{\delays}{L}{\rho^*,t} = \top$. 
We show $\evalfuncdstrict{\delays}{L}{\rho^*,t} = \top$ by proving that we have $\decorate{\rho} \cdot_{\max(\tau(\decorate{\rho}), t - (\lat + \jit))} \decorate{\mu} \in L$ for all $\decorate{\rho} \in \GT^\strict_{\lat,\jit}(\rho^*,t)$ for some $(\lat,\jit)\in \delays$ and all $\decorate{\mu} = (\sigma_1, \tau_1),(\sigma_2, \tau_2), \ldots \in \TSigma^\omega$.

First, consider the case where $\tau(\decorate{\rho}) < t - (\lat +\jit)$. 
Let $n$ be maximal with $\tau_n \le \lat+\jit$ (this is well-defined due to time-divergence), let $\decorate{\rho_1} = (\sigma_1, \tau_1), \ldots, (\sigma_n, \tau_n)$ and $\decorate{\mu_2} = ((\sigma_{n+1}, \tau_{n+1}),(\sigma_{n+2}, \tau_{n+2}), \ldots)- (\lat + \jit)$.
Note that $\decorate{\mu_2}$ is well-defined as $\tau_{n+1}$ is, by the choice of $n$, greater than $\lat + \jit$.
Then, Lemma~\ref{lemma_elGT}.\ref{lemma_elGT_case1} yields that $\decorate{\rho} \cdot_{t - (\lat + \jit)} \decorate{\rho_1}$ is in $\GT_{\lat,\jit}(\rho^*,t)$ and Remark~\ref{remark_concat}.\ref{remark_concat_case3} yields
\[
\decorate{\rho} \cdot_{\max(\tau(\decorate{\rho}), t - (\lat + \jit))} \decorate{\mu} = \decorate{\rho} \cdot_{ t - (\lat + \jit)} \decorate{\mu} = (\decorate{\rho} \cdot_{t - (\lat + \jit)} \decorate{\rho_1} ) \cdot_t \decorate{\mu_2}.
\]
Therefore, $\decorate{\rho} \cdot_{\max(\tau(\decorate{\rho}), t - (\lat + \jit))} \decorate{\mu}$ is the concatenation of the possible ground-truth~$(\decorate{\rho} \cdot_{t - (\lat + \jit)} \decorate{\rho_1} )$ of $\rho^*$ and the  suffix~$\decorate{\mu_2}$. As we have $\rho \cdot_t \mu \in L $ for all $\rho \in \GT_{\delays}(\rho^*,t)$ and all $ \mu \in \TSigma^{\omega}$ (due to $\evalfuncd{\delays}{L}{\rho^*,t} = \top$), we conclude $\decorate{\rho} \cdot_{\max(\tau(\decorate{\rho}), t - (\lat + \jit))} \decorate{\mu} \in L$ as required.

Now, consider the case where $\tau(\decorate{\rho}) \ge t - (\lat +\jit)$. 
Note that we have $t - \tau(\decorate{\rho}) \ge 0$ due to $\decorate{\rho} \in \GT^\strict_{\lat,\jit}(\rho^*,t)$.
Hence, let $n$ be maximal with $\tau_n \le t - \tau(\rho')$ (again, this is well-defined due to time-divergence), let $\decorate{\rho_1} = (\sigma_1, \tau_1) \cdots (\sigma_n, \tau_n)$ and $\decorate{\mu_2} = ((\sigma_{n+1}, \tau_{n+1})(\sigma_{n+2}, \tau_{n+2}) \cdots)- (t - \tau(\decorate{\rho}))$.
Again, $\decorate{\mu_2}$ is well-defined due to the choice of $n$.
Then, Lemma~\ref{lemma_elGT}.\ref{lemma_elGT_case1} yields that $\decorate{\rho} \cdot_{\tau(\decorate{\rho})} \decorate{\rho_1}$ is in $\GT_{\lat,\jit}(\rho^*,t)$ and Remark~\ref{remark_concat}.\ref{remark_concat_case3} yields 
\[
\decorate{\rho} \cdot_{\max(\tau(\decorate{\rho}), t - (\lat + \jit))} \decorate{\mu} = \decorate{\rho} \cdot_{\tau(\decorate{\rho})} \decorate{\mu} = (\decorate{\rho} \cdot_{\tau(\decorate{\rho})} \decorate{\rho_1} ) \cdot_t \decorate{\mu_2}.
\]
As $\decorate{\rho} \cdot_{\max(\tau(\decorate{\rho}), t - (\lat + \jit))} \decorate{\mu} $ is the concatenation of a possible ground-truth of $\rho^*$ and an arbitrary suffix, it is again, as required, in $L$.

Using a dual argument (i.e., swapping $\top$ with $\bot$ and $L$ with $\overline{L}$), we can show that $\evalfuncd{\delays}{L}{\rho^*,t} = \bot$ implies $\evalfuncdstrict{\delays}{L}{\rho^*,t} = \bot$.

Now, we show that $\evalfuncdstrict{\delays}{L}{\rho^*,t} = \top$ implies $\evalfuncd{\delays}{L}{\rho^*,t} = \top$. A dual argument again shows that $\evalfuncdstrict{\delays}{L}{\rho^*,t} = \bot$ implies $\evalfuncd{\delays}{L}{\rho^*,t} = \bot$. 
This will then complete our proof, as both functions only have three elements in their codomain and we have shown that two of them have the same preimage w.r.t.\ both functions.

So, let $\evalfuncdstrict{\delays}{L}{\rho^*,t} = \top$. 
We show $\evalfuncd{\delays}{L}{\rho^*,t} = \top$ by showing $\rho \cdot_t \mu \in L$ for all $\rho = (\sigma_1,\tau_1),\ldots,(\sigma_n, \tau_n) \in \GT_{\delays}(\rho^*,t)$ and all $ \mu = (\sigma_1', \tau_1'),(\sigma_2',\tau_2'),\ldots \in \TSigma^{\omega}$.
By definition, there is a $(\lat,\jit) \in\delays$ such that $\rho \in \GT_{\lat,\jit}(\rho^*,t)$.

Let $\rho^*$ have $m$ letters. 
If $m = n$, then we also have $\rho \in \GT^\strict_{\delays}(\rho^*,t)$. 
We consider two cases: If $\tau(\rho) < t-(\lat+\jit)$, then an application of Remark~\ref{remark_concat}.\ref{remark_concat_case1} yields
\[
\rho \cdot_t \mu 
=\rho \cdot_{t - (\lat + \jit)} (\mu + (\lat+\jit))=\rho \cdot_{\max(\tau({\rho}), t - (\lat + \jit))} (\mu + (\lat+\jit)), 
\]
and if $\tau(\rho) \ge t - (\lat+\jit)$, then an application of Remark~\ref{remark_concat}.\ref{remark_concat_case1} yields
\[
\rho \cdot_t \mu = \rho \cdot_{\tau(\rho)} (\mu+ (t - \tau(\rho)))= \rho \cdot_{\max(\tau({\rho}), t - (\lat + \jit))} (\mu+ (t - \tau(\rho))),
\]
where $t - \tau(\rho)$ is nonnegative by definition of consistency.
Hence, in both cases,  $\rho\cdot_t \mu$ is the concatenation of a possible EL ground-truth of $\rho^*$ and an arbitrary suffix.
Due to $\evalfuncdstrict{\delays}{L}{\rho^*,t} = \top$, all concatenations~$\rho \cdot_{\max(\tau(\rho), t - (\lat + \jit))} \mu$ for $(\lat,\jit) \in \delays$, $\rho \in \GT^\strict_{\lat,\jit}(\rho^*,t)$, and $ \mu \in \TSigma^{\omega}$ are in $L$, which yields $\rho\cdot_t \mu \in L$.

It remains to consider the case where $n > m$, which we again split into two subcases. But first let us define $\rho_1=(\sigma_1,\tau_1)\cdots(\sigma_m, \tau_m)$ as well as $\rho_2 = (\sigma_{m+1},\tau_{m+1}) \cdots (\sigma_n, \tau_n)$. Lemma~\ref{lemma_elGT}.\ref{lemma_elGT_case2} yields $\rho_1 \in \GT^\strict_{\delays}(\rho^*,t)$.

First, consider the subcase where $\tau(\rho_1) < t - (\lat +\jit)$. By definition of consistency, $n > m$ implies $\tau_{m+1} +\lat +\jit \ge t$, and thus $\tau_{m+1} \ge t - (\lat + \jit)$ ($\dagger$). 
Hence, an application of Remark~\ref{remark_concat}.\ref{remark_concat_case2} yields
\[
\rho\cdot_t \mu =  
\rho_1 \cdot_{t - (\lat +\jit)} [ (\rho_2- (t - (\lat+\jit))) \cdot_{\lat+\jit} \mu ]
=\rho_1 \cdot_{\max(\tau(\rho_1), t - (\lat + \jit))} [ (\rho_2- (t - (\lat+\jit))) \cdot_{\lat+\jit} \mu ].
\]
Note that the first time point of $\rho_2$, $\tau_{m+1} $, is at least $(t - (\lat+\jit))$ as required, as $\tau_{m+1} \ge t - (\lat + \jit)$ (see $\dagger$). 
Hence, $\rho\cdot_t \mu$ is the concatenation of a possible EL ground-truth of $\rho^*$ and an arbitrary suffix and therefore in $L$.

Finally, consider the subcase where $\tau(\rho_1) \ge t - (\lat +\jit)$.
An application of Remark~\ref{remark_concat}.\ref{remark_concat_case2} yields
\[
\rho\cdot_t \mu =
\rho_1 \cdot_{\tau(\rho_1)} [(\rho_2- \tau(\rho_1)) \cdot_{t - \tau(\rho_1)} \mu ]
= \rho_1 \cdot_{\max(\tau(\rho_1), t - (\lat + \jit))} [(\rho_2- \tau(\rho_1)) \cdot_{t - \tau(\rho_1)} \mu ].
\]
Again, this is well-defined as we have $\tau_{m+1} \ge \tau(\rho_1)$ (as $\tau_{m+1}$ is the next time instant after $\tau_{m}=\tau(\rho_1)$ in $\rho$) and as  $t \ge \tau(\rho_1)$ by the definition of consistency.
Hence, $\rho\cdot_t \mu$ is again the concatenation of a possible EL ground-truth of $\rho^*$ and an arbitrary suffix and therefore in $L$.
\end{proof}

Next, we show that we can indeed make the definition of $\evalfuncsymbolstrict$ effective using automata-theoretic constructions. 
First, we formally capture the set of states that can be reached by processing the possible EL ground-truths of an observation. Let $\aut$ be a \tba. 
We write $(q_0, v_0) \xrightarrow{\rho}_\aut (q_n, v_n)$ for a finite timed word~$\rho = (\sigma, \tau) \in \TSigma^*$ to denote the existence of a finite sequence of states~$(q_0, v_0) \transition{1} (q_1,v_1) \transition{2} \cdots \transition{n} (q_n,v_n)$
 of $\aut$ where for all $1 \leq i \leq n$ there is a transition $(q_{i-1},q_{i},\sigma_{i},\lambda_i,g_i)$ of $\aut$ such that $v_{i}(x) = 0$ for all $x$ in $\lambda_i$ and $v_{i-1}(x) + (t_i - t_{i-1})$ otherwise, and $g$ is satisfied by the valuation $v_{i-1}+(t_{i} - t_{i-1})$, where we use $t_0 = 0$. 
Given a \tba~$\aut$, a set~$\delays$ of delays, a finite observed timed word~$\rho^* \in \TSigma^*$, and $t \ge \tau(\rho^*)$, 
we define
\begin{align*}
\terminal{\aut}{\delays}{\rho^*,t} = \{ &(q, v + \max(0,(t - (\tau(\rho) + \lat +\jit)))) \mid (q_0, v_0) \xrightarrow{\rho}_\aut (q, v) \text{ where }\\
&\quad (q_0, v_0) \text{ with } q_0 \in Q_0 \text{, } v_0(x)=0 \text{ for all $x \in C$, and } 
\rho \in \GT^\strict_{\lat,\jit}(\rho^*,t) \text{ for some } (\lat,\jit)\in\delays\}.
\end{align*}
We call this the reach-set of $\rho^*$ in $\aut$ at $t$ w.r.t.\ $\delays$.

Next, we define the set of states of a \tba from where it is possible to reach an accepting location infinitely many times in the future, i.e., those states from which an accepting run is possible.
This is useful, because if processing a finite timed word leads to such a state, then the timed word can be extended to an infinite one in the language of the automaton, a notion that underlies the definitions of the verdict functions.
Given a \tba $\aut = (Q, Q_0, \Sigma, C, \Delta, \mathcal{F})$, the set of states with nonempty language is
\[\nonempty{\aut} = \{ (q,v) \mid q \in Q, v \in C \rightarrow \nnreals \text{ s.t. } \lang{\aut, (q,v)} \neq \emptyset \}.\]
The set~$\nonempty{\aut}$ can be computed using a zone-based fixpoint algorithm \cite{GrosenKLZ22}.
Using these definitions, we can give an \emph{effective} definition of the verdict functions, which we show to be equivalent to the previous definitions and implementable.

In the following definition, $\tbaty$ denotes the set of all \tba.
\begin{definition}[Monitoring \tba]\label{def:delayedmonitor-aut}
 Given a \tba~$\aut$, a complement automaton~$\compautomaton$ (i.e., with $\lang{\compautomaton} = \TSigma^\omega \setminus \lang{\aut}$), a set~$\delays$ of delays, a $\delays$-observation~$\rho^* \in \TSigma^*$, and $t \ge \tau(\rho)$, ${\monitordelayssymbol \colon \tbaty \times \tbaty \rightarrow \TSigma^* \times \nnreals \rightarrow \verdicts}$ computes the verdict
\[
\monitordelays{\aut, \compautomaton}{\rho^*, t} = \left.
  \begin{cases}
    \top & \text{if } \terminal{\compautomaton}{\delays}{\rho^*,t} \cap \nonempty{\compautomaton} = \emptyset, \\
    \bot & \text{if } \terminal{\aut}{\delays}{\rho^*,t} \cap \nonempty{\aut} = \emptyset, \\
    \unknown & \text{otherwise.}
  \end{cases}
  \right.
\]
$\monitordelays{\aut, \compautomaton}{\rho^*, t}$ is undefined if $t < \tau(\rho)$.
\end{definition}

Next we show that this automata-based definition of monitoring is equal to the verdict functions defined above.

\begin{theorem}
\label{thm:elverdictvsautverdict}
$\monitordelays{\aut,\compautomaton}{\rho^*,t} = \evalfuncdstrict{\delays}{\lang{\aut}}{\rho^*,t}$ for all sets~$\delays$ of delays, all \tba~$\aut$ (and complement automata~$\compautomaton$), all $\delays$-observations~$\rho^*$, and all $t \ge \tau(\rho^*)$.
\end{theorem}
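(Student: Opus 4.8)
The plan is to use that both $\monitordelayssymbol$ and $\evalfuncsymbolstrict$ take values in the three-element set $\verdicts$, so it suffices to prove that the two functions have the same preimage for $\top$ and the same preimage for $\bot$; the preimages for $\unknown$ then agree automatically. The technical core is a single \emph{reachability lemma}, to be proved for an arbitrary \tba $\mathcal{B}$: the set $\terminal{\mathcal{B}}{\delays}{\rho^*,t}$ intersects $\nonempty{\mathcal{B}}$ if and only if there exist $(\lat,\jit)\in\delays$, some $\rho\in\GT^\strict_{\lat,\jit}(\rho^*,t)$, and some $\mu\in\TSigma^\omega$ with $\rho\cdot_{\max(\tau(\rho),\,t-(\lat+\jit))}\mu\in\lang{\mathcal{B}}$. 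Granting this, I instantiate it with $\mathcal{B}=\compautomaton$: since $\lang{\compautomaton}=\TSigma^\omega\setminus\lang{\aut}$, the condition $\terminal{\compautomaton}{\delays}{\rho^*,t}\cap\nonempty{\compautomaton}=\emptyset$ becomes exactly the negation of that existential, i.e.\ exactly the $\top$-clause of Definition~\ref{def:delayedmonitor-el} for $L=\lang{\aut}$; instantiating with $\mathcal{B}=\aut$ turns $\terminal{\aut}{\delays}{\rho^*,t}\cap\nonempty{\aut}=\emptyset$ into the $\bot$-clause. Hence $\monitordelays{\aut,\compautomaton}{\rho^*,t}=\top$ iff $\evalfuncdstrict{\delays}{\lang{\aut}}{\rho^*,t}=\top$, and likewise for $\bot$, which finishes the proof.

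For the reachability lemma I first record the numerical identity that the clock shift $\max(0,\,t-(\tau(\rho)+\lat+\jit))$ appearing in the definition of $\terminal{\mathcal{B}}{\delays}{\rho^*,t}$ equals $s-\tau(\rho)$, where $s=\max(\tau(\rho),\,t-(\lat+\jit))$ is precisely the instant at which $\evalfuncsymbolstrict$ concatenates the ground-truth with the extension; note $s\ge\tau(\rho)$, so $\rho\cdot_s\mu$ is always well-defined. The lemma then follows from the standard decomposition of runs of a timed automaton: an accepting run of $\mathcal{B}$ over $\rho\cdot_s\mu$ from an initial all-zero state factors as (i) a run $(q_0,v_0)\xrightarrow{\rho}_{\mathcal{B}}(q,v)$, with $v$ the valuation at time $\tau(\rho)$, followed by (ii) a delay step of duration $s-\tau(\rho)$ reaching $(q,v+(s-\tau(\rho)))$, followed by (iii) a run over $\mu$ from that state; and conversely these three pieces compose back into a run over $\rho\cdot_s\mu$ (the accepting condition is unaffected since piece (i) is finite). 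Consequently $\rho\cdot_s\mu\in\lang{\mathcal{B}}$ holds iff there are an initial $(q_0,v_0)$ and a state $(q,v)$ with $(q_0,v_0)\xrightarrow{\rho}_{\mathcal{B}}(q,v)$ and $\mu\in\lang{\mathcal{B},(q,v+(s-\tau(\rho)))}$, i.e.\ iff $(q,v+(s-\tau(\rho)))\in\nonempty{\mathcal{B}}$. Quantifying over $(\lat,\jit)\in\delays$, over $\rho\in\GT^\strict_{\lat,\jit}(\rho^*,t)$, over the initial states, and over $\mu$ turns the right-hand side of this equivalence into ``$\terminal{\mathcal{B}}{\delays}{\rho^*,t}\cap\nonempty{\mathcal{B}}\neq\emptyset$'', which is the reachability lemma.

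I expect the only delicate point to be the bookkeeping of absolute versus relative time in piece (iii), because runs in the preliminaries are defined with the convention $\tau_0=0$: one must check that reading $\mu=(\sigma_1',\tau_1'),(\sigma_2',\tau_2'),\dots$ from $(q,v+(s-\tau(\rho)))$ produces the same valuations (and hence passes the same guards and performs the same resets) as the tail of the run over $\rho\cdot_s\mu$, whose $(|\rho|+1)$-th letter occurs at time $\tau_1'+s$; this reduces to the identity $\bigl(v+(s-\tau(\rho))\bigr)+\tau_1'=v+\bigl(\tau_1'+s-\tau(\rho)\bigr)$ for the first elapsed-time step and to preservation of the consecutive differences $\tau_{i+1}'-\tau_i'$ thereafter, all of which is routine. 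Well-definedness issues are immediate ($s\ge\tau(\rho)$), and $\GT^\strict_{\delays}(\rho^*,t)$ is nonempty by Remark~\ref{remark_gtnonempty} together with Lemma~\ref{lemma_elGT}, although this fact is not actually needed for the equality of the two verdict functions.
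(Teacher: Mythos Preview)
Your proposal is correct and follows essentially the same approach as the paper: both prove a single bidirectional ``reachability lemma'' for an arbitrary \tba (the paper calls it $\aut'$, you call it $\mathcal{B}$), instantiate it once with $\compautomaton$ for the $\top$ case and once with $\aut$ for the $\bot$ case, and conclude via the three-element codomain argument. Your explicit isolation of the numerical identity $\max(0,\,t-(\tau(\rho)+\lat+\jit))=s-\tau(\rho)$ with $s=\max(\tau(\rho),\,t-(\lat+\jit))$ is exactly the equality the paper uses when combining the finite and infinite run pieces, and your run-decomposition into prefix, delay, and suffix mirrors the paper's splitting and recombination of runs.
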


\begin{proof}
We will show that $\terminal{\aut'}{\delays}{\rho^*,t} \cap \nonempty{\aut'} $ is nonempty if and only if there exists a $\rho \in \GT^\strict_{\lat,\jit}(\rho^*,t)$ and a $\mu \in \TSigma^\omega$ with $\rho \cdot_{\max(\tau(\rho),(t - (\lat +\jit)))}\mu \in \lang{\aut'}$ for any \tba~$\aut'$.
Then we obtain 
\begin{itemize}
    \item $\monitordelays{\aut,\compautomaton}{\rho^*,t} = \top$ if and only if $\evalfuncdstrict{\delays}{\lang{\aut}}{\rho^*,t} = \top$ by instantiating the equivalence for $\aut' = \compautomaton$, and 
    \item $\monitordelays{\aut,\compautomaton}{\rho^*,t} = \bot$ if and only if $\evalfuncdstrict{\delays}{\lang{\aut}}{\rho^*,t} = \bot$ by instantiating the equivalence for $\aut' = \aut$.
\end{itemize}
This completes the proof, as both functions only have three elements in their codomain and we have shown that two of them have the same preimage  w.r.t.\ both functions.

So, let $\terminal{\aut'}{\delays}{\rho^*,t} \cap \nonempty{\aut'} \neq\emptyset$. Then, by definition, there is a state~$(q,v')$ of $\aut'$ such that
\begin{itemize}
    \item $(q_0, v_0) \xrightarrow{\rho}_{\aut'} (q, v)$ for some initial state~$(q_0, v_0)$ of $\aut'$, some $\rho \in \GT^\strict_{\lat,\jit}(\rho^*,t)$  for some $(\lat,\jit)\in\delays$, and $v' = v + \max(0,(t - (\tau(\rho) + \lat +\jit)))$, and
    \item  there is an accepting infinite run of $\aut'$ starting in $(q,v')$ that processes some $\mu \in \TSigma^\omega$.
\end{itemize}
These two runs can be combined into an accepting run of $\aut'$ that starts in $(q_0, v_0)$ and processes 
\[\rho \cdot_{\tau(\rho) + \max(0,(t - (\tau(\rho) + \lat +\jit)))} \mu = \rho \cdot_{\max(\tau(\rho),(t - (\lat +\jit)))} \mu,\]
which implies that it is in $\lang{\aut'}$ as required.

Conversely, let there be a $\rho \in  \GT^\strict_{\lat,\jit}(\rho^*,t)$ and a $\mu \in \TSigma^\omega$ with 
\[\mu \cdot_{\max(\tau(\rho),(t - (\lat +\jit)))} \mu \in \lang{\aut'}.\]
Then, there exists an accepting run of $\aut'$ starting in some initial state~$(q_0, v_0)$ that processes $\rho \cdot_{\max(\tau(\rho),(t - (\lat +\jit)))} \mu$. 
This run can be split into 
\begin{itemize}
      \item $(q_0, v_0) \xrightarrow{\rho}_{\aut'} (q, v)$ for some state~$(q,v)$ of $\aut'$ and
     \item  an accepting infinite run of $\aut'$ starting in $(q,v')$ that processes $\mu$, where 
     \[v'=  v + \max(0,(t - (\tau(\rho) + \lat +\jit))) .\] 
\end{itemize}
Hence, $(q, v') \in \terminal{\aut'}{\delays}{\rho^*,t} \cap \nonempty{\aut'}$, which is therefore, as required, nonempty. 
\end{proof}

Recall that $\nonempty{\aut}$ can be computed for any given \tba~$\aut$. Therefore, in the next section, we show how to calculate $\terminal{\aut}{\delays}{\rho^*,t}$ for a given \tba~$\aut$, set~$\delays$ of delays, observation~$\rho^*$, and time point~$t$ using a zone-based algorithm. This will then allow us to compute verdicts effectively.

\section{A Zone-Based Online Monitoring Algorithm}
\label{sec:algo}

In this section, we demonstrate how to compute the reach-set of $\rho^*$ in $\aut$ at $t$ w.r.t.\ $\delays$.
So far we have developed the theory with observations, latency, and jitter being reals.  Now, we are concerned with algorithms and thus assume all these quantities to be rationals. 
For the monitoring algorithm, we use -- as standard in analysing timed automata models -- symbolic states being pairs~$(q, Z)$ of locations and zones. A zone is a finite conjunction of constraints of the form~$x \sim t$ and $x - x' \sim t$ for clocks~$x,x'$, constants $t \in \nnrats$, and $\sim\ \in \{<, \leq, =, \geq, >\}$. 
Given two zones $Z$ and $Z'$ over a set $C$ of clocks, and a set of clocks $\lambda \subseteq C$, we define the following operations on zones (which can be efficiently implemented using the DBM data structure \cite{DBLP:conf/ac/BengtssonY03}):

\begin{itemize}
    %\item $\textit{free}_\lambda(Z) = \{v \mid \forall x \in C\text{ s.t. } v(x) \models Z \textit{ if } x \notin \lambda\}$
    %\item $Z_\textit{free} = \textit{free}_C(Z)$
    \item $Z[\lambda] = \{v \mid \exists v' \models Z \: \text{ s.t. }  v(x) = 0 \text{ if } x \in \lambda \text{, otherwise } v(x) = v'(x) \}$
    \item $Z^\nearrow = \{v \mid \exists v' \models Z \text{ s.t. }  v = v' + d  \text{ for some } d \in \nnreals\}$ 
    %\item $Z^\searrow = \{v \mid \exists v' \models Z \text{ s.t. }  v = v' - d \textit{ for some } d \in \nnreals\}$
    % \item $Z^{\nearrow_t} = \{v \mid \exists v' \in Z \text{ s.t. }  v = v' + t\}$
    %\item $Z^{\nearrow_I} = \{v \mid \exists v' \models Z \: \forall x \in C \text{ s.t. }  v'(x) + t_1 \le v(x) \le v'(x) + t_2\}$
    \item $Z \land Z' = \{v \mid v \models Z \textit{ and } v \models Z'\}$
    % \item $Z_0 = Z[C]$.
    %\item $Z_\textit{free} = \{v \mid \forall x \in C\text{ s.t. }  0 \le v(x) < \infty \}$
\end{itemize}

We can use these functions  to compute the successor states after an input. Given a \tba $\aut = (Q, Q_0, \Sigma, C, \Delta, \mathcal{F})$, a symbolic state $(q, Z)$, and a letter $a \in \Sigma$, we define \[
\post((q, Z), a) = \{ (q', Z') \mid (q,q',a,\lambda,g) \in \Delta, Z' = (Z^\nearrow \land g)[\lambda] \},\] as the set of states one can reach by taking an $a$-transition at some point in the future from $(q, Z)$. 
Using $\post$ we can compute the successor states of a timed input $(a, \tau) \in \Sigma \times \nnrats$ by extending the zones with an additional clock $time$ just recording time since system start. The set of successors of a symbolic state is 
\[\succc((q, Z), (a, \tau)) = \{(q', Z') \mid (q', Z'') \in \post((q, Z), a), Z' = Z'' \land time = \tau\}\] 
and the set of successors of a set of symbolic states~$S$ is \[\succc(S, (a, \tau)) = \bigcup_{(q', Z')\in S} \succc((q', Z'), (a, \tau)).\]

In handling delayed observations, we assume that the delay set has the form
\[\delays = \{(\lat, \jit) \mid \lat \in [\ell, u]\}\]
for given $l,u,\jit \in \nnrats$, i.e., the latency $\delta$ is bounded by an interval $[\ell, u] \subseteq \nnreals$ and the jitter is bounded by $\jit \in \nnrats$.

To represent the latency and thereby be able to reason about and indirectly store the latency bounds, we add a clock $etime$ representing the ``expected'' real time that an event generated just now could be observed by the monitor after having been delayed according to the latency. This allows us 
\begin{enumerate}
    \item to represent the actual latency as $etime-time$,
    \item to represent the initial knowledge about latencies by initializing $etime-time$ to the initially known bounds on latency, namely $etime-time \in [\ell,u]$ by setting $time$ to 0 and constraining $etime$ to $[\ell,u]$, and
    \item\label{item:refine} to refine our knowledge about the actual latency after having observed an event~$(\sigma^*, \tau^*)$ by then setting $etime$ to a value in $ [\tau^*-\jit,\tau^*]$.
\end{enumerate}
Consequently, we change the initial zones to include the latency bounds $\ell$ and $u$ as the differences between the clocks $etime$ and $time$. This way, $etime$ represents the expected time an event is observed at the monitor, given $\ell$ and $u$, and $time$ represents the actual time the event happened (at the system). The aforementioned refinement (see Item \ref{item:refine} above and Fig.\ \ref{fig:delay-annotated-zone-operations}) then permits to deduce actual latency ranges consistent with the specification (or its negation) from observation times of events.

In detail, this refinement of the $etime-time$ relation works as follows. 
Given a \tba $\aut$ extended with the clocks $time$ and $etime$, and an observation $(\sigma, \tau^*) \in \Sigma \times \nnrats$, the successors of $(q, Z)$ are 
\[
\succc_{d}((q, Z), (\sigma, \tau^*)) = \{(q', Z') \mid (q', Z'') \in \post((q, Z), \sigma), Z' = Z'' \land etime \leq \tau^*  \land etime \geq \tau^* - \jit\}\
\]
and the successors $\succc_{d}(S, (\sigma, \tau^*))$ of a set of symbolic states $S$ is equal to $\bigcup_{(q, Z)\in S} \succc_{d}((q, Z), (\sigma, \tau^*))$.

The online monitoring algorithm will essentially apply  $\succc_d$ repeatedly to update the reach-set, once for each new observation. 
Note that there is a slight mismatch, as $\succc_{d}$ is computed with the two auxiliary clocks $time$ and $etime$, which are not clocks of $\aut$.

The initial reach-set is given by the following zone $Z_0^{d}$ requiring all ordinary clocks of the TBA $\aut$ to be zero and with  $time$ and $etime$ satisfying  $etime - time \in [\ell, u]$. 
That is

\[
    Z_0^{d} \equiv
    \underbrace{\vphantom{\bigwedge_{ x \in C \cup \{\globaltime\}}}\etime - \globaltime \le u \land \globaltime - \etime \le -\ell\,}_{\etime-\globaltime \in [\ell,u]} \land 
     \underbrace{\bigwedge_{ x \in C \cup \{\globaltime\}} x = 0 }_{\text{$x_1,\ldots,x_{|C|} = 0$, $\globaltime = 0$}}.
\]

Given a fixed jitter bound $\jit$, we can now compute the reach-set after a sequence of observations under delay, where the latency is bounded in $[\ell,u]$.

\begin{lemma}
\label{lemma_impl}
 Given a \tba $\aut$, 
a delay set~$\delays = \{(\lat, \jit) \mid \lat \in [\ell, u]\}$ with $\ell, u, \jit \in \nnrats$,
a $\delays$-observation~$\rho^* = (\sigma_1, \tau^*_1),\ldots,(\sigma_n, \tau^*_n)$, and $t \in \nnrats $ with $t \geq \tau^*_n$, let
$S_0 = \{(q_0, Z_0^{d}) \mid q_0 \in Q_0\}$ and $S_i = \succc_d(S_{i-1}, (\sigma_i, \tau^*_i))$ for $i \in \{1,\ldots, n\}$.
Then, the reach-set $\terminal{\aut}{\delays}{\rho^*,t}$ is the projection of
 \[ \{(q', Z') \mid (q', Z'') \in S_n, Z' = Z''^\nearrow \land  \etime = t - \jit\}\]
 to the clocks of $\aut$ (obtained by removing all constraints on $time$ and $etime$).
\end{lemma}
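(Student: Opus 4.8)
The plan is to prove the claimed set equality in two stages: first, by induction on $i$, obtain a closed-form description of the symbolic states collected in $S_i$; then push that description through the concluding future-closure-and-intersection step and compare the result with the definition of $\terminal{\aut}{\delays}{\rho^*,t}$. What makes the bookkeeping work is the intended meaning of the two auxiliary clocks: along any run of the zone automaton $\globaltime$ records elapsed ground-truth time and $\etime$ records the time at which an event emitted ``now'' would reach the monitor under the latency, so that $\etime-\globaltime$ is invariant along the run and equals the (a priori only interval-bounded) latency $\lat$; the conjunction with $\etime\in[\tau^*_i-\jit,\tau^*_i]$ inside $\succc_d$ therefore only narrows the set of possible latencies, it never disturbs the relation $\etime=\globaltime+\lat$.

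The invariant I would establish is: for all $i\in\{0,\dots,n\}$ and all locations $q$, a valuation $w$ over $C\cup\{\globaltime,\etime\}$ belongs to a zone of $S_i$ at $q$ iff there are a finite timed word $\rho=(\sigma_1,\tau_1),\dots,(\sigma_i,\tau_i)$ and a latency $\lat\in[\ell,u]$ with $(q_0,v_0)\xrightarrow{\rho}_{\aut}(q,w\proj{C})$ for some $q_0\in Q_0$ and the all-zero $v_0$, such that $w(\globaltime)=\tau_i$, $w(\etime)=\tau_i+\lat$, and $\tau^*_j-\tau_j\in[\lat,\lat+\jit]$ for all $j\le i$. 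The base case just reads off $Z_0^{d}$: empty ground-truth, $\tau_0=0$, $\etime-\globaltime\in[\ell,u]$. For the step I would unfold $\succc_d(S_{i-1},(\sigma_i,\tau^*_i))$ as $\post(\cdot,\sigma_i)$ followed by intersection with $\etime\le\tau^*_i\wedge\etime\ge\tau^*_i-\jit$ and match the pieces: the future-closure inside $\post$ advances ground-truth time by some $\tau_i-\tau_{i-1}\ge 0$ (updating $\globaltime$ to $\tau_i$ and, since $\etime$ advances by the same amount, preserving $\etime-\globaltime=\lat$); conjunction with the guard $g$ and the reset $Z\mapsto Z[\lambda]$ reproduce the step relation of $\aut$ at time $\tau_i$; and, after substituting $\etime=\tau_i+\lat$, the last conjunction is exactly $\tau^*_i-\tau_i\in[\lat,\lat+\jit]$, i.e.\ clause~(2) of consistency for index $i$ (the letter equality $\sigma_i=\sigma^*_i$ holding by construction). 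Reading this chain in both directions gives the equivalence; that every intermediate set is genuinely a zone, so the DBM operations are legitimate, is the standard closure of zones under conjunction, future-closure and clock reset.

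Next I would propagate the description of $S_n$ through the final operation $Z'={Z''}^{\nearrow}\wedge\etime=t-\jit$. A valuation $w'$ survives it iff there are $w\in S_n$ and $d\ge 0$ with $w'=w+d$ and $w(\etime)+d=t-\jit$; writing $w(\etime)=\tau_n+\lat$ and $w(\globaltime)=\tau_n=\tau(\rho)$, this forces $d=t-(\tau(\rho)+\lat+\jit)$, requires $d\ge 0$, and yields $w'\proj{C}=v+(t-(\tau(\rho)+\lat+\jit))$ with $(q_0,v_0)\xrightarrow{\rho}_{\aut}(q,v)$. It then remains to check that ``$\rho$ and $\lat$ satisfy the $S_n$-invariant and $\tau^*_n\le t$'' is equivalent to ``$(\lat,\jit)\in\delays$ and $\rho\in\GT^{\strict}_{\lat,\jit}(\rho^*,t)$'': the non-obvious direction uses $\tau_1\ge 0$ to get $\tau^*_1\ge\lat$ (so $\rho^*$ is a $(\lat,\jit)$-observation), uses $\tau^*_n-\tau_n\ge\lat\ge 0$ to get $\tau_n\le\tau^*_n\le t$ (clause~(1) of consistency), and observes that clause~(3) is vacuous because $\rho$ has exactly $n=m$ letters; the converse is read off the shape of $\delays$ and the definition of EL-consistency. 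This already shows the symbolic output and $\terminal{\aut}{\delays}{\rho^*,t}$ agree on every pair $(\rho,\lat)$ with $t\ge\tau(\rho)+\lat+\jit$, since there $\max(0,t-(\tau(\rho)+\lat+\jit))$ equals $t-(\tau(\rho)+\lat+\jit)$.

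The main obstacle I anticipate is reconciling the $\max(0,\cdot)$ in $\terminal{\aut}{\delays}{\rho^*,t}$ with the equation $\etime=t-\jit$ in the regime $t<\tau^*_n+\jit$: there the reach-set additionally contains un-shifted states $(q,v)$ witnessed by pairs with $\tau(\rho)+\lat+\jit>t$, whereas the final symbolic operation can realise a zero shift only from a pair that is ``exactly on time'', $\tau(\rho)+\lat+\jit=t$. I would attack this by a normalisation argument exploiting that the delay set has the special shape $\delays=\{(\lat,\jit)\mid\lat\in[\ell,u]\}$: a ``late'' witness $(\rho,\lat)$ reaching $(q,v)$ satisfies $\tau(\rho)+\lat\le\tau^*_n\le t$, so one lowers the latency within $[\ell,u]$ and, where needed, moves the last ground-truth events within their jitter windows --- keeping the run, hence the reached valuation $v$, unchanged --- until the witness becomes exactly on time, checking throughout that each $\tau^*_j-\tau_j$ stays inside the new interval $[\lat',\lat'+\jit]$ and $\lat'$ stays in $[\ell,u]$. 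Making this normalisation go through is where the interaction of latency bounds, jitter, and the equal-length restriction is most intricate; once it is available, the two set descriptions coincide, which together with the reductions above proves the lemma.
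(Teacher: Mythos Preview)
Your inductive invariant and the treatment of the final future-closure step are essentially the paper's argument carried out with more care: the paper rewrites the reach-set in terms of extended valuations~$v^*_i$, defines intermediate sets~$\terminal{\aut}{\delays,j}{\rho^*,t}$, and then appeals to ``standard arguments for the correctness of symbolic exploration of timed automata'' for the induction, arriving at precisely your equation $d=(t-\jit)-w(\etime)$.

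You correctly isolate something the paper's proof passes over in silence: in the regime~$t<\tau^*_n+\jit$ there exist EL-consistent pairs~$(\rho,\lat)$ with $\tau(\rho)+\lat+\jit>t$, for which $\terminal{\aut}{\delays}{\rho^*,t}$ contributes the un-shifted state~$(q,v)$, whereas the symbolic construction only produces states with non-negative shift $d=t-(\tau(\rho)+\lat+\jit)\ge 0$. The paper's final line simply replaces $\max(0,t-(v^*_n(\etime)+\jit))$ by $(t-\jit)-v^*_n(\etime)$ without comment.

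Your normalisation argument for this gap, however, does not work as stated. If one keeps $\rho$ fixed and lowers the latency to $\lat'=t-\tau(\rho)-\jit$, the new upper bound~$\lat'+\jit=t-\tau_n$ is strictly below $\lat+\jit$, and any index~$j<n$ with $\tau^*_j-\tau_j$ close to $\lat+\jit$ will violate it; concretely, take $n=2$, $\jit=2$, $\tau^*_1=6$, $\tau^*_2=t=10$, $\lat=4$, $\tau_1=0$, $\tau_2=5$: then $\lat'=3$ but $\tau^*_1-\tau_1=6>\lat'+\jit=5$. The remedy you propose, moving ground-truth events within their jitter windows, is incompatible with ``keeping the run, hence the reached valuation~$v$, unchanged'': shifting any~$\tau_j$ alters the inter-event delays that determine~$v$ through the automaton's resets, and for a general \tba with arbitrary reset sets~$\lambda_i$ there is no way to compensate while landing on the same concrete valuation. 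So on this edge case both the paper's proof and your proposal remain incomplete; you have at least located the difficulty, but the sketch you give does not close it.
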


\begin{proof}
Given the form of $\delays$ we can rewrite the definition of the reach-set to
\begin{align*}
\terminal{\aut}{\delays}{\rho^*,t} = \{ &(q, v + \max(0,(t - (\tau(\rho) + \lat +\jit)))) \mid (q_0, v_0) \xrightarrow{\rho}_\aut (q, v) \text{ where }\\
&\quad (q_0, v_0) \text{ with } q_0 \in Q_0 \text{, } v_0(x)=0 \text{ for all $c \in C$, and }  
\rho \in \GT^\strict_{\lat,\jit}(\rho^*,t) \text{ for some }  \delta\in[\ell,u]\}.
\end{align*}
Now, extending the transition relation $\xrightarrow{\rho}_\aut$ to clock valuations over the extended set of clocks $C\cup\{time, etime\}$, we may further reformulate the reach-set as follows:
\begin{align*}
\terminal{\aut}{\delays}{\rho^*,t} = \{ &(q_n, v^* + \max(0,(t - (v^*_n(etime)  +\jit)))) \mid (q_0, v^*_0) \xrightarrow{\rho}_\aut (q_n, v^*_n) \text{ where }\\
& \quad(q_0, v^*_0) \text{ with } q_0 \in Q_0 \text{, } v^*_0(x)=0 \text{ for all $c \in C$, and }  \\
&\quad v^*_0(etime)-v^*_0(time)\in [\ell,u] \text{ and } 
 v^*_i(etime)\leq \tau^*_i \land v^*_i(etime)\geq \tau^*_i-\jit \land \sigma_i=\sigma^*_i \text{ for } i\in\{0,\ldots,n\} \}.
\end{align*}
A key observation for the correctness of the above reformulation,  is that the extended clocks $etime$ and $time$ are \emph{not} modified by the TBA $\aut$.  That is  $v^*_i(etime)-v^*_i(time)=v^*_0(etime)-v^*_0(time)\in [\ell,u]$ for all $i=\{0,\ldots,n\}$. 

Now let $\terminal{\aut}{\delays,j}{\rho^*,t}$ for $j\in\{0,\ldots,n\}$ be defined as follows:
\begin{align*}
\terminal{\aut}{\delays,j}{\rho^*,t} = \{ &(q_j, v^*_j) \mid (q_0, v^*_0) \xrightarrow{\rho}_\aut (q_j, v^*_j) \text{ where }\\
&\quad (q_0, v^*_0) \text{ with } q_0 \in Q_0 \text{, } v^*_0(x)=0 \text{ for all $c \in C$, and }  \\
&\quad v^*_0(etime)-v^*_0(time)\in [\ell,u] \text{ and }  v^*_i(etime)\leq \tau^*_i \land v^*_i(etime)\geq \tau^*_i-\jit \land \sigma_i=\sigma^*_i \text{ for } i\in\{0,\ldots,j\} \}.
\end{align*}
Then clearly $\terminal{\aut}{\delays,0}{\rho^*,t}=\{(q_0, Z_0^{d}) \mid q_0 \in Q_0\}=S_0$ and 
$\terminal{\aut}{\delays,j}{\rho^*,t} = \succc_d(\terminal{\aut}{\delays,j-1}{\rho^*,t}, (\sigma_i, \tau_i))$ for $j \in \{1,\ldots, n\}$, using standard arguments for the correctness of symbolic exploration of timed automata (see, e.g., \cite{DBLP:conf/ac/BengtssonY03}). Finally, as $(t - (v^*_n(etime)  +\jit)))=((t-\jit)- v^*_n(etime))$ it follows that $\terminal{\aut}{\delays}{\rho^*,t}= \terminal{\aut}{\delays,n}{\rho^*,t}^\nearrow \land  \etime = t - \jit$.
\end{proof}

This lemma allows us to implement a monitoring algorithm by computing the reach-sets and intersecting them with the set of nonempty language states as described in Definition~\ref{def:delayedmonitor-aut} and proven correct in Theorem~\ref{thm:elverdictvsautverdict}.

\begin{theorem}
The function~$\evalfuncsymbol_{\delays}(\lang{\aut})$ is effectively computable for specifications given by \tba\ $\aut$ and $\overline{\aut}$ (for the complement), and $\delays = \{(\lat, \jit) \mid \lat \in [\ell, u]\}$ for $\ell, u, \jit \in \nnrats$.
\end{theorem}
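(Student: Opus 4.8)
The plan is to assemble the claim directly from the machinery of Sections~\ref{sec:towards} and~\ref{sec:algo}, so no genuinely new argument is needed. Fix a \tba~$\aut$, a complement automaton~$\compautomaton$ with $\lang{\compautomaton} = \TSigma^\omega \setminus \lang{\aut}$, and a delay set $\delays = \{(\lat,\jit)\mid \lat \in [\ell,u]\}$ with $\ell,u,\jit \in \nnrats$. First, by Lemma~\ref{lemma:verdictvselverdict} we have $\evalfuncd{\delays}{\lang{\aut}}{\rho^*,t} = \evalfuncdstrict{\delays}{\lang{\aut}}{\rho^*,t}$, and by Theorem~\ref{thm:elverdictvsautverdict} the latter equals $\monitordelays{\aut,\compautomaton}{\rho^*,t}$. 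Hence it suffices to show that $\monitordelays{\aut,\compautomaton}{\rho^*,t}$ is effectively computable from $\aut$, $\compautomaton$, $\delays$, $\rho^*$, and $t$, and by Definition~\ref{def:delayedmonitor-aut} this in turn reduces to deciding emptiness of the two intersections $\terminal{\compautomaton}{\delays}{\rho^*,t}\cap \nonempty{\compautomaton}$ and $\terminal{\aut}{\delays}{\rho^*,t}\cap\nonempty{\aut}$ and reading off the corresponding case.

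The second step is to argue that all four ingredients are representable as finite unions of zones and that the needed operations on such representations are effective. The sets $\nonempty{\aut}$ and $\nonempty{\compautomaton}$ are computed by the zone-based fixpoint algorithm of Grosen et al.~\cite{GrosenKLZ22}, producing for each location a finite union of zones over the clocks of the respective automaton. The reach-sets $\terminal{\aut}{\delays}{\rho^*,t}$ and $\terminal{\compautomaton}{\delays}{\rho^*,t}$ are delivered by Lemma~\ref{lemma_impl}: starting from the finite set $S_0 = \{(q_0,Z_0^{d})\mid q_0\in Q_0\}$ over the extended clock set $C\cup\{\globaltime,\etime\}$, one iterates the operator $\succc_d$ once per observed letter $(\sigma_i,\tau_i^*)$, then applies future-time closure and conjoins the constraint $\etime = t-\jit$, and finally projects away the auxiliary clocks $\globaltime$ and $\etime$. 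Since $Z[\lambda]$, $Z^\nearrow$, conjunction with a clock constraint, and clock elimination are all standard zone operations implementable on DBMs~\cite{DBLP:conf/ac/BengtssonY03}, every intermediate symbolic state stays a zone, each $S_i$ remains a finite set of symbolic states, and the whole pipeline terminates because $\rho^*$ has finitely many letters.

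The third step is routine: the intersection of a reach-set (a finite set of zones over $C$) with $\nonempty{\aut'}$ (a finite union of zones over $C$, indexed by location) is again a finite union of zones, obtained location-wise by pairwise conjunction, and emptiness of a zone is decidable by a DBM consistency check. Performing the two emptiness tests and selecting the matching case of Definition~\ref{def:delayedmonitor-aut} yields $\monitordelays{\aut,\compautomaton}{\rho^*,t}$, hence $\evalfuncd{\delays}{\lang{\aut}}{\rho^*,t}$, which establishes effective computability of $\evalfuncsymbol_{\delays}(\lang{\aut})$.

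There is no deep obstacle here; the point requiring care is purely the bookkeeping around the two auxiliary clocks. One must confirm, as used in Lemma~\ref{lemma_impl}, that $\succc_d$ never resets or constrains $\globaltime$ and never resets $\etime$, so that $\etime-\globaltime$ faithfully tracks the admissible latency range throughout the symbolic run, and that the final projection of $\globaltime$ and $\etime$ onto the clocks of $\aut$ is sound, i.e.\ that the resulting zone over $C$ is exactly the set of valuations of the form $v + \max(0,\,t-(\tau(\rho)+\lat+\jit))$ as $\rho$ ranges over the EL-consistent ground-truths. Lemma~\ref{lemma_impl} already proves that equality, so the remaining work is just the observation that each operation in the pipeline lies inside the effective zone calculus, which the preceding paragraphs supply.
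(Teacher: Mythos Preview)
Your proposal is correct and follows exactly the approach the paper takes: the theorem is stated as an immediate corollary of Lemma~\ref{lemma_impl} together with Theorem~\ref{thm:elverdictvsautverdict} (and implicitly Lemma~\ref{lemma:verdictvselverdict}), combined with the zone-based computability of $\nonempty{\aut}$ from~\cite{GrosenKLZ22}. You simply spell out in more detail what the paper leaves as a one-sentence remark preceding the theorem.
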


The observation of events may lead to refinement of the difference between $\globaltime$ and $\etime$ as depicted in Fig.~\ref{fig:delay-annotated-zone-operations}. 
This captures the definition of the sets of consistent delays from Section~\ref{sec:MonitoringDelay}.

\begin{lemma}
    Given $\aut$, $\delays$, $\rho^*$, $t$, and $S_n$ as in Lemma~\ref{lemma_impl}, we can compute the set of consistent delays by looking at the bounds on $etime - time$:  $\Delta_\delays( L(\aut), \rho^*, t) = \{(\lat, \jit) \in \delays \mid S_n \models etime - time = \lat\}$.
\end{lemma}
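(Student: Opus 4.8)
The plan is to reduce the statement to the characterization of the reach-set already established in Lemma~\ref{lemma_impl}, together with the definition of $\Delta_\delays$ from Section~\ref{sec:MonitoringDelay}. Recall that by definition $\Delta_\delays(L(\aut),\rho^*,t) = \{(\lat,\jit) \in \delays \mid \exists \rho \in \GT^\strict_{\lat,\jit}(\rho^*,t)\,\exists \mu \in \TSigma^\omega \text{ s.t. }\rho\cdot_{\max(\tau(\rho),t-(\lat+\jit))}\mu \in L(\aut)\}$, where I have already switched to the EL-formulation, which is legitimate by Lemma~\ref{lemma:verdictvselverdict} and the proof of Theorem~\ref{thm:elverdictvsautverdict} (the latter shows that such a $\rho,\mu$ exist precisely when $\terminal{\aut}{\{(\lat,\jit)\}}{\rho^*,t} \cap \nonempty{\aut} \neq \emptyset$). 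Since $\delays$ here fixes $\jit$ and only varies $\lat \in [\ell,u]$, a delay $(\lat,\jit) \in \delays$ lies in $\Delta_\delays(L(\aut),\rho^*,t)$ iff the part of the reach-set arising from ground-truths with this particular $\lat$ meets $\nonempty{\aut}$.

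First I would make precise the ``projection'' bookkeeping: in the construction of Lemma~\ref{lemma_impl} every symbolic state in $S_n$ (and in its time-successor intersected with $\etime = t-\jit$) carries, via the untouched clocks $\globaltime$ and $\etime$, the invariant $v(\etime) - v(\globaltime) = \lat$ for exactly those latencies $\lat$ consistent with the observed times $\tau^*_1,\dots,\tau^*_n$ under jitter $\jit$ — this is the ``key observation'' already used in that proof, namely that $\aut$ never modifies $\globaltime$ or $\etime$, so $v^*_i(\etime)-v^*_i(\globaltime) = v^*_0(\etime)-v^*_0(\globaltime) \in [\ell,u]$ throughout. I would then partition the pre-projection zone set by the value $\lat$ of $\etime - \globaltime$: for a fixed $\lat \in [\ell,u]$, the slice $\{(q,Z) : (q, Z \land \etime - \globaltime = \lat)\}$ is exactly $\terminal{\aut}{\{(\lat,\jit)\}}{\rho^*,t}$ after projecting away $\globaltime,\etime$, by the same symbolic-exploration correctness argument as in Lemma~\ref{lemma_impl} specialized to the singleton interval $[\lat,\lat]$.

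Putting the pieces together: $(\lat,\jit) \in \Delta_\delays(L(\aut),\rho^*,t)$ iff $\terminal{\aut}{\{(\lat,\jit)\}}{\rho^*,t}\cap\nonempty{\aut}\neq\emptyset$ (by Theorem~\ref{thm:elverdictvsautverdict}, instantiated at the singleton delay set), iff the $\lat$-slice of the projected reach-set $\terminal{\aut}{\delays}{\rho^*,t}$ meets $\nonempty{\aut}$, iff $S_n$ (equivalently its time-successor at $\etime=t-\jit$) contains a symbolic state $(q,Z)$ with $q$-component reaching $\nonempty{\aut}$ and with $Z \models \etime - \globaltime = \lat$ — which is what the claimed equality asserts once one observes that in the algorithm we only keep states whose location/zone part is actually in $\nonempty{\aut}$, or equivalently reads off the $etime-time$ bounds on the non-empty portion. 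I would phrase the final identity as $S_n \models \etime - \globaltime = \lat$ meaning ``$S_n$ contains a symbolic state $(q,Z)$ with $(q,Z) \cap \nonempty{\aut} \ne \emptyset$ (after the time-successor step and projection) and $Z$ consistent with $\etime - \globaltime = \lat$'', i.e.\ the algorithm reports exactly the set of $\lat$ such that this holds.

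The main obstacle is the slightly informal notation ``$S_n \models \etime - \globaltime = \lat$'': I need to be careful that $\Delta_\delays$ requires the existence of an accepting \emph{infinite} continuation, so it is not enough that some latency survives into $S_n$ — it must survive into a symbolic state whose projection lies in $\nonempty{\aut}$ after the final time-successor and the intersection $\etime = t-\jit$. Concretely, the cleanest route is: compute the projected reach-set $R = \terminal{\aut}{\delays}{\rho^*,t}$ exactly as in Lemma~\ref{lemma_impl}, but \emph{before} projecting away $\globaltime,\etime$ intersect with $\nonempty{\aut}$ (extended trivially to ignore the two auxiliary clocks); then $\Delta_\delays(L(\aut),\rho^*,t)$ is precisely the set of values of $\etime - \globaltime$ attained in the resulting zone, and since that zone is convex this set is an interval whose endpoints are read directly off the DBM entries for $\etime - \globaltime$ and $\globaltime - \etime$. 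The remaining verification — that each individual latency value in that interval is genuinely witnessed by an EL-consistent ground-truth and an accepting extension, not merely by the convex hull — follows because every DBM/zone is the set of valuations of an actual run prefix: any point of the zone, in particular any point with a prescribed $\etime - \globaltime$ value, corresponds to a concrete valuation reachable by a concrete choice of observation-consistent timestamps, hence to a concrete $\lat$ and a concrete $\rho \in \GT^\strict_{\lat,\jit}(\rho^*,t)$; the accepting extension $\mu$ then exists because the location/zone part was chosen in $\nonempty{\aut}$. This is the step I would write out most carefully; everything else is direct unfolding of definitions and re-use of Lemma~\ref{lemma_impl} and Theorem~\ref{thm:elverdictvsautverdict}.
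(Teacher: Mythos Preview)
The paper states this lemma without proof; it is presented as an observation following directly from the construction in Lemma~\ref{lemma_impl} and the accompanying figure. Your argument is therefore more detailed than anything the paper offers, and the overall strategy---instantiate the reach-set construction at a singleton latency, use the invariant that $\etime-\globaltime$ is never modified by $\aut$, and appeal to the equivalence established in the proof of Theorem~\ref{thm:elverdictvsautverdict}---is exactly the intended justification.

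You are also right to flag the informality in ``$S_n \models \etime - \globaltime = \lat$'': as written, the lemma does not mention intersecting with $\nonempty{\aut}$, yet $\Delta_\delays(L(\aut),\rho^*,t)$ by definition requires an accepting continuation. The paper's own Example~\ref{ex:method} confirms your reading: there the latency interval for the positive verdict is extracted from the symbolic state at location $\varphi$ (which lies in $\nonempty{\aut_\varphi}$), while the state at location $\neg\varphi$ in the same $S_2$ is ignored. So your proposed fix---intersect with $\nonempty{\aut}$ before reading off the $\etime-\globaltime$ bounds---is what the paper actually does in practice, and your proof is correct once the statement is understood this way. The final step you single out (that every value of $\etime-\globaltime$ in the resulting zone is witnessed by a concrete ground-truth, not just by the convex hull) is indeed the only non-routine point, and your justification via realizability of zone points is the standard one.
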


This information can be used to decorate the $\unknown$ verdict, so that we can report a set of bounds on the latency for which we would provide a $\top$ or $\bot$ verdict.

\begin{figure*}
    \centering
    \scalebox{.97}{
    \resizebox{\linewidth}{!} {
    \begin{tikzpicture}
        \tikzmath{\y=5.5; \x=4; \l=0.3; \u=1.5; \enabledlow=1; \enabledup=3;}
        %\draw[gray, very thin] (0,0) grid (5, 5);
        \filldraw[black!20] (\enabledlow,0) rectangle (\enabledup,\y);

        \draw[very thick, ->] (0,0) -- (0,\y) node[below right]{etime};
        \draw[very thick, ->] (0,0) -- (\x,0) node[below left]{time};
        
        \draw[ultra thick, black!20!green] (0,\l) -- (0,\u);
        \draw (0,\u) node[right]{$u$};
        \draw (0,\l) node[right]{$l$};
        
        \draw[dashed] (0,\u) -- (\x,\x+\u);
        \draw[dashed] (0,\l) -- (\x,\x+\l);
        
        \draw (2,0.25) node[]{$\sigma$ enabled};
        \draw[dashed, latex reversed-latex reversed] (\enabledlow,0.5) -- (\enabledup,0.5);
        
        \draw[thick, blue, fill=blue!70!green!20] (\enabledlow, \enabledlow+\l) -- (\enabledup, \enabledup+\l) -- (\enabledup, \enabledup+\u) -- (\enabledlow, \enabledlow+\u) -- cycle;
        
        \draw[very thin] (2, 2+\u) -- (1.7, 4.3);
        \draw[very thin] (2, 2+\l) -- (2.3, 1.4);

        \draw (1.5,4.5) node[]{$etime - time \leq u$};
        \draw (2.5,1.2) node[]{$time - etime \leq -l$};
        
    \end{tikzpicture}
% \hspace{10pt}
%     \begin{tikzpicture}
%         \tikzmath{\y=5.5; \x=4; \l=0.3; \u=1.5; \enabledlow=1; \enabledup=3; \t=3.7;}
%         %\draw[gray, very thin] (0,0) grid (5, 5);

%         \draw[very thick, ->] (0,0) -- (0,\y) node[below right]{etime};
%         \draw[very thick, ->] (0,0) -- (\x,0) node[below left]{time};

%         \draw[thick, blue, fill=blue!70!green!20] (\enabledlow, \enabledlow+\l) -- (\enabledup, \enabledup+\l) -- (\enabledup, \enabledup+\u) -- (\enabledlow, \enabledlow+\u) -- cycle;

%         \draw[dashed] (0,\t) node[above right]{$t$} -- (\x, \t);

%         \draw[dashed, red] (0,\enabledup+\u) -- (\enabledup, \enabledup+\u) node[above left]{max};
%         \draw[dashed, ->, red] (\enabledlow+1,\enabledup+\u) -- (\enabledlow+1, \enabledup+\u+0.5);

%     \end{tikzpicture}
\hspace{10pt}
    \begin{tikzpicture}
        \tikzmath{\y=5.5; \x=4; \l=0.3; \u=1.5; \enabledlow=1; \enabledup=3; \t=3.9; \e=0.2;}
        %\draw[gray, very thin] (0,0) grid (5, 5);

        \draw[very thick, ->] (0,0) -- (0,\y) node[below right]{etime};
        \draw[very thick, ->] (0,0) -- (\x,0) node[below left]{time};

        \draw[thick, blue, fill=blue!70!green!20] (\enabledlow, \enabledlow+\l) -- (\enabledup, \enabledup+\l) -- (\enabledup, \enabledup+\u) -- (\enabledlow, \enabledlow+\u) -- cycle;
        
        %\draw[dashed] (0,\t) node[left]{$\tau$} node[above right]{$\tau \leq etime \leq \tau$} -- (\x, \t);
        
        \draw[dashed] (0,\t+\e) node[left]{$\tau$} node[above right=0.5]{$etime \in [\tau-\jit, \tau]$} -- (\x, \t+\e);
        \draw[dashed] (0,\t-\e) node[left]{$\tau-\jit$} -- (\x, \t-\e);
        \draw[very thick, red, fill=red!50] (\t-\u-\e,\t-\e) -- (\enabledup,\t-\e) -- (\enabledup, \t+\e) -- (\t+\e-\u, \t+\e) -- cycle;

        \draw[dashed] (\t-\u-\e,\t-\e) -- (\t-\u-\e, 0.8);
        \draw[dashed] (\enabledup,\t-\e) -- (\enabledup, 0.8);

        \draw[dashed] (\t-\u-\e,0.5) -- (\t-\u-\e, 0);
        \draw[dashed] (\enabledup,0.5) -- (\enabledup, 0);

        \draw[dashed, latex reversed-latex reversed] (\t-\u-\e,1) -- (\enabledup,1);
        \draw (\t-\u-\e+0.4, 0.7) node[]{Event at sys-time};

    \end{tikzpicture}
    \hspace{10pt}
    \begin{tikzpicture}
        \tikzmath{\y=5.5; \x=4; \l=0.3; \u=1.5; \enabledlow=1; \enabledup=3; \t=3.9; \e=0.2;}
        %\draw[gray, very thin] (0,0) grid (5, 5);

        \draw[very thick, ->] (0,0) -- (0,\y) node[below right]{etime};
        \draw[very thick, ->] (0,0) -- (\x,0) node[below left]{time};
        
        \draw[dashed] (\enabledlow, \enabledlow+\l) -- (\enabledup, \enabledup+\l) -- (\enabledup, \enabledup+\u) -- (\enabledlow, \enabledlow+\u) -- cycle;

        \draw[dashed] (0,\t-\e) node[left]{$\tau-\jit$} -- (\x, \t-\e);

        \draw[thick, black!60!green, fill=green!20] (\t-\u-\e,\t-\e) -- (\enabledup,\t-\e) -- (\x,\x-\enabledup+\t-\e) -- (\x, \x+\u) -- cycle;

    \end{tikzpicture}
}}
    \caption{Illustration of a single zone in the $\succc_{d}$ computation (only $\globaltime$-$\etime$ plane depicted). Left: initial zone (in green) is diagonally extrapolated for time passage and then intersected with the guard of an edge. Middle: observing event $\sigma$ at time $\tau$. By restricting $etime$ to $[\tau-\jit, \tau]$, the clock $time$ is restricted to when the event could have occurred at the system. Right: computing the future zone we see that the bound on $time - etime$ is now stricter and thus the bounds for the consistent latencies are refined.}
    \label{fig:delay-annotated-zone-operations}
\end{figure*}

\begin{example}\label{ex:method}
Let us show an example of our algorithm for monitoring under delayed observation. 
Note that, for the sake of readability, we use sets of clock constraints instead of conjunctions of clock constraints when specifying zones.

Consider the property~$\varphi = F_{[0,10]}a \land G_{[0,20]} \neg b$ from Fig.~\ref{fig:example1}. The \tba\ accepting $\lang{\varphi}$ and $\lang{\neg\varphi}$ are shown in Fig.~\ref{fig:aut-example}.
The nonempty language states for $\aut_\varphi$ and $\aut_{\neg\varphi}$ are 
$\nonempty{\aut_\varphi} = \{(q_0, \{x \le 10\}), (q_1, \true), (\varphi, \true)\}$ and
$\nonempty{\aut_{\neg\varphi}} = \{(q_0, \true), (q_1, \{x \le 20\}), (\neg\varphi, \true)\}$.
Let us assume the latency is between 0 and 10, and the jitter is bounded by $0.2$. 
Now we compute the reach-sets $S_0$ (initial), $S_1$ (after $(a, 17.3)$), and $S_2$ (after $(b, 27.5)$) as
\begin{align*}
    S_0 = \{&(q_0, \{x = 0,\ etime \leq 10,\ (etime - x) \in [0, 10]\})\},\\
    S_1 = \{&(q_1, \{x \in [7.1, 10],\ etime \in [17.1, 17.3],\ (etime - x) \in [7.1, 10]\}),\\ &(\neg\varphi, \{x \in [10, 17.3],\ etime \in [17.1, 17.3],\ (etime - x) \le 7.3\})\}\text{, and}\\
    S_2 = \{&(\varphi, \{x \in (20, 20.4], etime \in [27.3, 27.5),\ (etime - x) \in [7.1, 7.5]\}),\\
    & (\neg\varphi, \{x \in [17.3, 27.5],\ etime \in [27.3, 27.5],\ (etime - x) \in [0, 10]\})\}.
\end{align*}
Note that we omit the clock $time$ and only look at $x$ and $etime$ since $time$ and $x$ always have the same constraints.

All reach-sets intersect with both sets of nonempty language states; thus, the verdict is $\unknown$. However, we can refine this verdict with knowledge about the consistent delays that change after each observation.
The jitter bound is fixed at $0.2$, but the bounds on the latency can be found in the clock constraints on the difference between $etime$ and $x$.
For $\bot$, the latency range remains $[0,10]$ in all reach-sets. For $\top$,
the consistent latency range is $[0, 10]$ in $S_0$, $[7.1, 10]$ in $S_1$, and it is $[7.1, 7.5)$ in $S_2$. This means that if the latency is outside $[7.1, 7.5)$, then the verdict is $\bot$.

On the other hand, for the observation~$\rho^*=(a, 17.3), (b, 27.1)$ from Example~\ref{example:verdicts} (and using the same latency and jitter bounds as above), we compute the reach-sets~$S_0$, $S_1$, and $S_2'$ where
\[
S_2' = \{(\neg\varphi, \{x \in [16.9, 27.1], etime \in [26.9, 27.1], (etime - x) \in [0, 10]\})\}
.
 \]
As $S_2'$ has an  empty intersection with $\nonempty{\aut_\varphi}$, the verdict is $\bot$.
\end{example}

%%%%%%%%%%%%%%%%%%%%%%%%%%%%%%%%%%%%%%%%%%%%%%%%%%%%%%%%%%%%%%%%%%%%%%%%%%%
%%%%%%%%%%%%%%%%%%%%%%%%%%%%%%%%%%%%%%%%%%%%%%%%%%%%%%%%%%%%%%%%%%%%%%%%%%%
%%%%%%%%%%%%%%%%%%%%%%%%%%%%%%%%%%%%%%%%%%%%%%%%%%%%%%%%%%%%%%%%%%%%%%%%%%%
\section{Testing under Delay}
\label{sec:testing}

After having tackled the problem of monitoring under delay, we now consider the problem of testing under delay.
Testing is distinguished from passive monitoring by providing the system under observation with stimuli that are actively generated by the test harness. We therefore assume that the event alphabet~$\Sigma$ of our system is partitioned into an input alphabet~$\Sigma_I$ and an output alphabet~$\Sigma_O$ (where we take the point of view of the system under observation, i.e., events in $\Sigma_I$ are inputs to the system and events in $\Sigma_O$ are outputs of the system). 
The output channel behaves as in the case of passive monitoring, namely delivering messages from the system under test to the monitor with a delay given by a latency~$\lat_O$ and a jitter~$\jit_O$. The  input channel behaves symmetrically in that it delivers messages from the monitor to the system under test with again a (possibly different) delay given by a latency~$\lat_I$ and a jitter~$\jit_I$.
Our obligation then is to check satisfaction or violation of a specification $L$ over $\Sigma$ by a system while providing stimuli (from $\Sigma_I$) over a delayed input channel and observing reactions (from $\Sigma_O$) over a delayed output channel. Both input delay and output delay are unknown up to given bounds. 

Figure~\ref{fig:testing} illustrates the concept where inputs are observed by the test monitor before they actually arrive at the system under test, and outputs are observed by the monitor later than they are generated by the system under test.
Due to the uncertainty about delays in the two directions, it is obvious that the observed order of events may be different from the order in which the events occurred on the system side. For example, in Figure~\ref{fig:testing}, if $\lat_I = \lat_O = 10$ and $\jit_I = \jit_O = 1$ then the input event~$a$ would arrive at the system under test between $16.0$ and $17.0$, thus \emph{after} the output event~$b$ observed at $19.1$, which would have to be generated between $8.1$ and $9.1$ to be observed at $19.1$.
\begin{figure*}
    \centering
    \includegraphics[width=.75\linewidth]{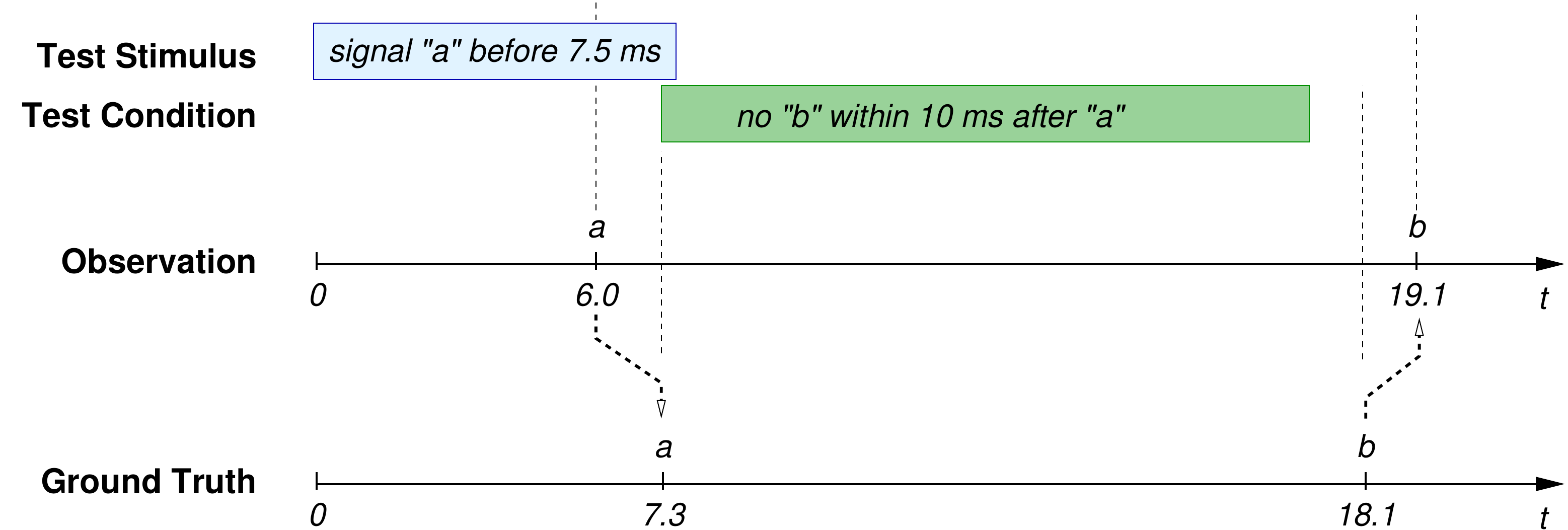}
    \caption{Executing real-time tests via delayed channels: both the stimuli and the responses are mediated via delayed channels. Input and output delays to/from the test harness are unknown up to bounds. Assuming latency bounded by $\infty > \lat_I \ge \lat_O \ge 0$ and jitter of at most $0.05$, a definite test verdict to the test specification $\left(F_{[0,7.5)}a\right) \rightarrow \left((\neg a) U (a \land G_{[0,10]}\neg b)\right)$, where $a$ is an input to the system under test and $b$ a response, is to be issued at time 19.1. Its value is $\top$, i.e., ``passed'', as either $\lat_I \ge 1.5$, in which case the antecedent of the implication is violated, or $1.5 > \lat_I \ge \lat_O$, in which case $a$ was received strictly before $6.0+1.5+0.05 = 7.55$ while the observed $b$ happened strictly after $19.1-1.5-0.05 = 17.55$.}
    \label{fig:testing}
\end{figure*}
In our definitions, we will admit such overtaking between messages in the input channels and the output channels, but will ---for simplicity of the exposition--- assume in-order delivery of input messages, as well as in-order delivery of output messages (see Footnote~\ref{footnote:overtake}), but will later restrict the setting when presenting our algorithm implementing testing under delay.

Due to overtaking only being allowed between inputs and outputs, the ground-truth corresponding to a monitor-side observation can be defined in terms of the two projections of the observed timed trace to the inputs and to the outputs, respectively.
To simplify our notation, given a (finite or infinite) timed word~$\rho$ over $\Sigma$, we denote its $\Sigma_I$ projection, i.e., the subsequence of pairs in $\Sigma_I \times \nnreals$, as $\rho\proj{I}$.
Similarly, we denote $\rho$'s $\Sigma_O$ projection as $\rho\proj{O}$.
Both projections are timed words over the respective alphabet.
% To simplify our notation, we use the indicator function~$\mathrm{ind} \colon \Sigma \rightarrow \{I,O\}$ defined as follows: $\ind{\sigma} = I$ if $\sigma \in \Sigma_I$, and $\ind{\sigma} = O$ if $\sigma \in \Sigma_O$.

As for monitoring, we begin by formalizing delay sets and observations that can be made under delay. 

\begin{definition}
An $IO$ delay set~$\delays$ is a nonempty subset of $\nnreals^4$ containing pairs of latencies and jitters for the input channel and pairs of latencies and jitters for the output channel.   
A $\delays$-observation, i.e.\ an observation that can in principle be made under delay in $\delays$, is a finite timed word~$\rho^*=(\sigma^*_1,\tau^*_1),\ldots,(\sigma^*_m,\tau^*_m)$ such that there is a $(\lat_I,\jit_I, \lat_O,\jit_O,) \in\delays$ with $\tau_i^* \ge \lat_O$, where $i$ is minimal with $\sigma^*_i \in \Sigma_O$. 
\end{definition}

Note that we only need to constrain occurrences of outputs in the observation, as inputs can be send to the system under test at any time.

Next, we formalize the notion of ground-truth, i.e., executions of the system under test that are consistent with an observation. 

\begin{definition}[Testing consistency]\label{def:Testingconsistency}
Let $\rho^*$ be a $\{(\lat_I,\jit_I,\lat_O,\jit_O)\}$-observation with $\rho^*\proj{I} = (\sigma^*_{I,1},\tau^*_{I,1}),\ldots,(\sigma^*_{I,m_I},\tau^*_{I,m_I})$ and $\rho^*\proj{O} = (\sigma^*_{O,1},\tau^*_{O,1}),\ldots,(\sigma^*_{O,m_O},\tau^*_{O,m_O})$.
Let $\rho$ be a finite timed word with $\rho\proj{I} = (\sigma_{I,1},\tau_{I,1}),\ldots,(\sigma_{I,n_I},\tau_{I,n_I})$ and $\rho\proj{O} = (\sigma_{O,1},\tau_{O,1}),\ldots,(\sigma_{O,n_O},\tau_{O,n_O})$.

We say that $\rho$ is \emph{(testing) consistent} with $\rho^*$ at observation time~$t \in \nnreals$ under $(\lat_I,\jit_I,\lat_O,\jit_O)$ if and only if
    \begin{enumerate}
   
        \item $\tau(\rho^*) \le t$ and $\tau(\rho) \le \max(t,\tau(\rho^*\proj{I}) + (\lat_I + \jit_I))$,
        
        \item[(2I)] $n_I = m_I$, and $\sigma_{I,i} = \sigma^*_{I,i}$ and $\tau_{I,i} -\tau^*_{I,i} \in [\lat_I,\lat_I+\jit_I]$ for all $i \in \{1,\ldots,m_I\}$,
        
        \item[(2O)] $n_O \ge m_O$, and $\sigma_{O,i} = \sigma^*_{O,i}$ and $\tau_{O,i}^* -\tau_{O,i} \in [\lat_O,\lat_O+\jit_O]$ for all $i \in \{1,\ldots,m_O\}$, and
        
        \item[(3)] if $ n_O > m_O $ then $  \tau_{m_O+1} \ge t - (\lat_O + \jit_O )$.
   
   \end{enumerate}
   We denote the set of timed words $\rho$ that are consistent with a $\{(\lat_I,\jit_I,\lat_O,\jit_O)\}$-ob\-ser\-vation~$\rho^*$ at observation time~$t$ under $(\lat_I,\jit_I,\lat_O,\jit_O)$ by $\testGT_{(\lat_I,\jit_I,\lat_O,\jit_O)}(\rho^*,t)$.
   Then, we define $\testGT_\delays(\rho^*,t)=\bigcup_{((\lat_I,\jit_I,\lat_O,\jit_O))\in\delays}\testGT_{(\lat_I,\jit_I,\lat_O,\jit_O)}(\rho^*,t)$.
\end{definition}

Before we continue, let us compare the previous definition to Definition~\ref{def:Consistency}, its analogue for monitoring. 
In Condition~(1), we must allow the ground-truth to have a longer duration than the observation to account for inputs that arrive at the system under test after time~$t$ due to the delay. 
All such inputs must have arrived at time~$\tau(\rho^*\proj{I}) + (\lat_I + \jit_I)$.
Furthermore, we can take into account that we are at time point~$t$, so we use $\max(t,\tau(\rho^*\proj{I}) + (\lat_I + \jit_I))$ as upper bound on the duration of the ground-truth.
Further, condition (2) of definition~\ref{def:Consistency} has to be stated for both channels independently. 
In Condition~(2I), we need to consider the difference~$\tau_{I,i} -\tau^*_{I,i}$ to account for the fact that inputs are sent from the test harness and arrive, after some delay, at the system under test. Also, we only consider ground-truths that have exactly the same number of inputs as the observation has (any additional input will be covered by extending the ground-truth to an infinite word). 
On the other hand, Condition~(2O) and Condition~(3) are similar to their counterparts in Definition~\ref{def:Consistency}.

\begin{example}

\label{example_testinggtdef}
Fig.~\ref{fig_testinggtdef} shows a $\{(\lat_I,\jit_I,\lat_O,\jit_O)\}$-observation and a consistent ground-truth and illustrates how the delay shifts the timestamps of the events. Here, $i$ is an input in $\Sigma_I$ and $o$ is an output in $\Sigma_O$. The length of $\rho$ is $n=10$ and the length of $\rho^*$ is $m=5$. Recall that $t$ is the time of observation.

\begin{figure}
\centering
    \begin{tikzpicture}[thick,xscale=1.25]
    \def\yunit{.1}

    %% timelines
    \draw (0,0) -- (8.5,0);
    \draw (0,-2) -- (10,-2);

    %% word names and labels
    \node[anchor = east] at (-.2,1) {time};
    \node[anchor = east] at (-.2,0) {$\rho^*$};    
    \node[anchor = east] at (-.2,-2) {$\rho$};

    %% marked 
    %%%%%%%%%%%%%%%%%%%%%%%%%%%
    
    %% 0
    \draw[thin, densely dotted, red] (0,.75) -- (0,-2);
    \node[] at (0,1) {$0$};

    % %% latency is 1 for both channels
    
    %% t minus (latO + jitO) (jitO is .5)
    \node[] at (7,1) {$t - (\lat_O + \jit_O)$}; %% so jit is .5
    \draw[thin, densely dotted, red] (7,.75) -- (7,-2);

    %% t    
    \node[] at (8.5,1) {$t$};    
    \draw[thin, densely dotted, red] (8.5,.75) -- (8.5,-2);

    %% t plus (latI + jitI) (jitI is .5)
    \node[] at (10,1) {$t + (\lat_I + \jit_I)$}; %% so jitI is .5
    \draw[thin, densely dotted, red] (10,.75) -- (10,-2);

    % marked  on observation side (dashed line, label, timepoint, marker on timeline. dashed line first so that label can go over line)
    \draw[thin, densely dotted, red] (1.5,.75) -- (1.5,-2.75);
    \node[anchor = south,fill=white, inner sep=0] at (1.5,2.5*\yunit) {o};
    \node[] at (1.5,1) {$\tau_{O,1}^*$};
    \draw (1.5,\yunit) -- (1.5,-\yunit);

    \draw[thin, densely dotted, red] (2,.75) -- (2,-2.75);
    \node[anchor = south,fill=white, inner sep=0] at (2,2.5*\yunit) {i};
    \node[] at (2,1) {$\tau_{I,1}^*$};
    \draw (2,\yunit) -- (2,-\yunit);

    \draw[thin, densely dotted, red] (4,.75) -- (4,-2.75);
    \node[anchor = south,fill=white, inner sep=0] at (4,2.5*\yunit) {i};
    \node[] at (4,1) {$\tau_{I,2}^*$};
    \draw (4,\yunit) -- (4,-\yunit);

    \draw[thin, densely dotted, red] (6,.75) -- (6,-2.75);
    \node[anchor = south,fill=white, inner sep=0] at (6,2.5*\yunit) {o};
    \node[] at (6,1) {$\tau_{O,2}^*$};
    \draw (6,\yunit) -- (6,-\yunit);

    \draw[thin, densely dotted, red] (8,.75) -- (8,-2.75);
    \node[anchor = south,fill=white, inner sep=0] at (8,2.5*\yunit) {i};
    \node[] at (8,1) {$\tau_{I,3}^*$};
    \draw (8,\yunit) -- (8,-\yunit);

    %% marked  on groundtruth side (dashed line, label, timepoint, marker on timeline. dashed line first so that label can go over line)
    \draw[thin, densely dotted, red] (0.5,.75) -- (0.5,-2.75);
    \node[anchor = south,fill=white, inner sep=0] at (0.5,-2-5*\yunit) {o};
    \node[] at (0.5,1) {$\tau_{O,1}$};
    \draw (0.5,-2+\yunit) -- (0.5,-2-\yunit);
    
    \draw[thin, densely dotted, red] (3,.75) -- (3,-2.75);
    \node[anchor = south,fill=white, inner sep=0] at (3,-2-5*\yunit) {i};
    \node[] at (3,1) {$\tau_{I,1}$};
    \draw (3,-2+\yunit) -- (3,-2-\yunit);
    
    \draw[thin, densely dotted, red] (4.5,.75) -- (4.5,-2.75);
    \node[anchor = south,fill=white, inner sep=0] at (4.5,-2-5*\yunit) {o};
    \node[] at (4.5,1) {$\tau_{O,2}$};
    \draw (4.5,-2+\yunit) -- (4.5,-2-\yunit);
    
    \draw[thin, densely dotted, red] (5,.75) -- (5,-2.75);
    \node[anchor = south,fill=white, inner sep=0] at (5,-2-5*\yunit) {i};
    \node[] at (5,1) {$\tau_{I,2}$};
    \draw (5,-2+\yunit) -- (5,-2-\yunit);

    \draw[thin, densely dotted, red] (9,.75) -- (9,-2.75);
    \node[anchor = south,fill=white, inner sep=0] at (9,-2-5*\yunit) {i};
    \node[] at (9,1) {$\tau_{I,3}$};
    \draw (9,-2+\yunit) -- (9,-2-\yunit);

    %% diagonal arrows connecting events and observations
    \draw[loosely dashed,stealth-, shorten >=3pt, shorten <=3pt] (1.5,0) -- (.5,-2);
    \draw[loosely dashed,stealth-, shorten >=3pt, shorten <=3pt]  (3,-2) -- (2,0);
    \draw[loosely dashed,stealth-, shorten >=3pt, shorten <=3pt] (6,0) -- (4.5,-2);
    \draw[loosely dashed,stealth-, shorten >=3pt, shorten <=3pt] (5,-2) -- (4,0);
    \draw[loosely dashed,stealth-, shorten >=3pt, shorten <=3pt] (9,-2) -- (8,0);
    
    %% underbraces for delays
    % \draw [<->, >=stealth] (.5,-2.75) -- (2,-2.75)
    % node [black,midway,yshift=-5pt] {$\lat + \jit$};

    % \draw [<->, >=stealth] (3,-2.75) -- (4.5,-2.75)
    % node [black,midway,yshift=-5pt] {$\lat + \jit$};

    % \draw [<->, >=stealth] (5.5,-2.75) -- (7,-2.75)
    % node [black,midway,yshift=-5pt] {$\lat + \jit$};

    % \draw [<->, >=stealth] (5.75,-3.5) -- (7.25,-3.5)
    % node [black,midway,yshift=-5pt] {$\lat + \jit$};

    %% additional events, not yet observed (label, marker on timeline)
    \node[anchor = south,fill=white, inner sep=0] at (7.5,-2-5*\yunit) {o};
    \draw (7.5,-2+\yunit) -- (7.5,-2-\yunit);
    
    \node[anchor = south,fill=white, inner sep=0] at (8,-2-5*\yunit) {o};
    \draw (8,-2+\yunit) -- (8,-2-\yunit);
    
    \node[anchor = south,fill=white, inner sep=0] at (8.25,-2-5*\yunit) {o};
    \draw (8.25,-2+\yunit) -- (8.25,-2-\yunit);
    
    \node[anchor = south,fill=white, inner sep=0] at (9.5,-2-5*\yunit) {o};
    \draw (9.5,-2+\yunit) -- (9.5,-2-\yunit);
    
    \node[anchor = south,fill=white, inner sep=0] at (9.75,-2-5*\yunit) {o};
    \draw (9.75,-2+\yunit) -- (9.75,-2-\yunit);

    \end{tikzpicture}
    \caption{A $\{(\lat,\jit)\}$-observation~$\rho^*$ and a consistent ground-truth~$\rho$.}
    \label{fig_testinggtdef}
\end{figure}

In particular, notice the following:
\begin{itemize}
    \item As in the case of testing, no output can occur in the observation~$\rho^*$ with a timestamp smaller than $\lat_O$, as it takes at least $\lat_O$ units of time for an output to be send from the system through the output channel to the test harness. Obviously, at the system side (i.e., in the ground-truth~$\rho$) events can happen at any timestamp, also before $\lat_O$ (e.g., the first $o$). Similarly, inputs can be send by the test harness at any time. 

    \item The difference~$\tau_{O,j}^*-\tau_{O,j}$ for $j \in \{1,2 \}$ (i.e., the difference between the time an output is observed at the test harness and the time the event was emitted by the system) must be in the interval~$[\lat_O,\lat_O+\jit_O]$. Dually, the difference~$\tau_{I,j}-\tau_{I,j}^*$ for $j \in \{1,2,3 \}$ (i.e., the difference between the time an input is emitted by the test harness and the time the event was received by the system) must be in the interval~$[\lat_I,\lat_I+\jit_I]$.

    This can lead to overtaking between inputs and outputs, as evident by the second output being sent from the system before the second input arrives (i.e., $\tau_{O,2} < \tau_{I,2}$). At the test harness, the situation is reversed: Here, the second input is sent before the second output is received (i.e., $\tau_{O,2}^* > \tau_{I,2}^*$).

    \item As in the case of monitoring, there may be outputs in the ground-truth that have not been observed by the test harness, e.g., the last five outputs in $\rho$. As before, such events can only have timestamps in the interval~$[t-(\lat_O+\jit_O), \tau(\rho)]$, as all earlier events must necessarily have been observed at the test harness at time~$t$. 
    Dually, an input sent by the test harness before time~$t$ may arrive after $t$, e.g., the last input. However, all such events must have arrived at the system at time~$t + (\lat_I + \jit_I)$.

\end{itemize}
\end{example}

\begin{remark}
\label{remark_testgtnonempty}
By extending the argument presented in Remark~\ref{remark_gtnonempty}, one can show that $\testGT_\delays(\rho^*,t)$ is always nonempty, if $\rho^*$ is a $\delays$-observation and $t \ge \tau(\rho^*)$.
\end{remark}

Using these ground-truths, we can now define testing under delay. 

\begin{definition}[Testing verdicts under delay]
\label{def:delayedtester}
Given a language $L \subseteq \TSigma^\omega$, a set of possible observation delays~$\delays$, a $\delays$-observation~$\rho^* \in \TSigma^*$, and an observation time~$t \ge \tau(\rho^*)$, the function $\testevalfuncsymbol_{\delays} \colon \pow{\TSigma^\omega}\rightarrow \TSigma^*  \times \nnreals \rightarrow \verdicts$ evaluates to the verdict
\[
\testevalfuncd{\delays}{L}{\rho^*,t} = \left.
  \begin{cases}
    \top & \text{if } \rho \cdot_{\max(t,\tau(\rho))} \mu \in L $ for all $\rho \in \testGT_{\delays}(\rho^*,t)$, and all $ \mu \in \TSigma^{\omega}, \\
    \bot & \text{if } \rho \cdot_{\max(t,\tau(\rho))} \mu \notin L $ for all $\rho \in \testGT_{\delays}(\rho^*,t)$, and all $ \mu \in \TSigma^{\omega}, \\

    \unknown & \text{otherwise}.
  \end{cases}
  \right.
\]
$\testevalfuncd{\delays}{L}{\rho^*,t}$ is undefined when $t < \tau(\rho^*)$.
\end{definition}

Similarly to the corresponding result for monitoring (see Lemma~\ref{lemma:testdelaycertainty}), one can prove that our testing verdict function is monotone in the delays: decreasing the uncertainty about the possible delays preserves conclusive verdicts.

\begin{lemma}
\label{lemma:testdelaycertainty}
Let $L \subseteq \TSigma^\omega$, $\rho^* \in \TSigma^*$, let $\delays \subseteq \delays'$ be delay sets, let $\rho^*$ be a $\delays$-observation, and let $t \ge \tau(\rho^*)$. 
Then, $\testevalfuncd{\delays'}{L}{\rho^*,t}=\top$ implies $\testevalfuncd{\delays}{L}{\rho^*,t}=\top$ and $\testevalfuncd{\delays'}{L}{\rho^*,t}=\bot$ implies $\testevalfuncd{\delays}{L}{\rho^*,t}=\bot$.
\end{lemma}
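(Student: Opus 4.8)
The plan is to mirror the proof of Lemma~\ref{lemma:delaycertainty}, the corresponding statement for monitoring, with $\GT$ replaced by its testing counterpart $\testGT$. The crucial point is that $\testGT_\delays(\rho^*,t)$ is defined as the union $\bigcup_{(\lat_I,\jit_I,\lat_O,\jit_O)\in\delays}\testGT_{(\lat_I,\jit_I,\lat_O,\jit_O)}(\rho^*,t)$ and is therefore monotone in its delay-set argument: from $\delays \subseteq \delays'$ we immediately obtain $\testGT_\delays(\rho^*,t) \subseteq \testGT_{\delays'}(\rho^*,t)$. I would also note that $\delays\subseteq\delays'$ makes every $\delays$-observation a $\delays'$-observation, so that $\testevalfuncd{\delays'}{L}{\rho^*,t}$ is defined whenever $\testevalfuncd{\delays}{L}{\rho^*,t}$ is (both additionally require $t\ge\tau(\rho^*)$, which is part of the hypothesis), and that both ground-truth sets are nonempty by Remark~\ref{remark_testgtnonempty}, so neither verdict function is ambiguous.

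Given this, the argument is essentially a one-liner. Suppose $\testevalfuncd{\delays'}{L}{\rho^*,t}=\top$, i.e.\ $\rho \cdot_{\max(t,\tau(\rho))} \mu \in L$ holds for all $\rho \in \testGT_{\delays'}(\rho^*,t)$ and all $\mu \in \TSigma^\omega$. Since $\testGT_\delays(\rho^*,t) \subseteq \testGT_{\delays'}(\rho^*,t)$, the very same statement holds a fortiori when $\rho$ is restricted to range over $\testGT_\delays(\rho^*,t)$; hence the $\top$-clause of Definition~\ref{def:delayedtester} is met for $\delays$, giving $\testevalfuncd{\delays}{L}{\rho^*,t}=\top$. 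The case of $\bot$ is handled symmetrically, replacing ``$\in L$'' by ``$\notin L$'' throughout.

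As with Lemma~\ref{lemma:delaycertainty}, there is essentially no obstacle: the content is the monotonicity of $\testGT_{(\cdot)}(\rho^*,t)$ in the delay set, followed by the observation that shrinking the domain of a universal quantifier preserves its truth. The only points requiring a moment's care are the definedness side conditions ($\delays$-observation, $t \ge \tau(\rho^*)$) and the fact that the $\top$- and $\bot$-clauses of Definition~\ref{def:delayedtester} are purely $\forall$-statements over $\testGT_\delays(\rho^*,t)$, so that passing to a subset can only make an existing $\top$ (resp.\ $\bot$) verdict persist and can never manufacture a new one.
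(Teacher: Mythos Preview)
Your proposal is correct and follows exactly the approach the paper intends: the paper does not give an explicit proof of this lemma but simply states that it is proven ``similarly to the corresponding result for monitoring'' (Lemma~\ref{lemma:delaycertainty}), which is precisely the argument you spell out via the monotonicity of $\testGT_{(\cdot)}(\rho^*,t)$ in the delay set and the universal nature of the $\top$/$\bot$ clauses.
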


Also, as for monitoring, we can refine the testing verdict function from Definition~\ref{def:delayedtester} to provide information about the delay parameters~$(\lat_I,\jit_I,\lat_O,\jit_O)$ that can explain an observation.  
Given~$L \subseteq \TSigma^\omega$, a finite timed word~$\rho^* \in \TSigma^*$, and $t \ge \tau(\rho^*)$, the set of delays~$\testDelta(L,\rho^*,t)$ that are consistent with the observation $\rho^*$ at $t$ is defined as
\[ \testDelta(L,\rho^*,t) = \{ (\lat_I,\jit_I,\lat_O,\jit_O) \mid 
    \exists \rho\in\testGT_{\lat_I,\jit_I,\lat_O,\jit_O}(\rho^*,t)\, \exists \mu\in\TSigma^{\omega} \text{ s.t. }\rho \cdot_{\max(t,\tau(\rho))} \mu \in L\}. \]
We denote by $\testDelta_{\delays}(L,\rho^*,t)$ the set $\testDelta(L,\rho^*,t) \cap \delays$.

Conclusive testing verdicts can again be characterized via the sets of consistent delays, which can be proven by the same argument as for the special case of monitoring (see Lemma~\ref{lemma:delay}).

\begin{lemma}
\label{lemma:testdelay}
Given $L \subseteq \TSigma^\omega$, a set~$\delays$ of delays, a $\delays$-observation~$\rho^* \in \TSigma^*$, and $t \ge \tau(\rho^*)$, we have
    \begin{enumerate}
       \item  $\testDelta_{\delays}(L,\rho^*,t)=\emptyset$ if and only if $\testevalfuncd{\delays}{L}{\rho^*,t}=\bot$, and 
       \item  $\testDelta_{\delays}(\overline{L},\rho^*,t)=\emptyset$ if and only if $\testevalfuncd{\delays}{L}{\rho^*,t}=\top$.
    \end{enumerate}
\end{lemma}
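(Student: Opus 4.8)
The plan is to replay, essentially verbatim, the biconditional chain used in the proof of Lemma~\ref{lemma:delay}, substituting $\testGT$ for $\GT$, the four-component delay tuples $(\lat_I,\jit_I,\lat_O,\jit_O)$ for the pairs $(\lat,\jit)$, and the concatenation point $\max(t,\tau(\rho))$ for $t$. So the result is purely a matter of unfolding definitions and shuffling quantifiers; there is no genuinely hard step.

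For claim~(1), I would first unfold the emptiness of $\testDelta_\delays(L,\rho^*,t)$. By definition $\testDelta_\delays(L,\rho^*,t)=\testDelta(L,\rho^*,t)\cap\delays$, and $\testDelta(L,\rho^*,t)$ is exactly the set of tuples admitting a consistent ground-truth $\rho$ and an extension $\mu$ with $\rho\cdot_{\max(t,\tau(\rho))}\mu\in L$. Hence $\testDelta_\delays(L,\rho^*,t)=\emptyset$ is equivalent to: for every $(\lat_I,\jit_I,\lat_O,\jit_O)\in\delays$, every $\rho\in\testGT_{(\lat_I,\jit_I,\lat_O,\jit_O)}(\rho^*,t)$, and every $\mu\in\TSigma^\omega$, we have $\rho\cdot_{\max(t,\tau(\rho))}\mu\notin L$, i.e.\ $\rho\cdot_{\max(t,\tau(\rho))}\mu\in\overline{L}$. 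The second step collapses the nested quantification ``for all tuples in $\delays$, for all $\rho$ in that tuple's ground-truth set'' into a single quantification over $\testGT_\delays(\rho^*,t)$, using $\testGT_\delays(\rho^*,t)=\bigcup_{(\lat_I,\jit_I,\lat_O,\jit_O)\in\delays}\testGT_{(\lat_I,\jit_I,\lat_O,\jit_O)}(\rho^*,t)$; this yields the statement ``$\rho\cdot_{\max(t,\tau(\rho))}\mu\in\overline{L}$ for all $\rho\in\testGT_\delays(\rho^*,t)$ and all $\mu\in\TSigma^\omega$'', which is verbatim the $\bot$-clause of Definition~\ref{def:delayedtester}. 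Since Remark~\ref{remark_testgtnonempty} ensures $\testGT_\delays(\rho^*,t)\neq\emptyset$, the $\bot$-clause and the $\top$-clause cannot hold simultaneously, so this clause holds iff $\testevalfuncd{\delays}{L}{\rho^*,t}=\bot$. Chaining the three equivalences gives claim~(1).

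Claim~(2) then follows by instantiating claim~(1) with $\overline{L}$ in place of $L$ and using $\overline{\overline{L}}=L$: reading off Definition~\ref{def:delayedtester}, the $\bot$-verdict for $\overline{L}$ is literally the $\top$-verdict for $L$, so $\testDelta_\delays(\overline{L},\rho^*,t)=\emptyset$ iff $\testevalfuncd{\delays}{L}{\rho^*,t}=\top$.

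The only points requiring care — and the closest thing to an obstacle — are bookkeeping ones: one must verify that the concatenation index in the definition of $\testDelta$ and in Definition~\ref{def:delayedtester} is the same syntactic expression $\max(t,\tau(\rho))$ (which is different from the monitoring case, where it was plain $t$), and that the three-way case split of $\testevalfuncsymbol_\delays$ is unambiguous, which rests on the nonemptiness guaranteed by Remark~\ref{remark_testgtnonempty}. Both are already available, so the argument goes through exactly as in Lemma~\ref{lemma:delay}.
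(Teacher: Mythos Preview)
Your proposal is correct and is exactly the approach the paper takes: the paper does not give a separate proof but simply says the lemma ``can be proven by the same argument as for the special case of monitoring (see Lemma~\ref{lemma:delay}),'' and you have carried out precisely that transplant, with the right substitutions for the four-component delay tuples and the concatenation index~$\max(t,\tau(\rho))$. Your explicit invocation of Remark~\ref{remark_testgtnonempty} to rule out the vacuous overlap of the $\top$- and $\bot$-clauses is a small elaboration the paper leaves implicit, but it is a valid and welcome clarification.
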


Again, even in the case when both delay-sets are nonempty (i.e., the verdict is $\unknown$), we can still provide useful information in terms of the sets~$\Delta(L, \rho^*, t)$ and $\Delta(\overline{L}, \rho^*, t)$ of consistent delays. 
They are non-increasing during observations: By extending observations, we (potentially) reduce the set of consistent delays.
This result is again proven along the lines of the proof of the similar result for monitoring (Lemma~\ref{lemma:delaymono}).

\begin{lemma}
\label{lemma:testdelaymono}
Let $(\rho_1^*,t_1)\sqsubseteq (\rho_2^*,t_2)$ for finite timed words~$\rho_1^*$ and $\rho_2^*$ with $t_1 \ge \tau(\rho_1^*)$ and $t_2 \ge \tau(\rho_2^*)$ and $t_2 \ge t_1$. Then, $\testDelta(L,\rho_1^*,t_1)\supseteq\testDelta(L,\rho_2^*,t_2) $. 
\end{lemma}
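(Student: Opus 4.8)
The plan is to mirror the proof of Lemma~\ref{lemma:delaymono} closely, the new complications being the asymmetric treatment of inputs and outputs in Definition~\ref{def:Testingconsistency} and the possibility of overtaking between an input and an output. So I would fix $(\lat_I,\jit_I,\lat_O,\jit_O) \in \testDelta(L,\rho_2^*,t_2)$, which by definition yields a testing-consistent ground-truth $\rho_2 \in \testGT_{(\lat_I,\jit_I,\lat_O,\jit_O)}(\rho_2^*,t_2)$ and a $\mu_2 \in \TSigma^\omega$ with $\rho_2 \cdot_{\max(t_2,\tau(\rho_2))} \mu_2 \in L$. The goal is to produce a prefix $\rho_1$ of $\rho_2$ that lies in $\testGT_{(\lat_I,\jit_I,\lat_O,\jit_O)}(\rho_1^*,t_1)$ together with a suffix $\mu_1$ satisfying $\rho_1 \cdot_{\max(t_1,\tau(\rho_1))} \mu_1 = \rho_2 \cdot_{\max(t_2,\tau(\rho_2))} \mu_2$; this immediately gives $(\lat_I,\jit_I,\lat_O,\jit_O) \in \testDelta(L,\rho_1^*,t_1)$, which is what we need.

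The heart of the argument is a testing analogue of Lemma~\ref{lemma_GT}. Writing $m_I$ and $m_O$ for the number of inputs resp.\ outputs of $\rho_1^*$ and putting $T = \max(t_1,\ \tau(\rho_1^*\proj{I}) + \lat_I + \jit_I)$, I would let $\rho_1$ be the prefix of $\rho_2$ consisting of all events that occur strictly before the $(m_I{+}1)$-st input of $\rho_2$ and carry a timestamp at most $T$. One then checks the conditions of Definition~\ref{def:Testingconsistency} for $\rho_1$, $\rho_1^*$, $t_1$. Condition~(1) is immediate, since $\tau(\rho_1^*)\le t_1$ by hypothesis and $\tau(\rho_1)\le T$ by construction. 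For Condition~(2I), the first $m_I$ inputs of $\rho_2$ all carry timestamps at most $\tau(\rho_1^*\proj{I}) + \lat_I + \jit_I \le T$ and match the inputs of $\rho_1^*$ (inherited from the $\rho_2$/$\rho_2^*$ match, using that $\rho_1^*$ is a prefix of $\rho_2^*$), so $\rho_1$ retains exactly these. Condition~(2O) is the delicate point: it rests on the fact that any input observed in $\rho_2^*$ strictly before the $m_O$-th output of $\rho_2^*$ already lies in the prefix $\rho_1^*$, so the $(m_I{+}1)$-st input of $\rho_2$ cannot overtake the $m_O$-th output of $\rho_2$, and hence all of $\rho_1^*$'s (already early) observed outputs survive the truncation. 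Finally, Condition~(3) holds because an output of $\rho_2$ beyond the $m_O$-th is, by $(\rho_1^*,t_1)\sqsubseteq(\rho_2^*,t_2)$, either observed only at time $\ge t_1$ — hence occurs no earlier than $t_1-(\lat_O+\jit_O)$ — or not observed in $\rho_2^*$ at all and then by Condition~(3) for $\rho_2$ occurs no earlier than $t_2-(\lat_O+\jit_O)\ge t_1-(\lat_O+\jit_O)$.

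It then remains to split the concatenation, which is a testing-flavoured use of Remark~\ref{remark_concat}.\ref{remark_concat_case2}. Writing $\rho_2 = \rho_1\cdot\rho_2'$, I would check that the first event of $\rho_2'$ — either the $(m_I{+}1)$-st input of $\rho_2$, whose observation time already lies at or beyond $t_1$ by the $\sqsubseteq$ relation so that its occurrence time (observation time plus an input delay) is at least $t_1$, or an output whose timestamp exceeds $T\ge t_1$ — has timestamp at least $\max(t_1,\tau(\rho_1))$. This permits re-bracketing $\rho_2 \cdot_{\max(t_2,\tau(\rho_2))} \mu_2$ as $\rho_1 \cdot_{\max(t_1,\tau(\rho_1))} \mu_1$ with $\mu_1 = (\rho_2' - \max(t_1,\tau(\rho_1)))\cdot_{\max(t_2,\tau(\rho_2))}\mu_2$, the degenerate case in which $\rho_2'$ is empty being handled by Remark~\ref{remark_concat}.\ref{remark_concat_case1} using $t_1\le t_2$ and $\tau(\rho_1)=\tau(\rho_2)$.

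I expect the main obstacle to be Condition~(2O) of the consistency check: showing that the truncation keeps exactly the $m_I$ observed inputs while still keeping all $m_O$ observed outputs and no prematurely occurring unobserved output. The resolution hinges on the non-overtaking assumption on observed events (Footnote~\ref{footnote:overtake}) together with the fact that events of $\rho_2^*$ lying outside $\rho_1^*$ have observation timestamps $\ge t_1$; this is exactly what rules out the bad configuration in which the $(m_I{+}1)$-st input of $\rho_2$ precedes the $m_O$-th output of $\rho_2$. Everything else is the bookkeeping from the monitoring proof, merely doubled by the input/output split, plus some care with the threshold $T$ stemming from Condition~(1).
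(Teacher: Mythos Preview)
Your proposal is correct and takes essentially the same approach as the paper, which does not spell out a proof but simply states that it is ``proven along the lines of the proof of the similar result for monitoring (Lemma~\ref{lemma:delaymono}).'' You have carried out exactly that adaptation, correctly identifying the new complications (the asymmetric shift of inputs versus outputs, the need to retain precisely $m_I$ inputs while keeping at least $m_O$ outputs in the truncated ground-truth, and the adjusted concatenation point~$\max(t,\tau(\rho))$) and handling them via the prefix construction with threshold~$T$ and the re-bracketing from Remark~\ref{remark_concat}.
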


Now, we show how to compute $\testevalfuncsymbol_{\delays}(L)$ following the blueprint developed for monitoring under delay:
Given $\rho^*$ and $t$, compute the reach-set for $ \testGT_{\delays}(\rho^*,t)$ in the two \tba for $L$ and the complement of $L$ and then intersect with the respective sets of nonempty language states. 
Note that we cannot work with equal-length ground-truths here, as we have done in the case of monitoring, since we need to allow for unobserved outputs that were sent before the last input from the observation has arrived at the system. In Figure~\ref{fig_testinggtdef}), consider the three outputs in the interval~$[t-(\lat_O+\jit_O), t]$: As the input at time~$\tau_{I,3}$ needs to be part of the ground-truth, we need to allow these outputs as well.

Hence, we need to update our definition of reach-set to consider arbitrary ground-truths, not just equal-length ones:
Given a \tba~$\aut$, a set~$\delays$ of delays, a finite observed timed word~$\rho^* \in \TSigma^*$, and $t \ge \tau(\rho^*)$, 
we define
\begin{align*}
\terminalprime{\aut}{\delays}{\rho^*,t} = \{ &(q, v + \max(0,(t - \tau(\rho) ))) \mid (q_0, v_0) \xrightarrow{\rho}_\aut (q, v) \text{ where }\\
&\quad (q_0, v_0) \text{ with } q_0 \in Q_0 \text{, } v_0(x)=0 \text{ for all $x \in C$, and } 
\rho \in \testGT_{\delays}(\rho^*,t)\}.
\end{align*}
We call this the reach-set of $\rho^*$ in $\aut$ at $t$ w.r.t.\ $\delays$.
Using this, we can define the automata-based verdict function.

\begin{definition}\label{def:delayedtesting-aut}
 Given a \tba~$\aut$, a complement automaton~$\compautomaton$ (i.e., with $\lang{\compautomaton} = \TSigma^\omega \setminus \lang{\aut}$), a set~$\delays$ of delays, a $\delays$-observation~$\rho^* \in \TSigma^*$, and $t \ge \tau(\rho)$, ${\monitordelayssymbolprime \colon \tbaty \times \tbaty \rightarrow \TSigma^* \times \nnreals \rightarrow \verdicts}$ computes the verdict
\[
\monitordelaysprime{\aut, \compautomaton}{\rho^*, t} = \left.
  \begin{cases}
    \top & \text{if } \terminalprime{\compautomaton}{\delays}{\rho^*,t} \cap \nonempty{\compautomaton} = \emptyset, \\
    \bot & \text{if } \terminalprime{\aut}{\delays}{\rho^*,t} \cap \nonempty{\aut} = \emptyset, \\
    \unknown & \text{otherwise.}
  \end{cases}
  \right.
\]
$\monitordelaysprime{\aut, \compautomaton}{\rho^*, t}$ is undefined if $t < \tau(\rho)$.
\end{definition}

Using arguments similar to those to prove the analogous result for monitoring under delay (see Theorem~\ref{thm:elverdictvsautverdict}), one can show that this automata-based definition of testing is equal to the verdict function defined above.

\begin{theorem}
\label{thm:testingverdictvsautverdict}
$\monitordelaysprime{\aut,\compautomaton}{\rho^*,t} = \testevalfuncd{\delays}{\lang{\aut}}{\rho^*,t}$ for all sets~$\delays$ of delays, all \tba~$\aut$ (and complement automata~$\compautomaton$), all $\delays$-observations~$\rho^*$, and all $t \ge \tau(\rho^*)$.
\end{theorem}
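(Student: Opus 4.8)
The plan is to follow the blueprint of the proof of Theorem~\ref{thm:elverdictvsautverdict}, now with the reach-set $\widehat{\mathcal{R}}$ and testing consistency in place of $\mathcal{R}$ and EL-consistency. The single fact that needs to be established is that, for every \tba~$\aut'$,
\[
\terminalprime{\aut'}{\delays}{\rho^*,t} \cap \nonempty{\aut'} \neq \emptyset
\iff
\exists\, \rho \in \testGT_\delays(\rho^*,t)\; \exists\, \mu \in \TSigma^\omega \text{ s.t. } \rho \cdot_{\max(t,\tau(\rho))} \mu \in \lang{\aut'} .
\]
Granting this, instantiating $\aut' = \compautomaton$ gives $\monitordelaysprime{\aut,\compautomaton}{\rho^*,t} = \top \iff \testevalfuncd{\delays}{\lang{\aut}}{\rho^*,t} = \top$, and instantiating $\aut' = \aut$ gives the corresponding equivalence for $\bot$; since both verdict functions are three-valued and agree on two of their three preimages, they agree everywhere, exactly as in the proof of Theorem~\ref{thm:elverdictvsautverdict}.

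For the left-to-right direction, suppose $(q,v') \in \terminalprime{\aut'}{\delays}{\rho^*,t} \cap \nonempty{\aut'}$. By definition of $\widehat{\mathcal{R}}$ there is a run $(q_0,v_0) \xrightarrow{\rho}_{\aut'} (q,v)$ from an initial state with $\rho \in \testGT_\delays(\rho^*,t)$ and $v' = v + \max(0, t - \tau(\rho))$, and since $(q,v') \in \nonempty{\aut'}$ there is an accepting infinite run of $\aut'$ from $(q,v')$ over some $\mu \in \TSigma^\omega$. Splicing the two runs together --- with an idle period of length $\max(0, t - \tau(\rho))$ inserted between the last step of the first run and the first step of the second --- yields an accepting run of $\aut'$ from $(q_0,v_0)$ over $\rho \cdot_{\tau(\rho) + \max(0, t - \tau(\rho))} \mu$. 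Since $\tau(\rho) + \max(0, t - \tau(\rho)) = \max(t,\tau(\rho))$, this word is $\rho \cdot_{\max(t,\tau(\rho))} \mu \in \lang{\aut'}$, as required. The right-to-left direction is the exact reverse: an accepting run over $\rho \cdot_{\max(t,\tau(\rho))} \mu$ is split after processing the prefix $\rho$, reaching a state $(q,v)$; after the subsequent $\max(0, t - \tau(\rho))$ units of idle time the valuation equals $v' = v + \max(0, t - \tau(\rho))$, and the remainder of the run witnesses $(q,v') \in \nonempty{\aut'}$, so $(q,v') \in \terminalprime{\aut'}{\delays}{\rho^*,t} \cap \nonempty{\aut'}$.

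Unlike in Theorem~\ref{thm:elverdictvsautverdict}, no intermediate EL-verdict function and no counterpart of Lemma~\ref{lemma:verdictvselverdict} are needed here: because $\widehat{\mathcal{R}}$ ranges over \emph{all} testing-consistent ground-truths, the concatenation point $\max(t,\tau(\rho))$ appearing in Definition~\ref{def:delayedtester} already coincides with the time shift $\max(0, t - \tau(\rho))$ built into $\widehat{\mathcal{R}}$, and this shift is independent of $(\lat_I,\jit_I,\lat_O,\jit_O)$, so the union over $\delays$ inside $\testGT_\delays$ is passed through transparently. The only point requiring care is the time-shift bookkeeping --- that $\tau(\rho) \le \max(t,\tau(\rho))$ so the concatenation is well-defined, that $\tau(\rho) + \max(0, t - \tau(\rho)) = \max(t,\tau(\rho))$, and that no clock is reset during the idle period so that elapsing it from $(q,v)$ gives exactly $v + \max(0, t - \tau(\rho))$ --- which is routine and mirrors the analogous step already carried out in the proof of Theorem~\ref{thm:elverdictvsautverdict}. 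This is the one place where the argument must be written out rather than merely invoked, but it poses no genuine difficulty.
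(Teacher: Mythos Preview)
Your proposal is correct and follows exactly the approach the paper indicates: the paper does not write out a proof but simply states that the result is shown ``using arguments similar to those to prove the analogous result for monitoring under delay (see Theorem~\ref{thm:elverdictvsautverdict}),'' and you have supplied precisely those details. Your observation that no EL-intermediate step (no analogue of Lemma~\ref{lemma:verdictvselverdict}) is needed --- because $\widehat{\mathcal{R}}$ already ranges over all testing-consistent ground-truths and the concatenation point~$\max(t,\tau(\rho))$ in Definition~\ref{def:delayedtester} matches the shift in $\widehat{\mathcal{R}}$ directly --- is a helpful clarification that the paper leaves implicit.
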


This result, while theoretically appealing, is unfortunately hard to implement efficiently using our zone-based approach.

\begin{example}
Recall that the delay~$(\lat_I,\jit_I,\lat_O,\jit_O)=(0,2,0,0)$ expresses that the input channel has latency zero and jitter two while the output channel has latency and jitter zero.
Fix some $n>0$.
Now, consider a $(0,2,0,0)$-observation~$\rho^*$ with $n$ inputs in the interval~$[0,1]$ and $n$ outputs in the interval~$[1,2]$. 
The inputs arrive at the system in the interval~$[0,3]$ and all outputs are observed at the same time as they have been sent, i.e., also in the interval~$[1,2]$.

In the following, we are focusing only on ground-truths where all inputs arrive in the interval~$[1,2]$ as well. 
Each input can arrive at any time in the interval~$[1,2]$, as long as the order of the inputs is preserved.
Thus, the $n$ outputs induce $n+1$ buckets in which the $n$ inputs can be placed into in the ground-truth. 
There are $\binom{2n}{n}$ ways to do so, a quantity that grows exponentially in $n$.
Hence, even when abstracting away exact time points (this is what our zone-based construction does), we still need to account for all these possible orderings (which we would need to handle explicitly), leading to a combinatorial explosion.
\end{example}

Hence, in the following, we consider a setting that does not exhibit this combinatorial explosion while still being expressive enough to express properties from the use case studied  in Section~\ref{sec:implementation}.
Intuitively, we disallow the overtaking of inputs and outputs by considering only words in which inputs and outputs strictly alternate. Said differently, after each input, we need to wait for a corresponding output before the next input can be sent, and vice versa.

\begin{definition}
Let $\Sigma$ be the disjoint union of $\Sigma_I$ and $\Sigma_O$. A word~$(\sigma_1,\tau_1), (\sigma_2,\tau_2), \ldots$ is IO-alternating if $\sigma_j \in \Sigma_I$ for all odd $j$ and $\sigma_j \in \Sigma_O$ for all even $j$ (note that the word needs to start with an input). Given a language~$L \subseteq \TSigma^\omega$, let $IO(L)$ be the set of IO-alternating words in $L$.
\end{definition}

\begin{remark}\hfill
\label{remark_io}
\begin{enumerate}
    \item\label{remark_io_test}
    Given a $\tba$~$\aut$, one can effectively test whether it accepts only IO-alternating words (i.e., whether $\IO(L(\aut)) = L(\aut)$) by testing the intersection of $L(\aut)$ and the language of words with two consecutive inputs or two consecutive outputs for emptiness. As \tba are closed under intersection and emptiness is decidable~\cite{alur1994tba}, this can indeed be done effectively. 
    \item\label{remark_io_restr}
    Given a \tba~$\aut$, one can effectively construct a \tba that accepts $\IO(L(\aut))$ by taking the product of $\aut$ and a \tba that accepts the language of all IO-alternating words to accept the intersection of both languages.
\end{enumerate}
\end{remark}

In the following, we restrict ourselves only to words that are IO-alternating.
This requires to further restrict the observations that can be made:
An observation~$\rho^* = (\sigma_1,\tau_1) \cdots (\sigma_n,\tau_n)$ is an IO-observation (under $(\lat_I, \jit_I, \lat_O, \jit_O)$) if $\rho^*$ is IO-alternating and $\tau_{j+1} - \tau_{j} \ge \lat_I + \lat_O $ for all odd $j$, i.e., there is enough time between an input sent and the next observed output to sent these events through the corresponding channels.

The following theorem shows that we can test \mitl specifications w.r.t.\ IO-alternating words.
Note that $\IO(\lang{\aut}$ and $\IO(\lang{\overline{\aut}}$ partition the set of IO-alternating words, which implies that $\monitordelayssymbolprime(\IO(\lang{\aut}), \IO(\lang{\overline{\aut}}))$ is well-defined.
Furthermore, by considering these \tba, we restrict ourselves to extensions of ground-truths of the observation that are IO-alternating, all other words are ignored.

\begin{theorem}
The restriction of $\monitordelayssymbolprime(\IO(\lang{\aut}), \IO(\lang{\overline{\aut}}))$ to IO-observations is effectively computable for specifications given by \tba $\aut$ and $\overline{\aut}$, and $\delays = \{(\lat_I, \jit_I, \lat_O, \jit_O) \mid \lat_I \in [\ell_I, u_I] \text{ and } \lat_O \in [\ell_O, u_O]\}$ for given $\ell_I, u_I, \ell_O, u_O, \jit_I, \jit_O \in \nnrats$.
\end{theorem}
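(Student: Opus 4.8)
The plan is to follow the blueprint already used for monitoring in Sections~\ref{sec:towards} and~\ref{sec:algo}, adapted to the two-channel setting and exploiting IO-alternation to avoid the combinatorial blow-up identified in the preceding example. First, using Remark~\ref{remark_io}.\ref{remark_io_restr} I would construct \tba $\aut_{\IO}$ and $\overline{\aut}_{\IO}$ accepting $\IO(\lang{\aut})$ and $\IO(\lang{\overline{\aut}})$, respectively, each as the product of the given automaton with a fixed small \tba recognising all IO-alternating words. Since $\IO(\lang{\aut})$ and $\IO(\lang{\overline{\aut}})$ partition the IO-alternating words and, for an IO-observation, only IO-alternating extensions of ground-truths are relevant, essentially the same argument as in Theorem~\ref{thm:testingverdictvsautverdict} shows that $\monitordelaysprime{\aut_{\IO},\overline{\aut}_{\IO}}{\rho^*,t}$ is determined by emptiness of $\terminalprime{\aut_{\IO}}{\delays}{\rho^*,t}\cap\nonempty{\aut_{\IO}}$ and of $\terminalprime{\overline{\aut}_{\IO}}{\delays}{\rho^*,t}\cap\nonempty{\overline{\aut}_{\IO}}$. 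The sets $\nonempty{\cdot}$ are computable by the zone-based fixpoint algorithm of~\cite{GrosenKLZ22}, so it remains to compute $\terminalprime{\aut'}{\delays}{\rho^*,t}$ for $\aut'\in\{\aut_{\IO},\overline{\aut}_{\IO}\}$ by a zone-based construction, in analogy to Lemma~\ref{lemma_impl}.

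The structural observation that makes this possible is that, once we restrict to IO-alternating ground-truths and to IO-observations, the ground-truth is almost completely pinned down: by conditions~(2I) and~(2O) of Definition~\ref{def:Testingconsistency} its event labels coincide event-by-event with those of $\rho^*$, and IO-alternation together with $n_I=m_I$ forces $n_O\in\{m_O,m_O+1\}$ — i.e.\ at most a single, trailing output can be present in the ground-truth but not yet in the observation, which happens exactly when $\rho^*$ ends with an input. Moreover, the system-time order of the ground-truth events must coincide with their observed order (any other pairing would fail to be IO-alternating at the system side), so it is enforced automatically by monotonicity of the $\globaltime$ clock along a symbolic run rather than having to be guessed. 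Hence the reach-set can be computed by processing the events of $\rho^*$ left-to-right, optionally taking at most one further ``guessed'' output transition at the end, and then advancing time to $t$.

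Concretely, I would extend $\aut'$ with the clock $\globaltime$ recording system time, the clock $\etime$ tracking the output latency with invariant $\etime-\globaltime=\lat_O$ exactly as in Section~\ref{sec:algo}, and a second auxiliary clock tracking the input latency; to keep clock values non-negative this clock is best made to encode $u_I-\lat_I$ rather than $\lat_I$, so that its invariant difference to $\globaltime$ lies in $[0,u_I-\ell_I]$. The initial zone sets all clocks of $\aut'$ and $\globaltime$ to $0$ and constrains $\etime-\globaltime\in[\ell_O,u_O]$ together with the corresponding constraint on the input-latency clock. An observed output $(\sigma^*,\tau^*)$ is processed by $\succc_{d}$ as before (future the zone, intersect with the guard, refine $\etime\in[\tau^*-\jit_O,\tau^*]$); an observed input $(\sigma^*,\tau^*)$ is handled dually, futuring the zone, intersecting with the guard, and then imposing that $\globaltime$ equals the arrival time $\tau^*+\lat_I+(\text{jitter in }[0,\jit_I])$, which after substituting the input-latency clock becomes a single affine (hence DBM-expressible) constraint. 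The optional final output transition is taken with $\globaltime$ constrained to $[\,t-(\lat_O+\jit_O),\,\max(t,\tau(\rho^*\proj{I})+\lat_I+\jit_I)\,]$, and finally one adds $\max(0,t-\globaltime)$ to all clocks of $\aut'$ — i.e.\ futures to $\globaltime=t$ on the part of the zone with $\globaltime\le t$, and leaves the zone unchanged on the part with $\globaltime>t$ (which can occur because condition~(1) permits $\tau(\rho)>t$) — and then projects away the auxiliary clocks. All of these are standard DBM operations~\cite{DBLP:conf/ac/BengtssonY03}, so the verdict is effectively computable.

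The main obstacle is the correctness proof of this construction against Definitions~\ref{def:delayedtesting-aut} and~\ref{def:Testingconsistency}, i.e.\ the analogue of Lemma~\ref{lemma_impl}: one must check that left-to-right processing in observed order, the single-trailing-output relaxation, and the final time advance together realise exactly $\testGT_{\delays}(\rho^*,t)$ restricted to IO-alternating words — in particular that no IO-alternating ground-truth is missed, and that none with an impossible system-time ordering is spuriously added, the latter being precisely what automatic $\globaltime$-monotonicity delivers. The most error-prone point is the bookkeeping for the input channel, where the ``observation time'' of an input is its send time rather than its arrival time so that the delay acts in the opposite direction; this is where the $u_I-\lat_I$ re-encoding of the input-latency clock is needed to stay within the (non-negative, diagonal-constraint) DBM fragment.
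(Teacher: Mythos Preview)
Your proposal follows essentially the same approach as the paper: build the IO-restricted automata via Remark~\ref{remark_io}.\ref{remark_io_restr}, add two auxiliary ``expected time'' clocks (one per channel) on top of $\globaltime$, process the events of $\rho^*$ left-to-right with a direction-dependent successor operation that refines the appropriate clock, perform a final time-alignment step, and then intersect with the zone-computable $\nonempty{\cdot}$ sets.

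The differences are in presentation and level of detail rather than in substance. The paper names the two auxiliary clocks $\etime_I$ and $\etime_O$ and simply initialises $\globaltime-\etime_I\in[\ell_I,u_I]$ and $\etime_O-\globaltime\in[\ell_O,u_O]$, relegating the non-negativity issue for $\etime_I$ to a footnote; your $u_I-\lat_I$ re-encoding is exactly the shift that footnote alludes to. The paper packages the final time step as a single case analysis on $\dir(\sigma_n)$ that just adds a constraint in the appropriate channel, whereas you spell out the ``at most one trailing unobserved output'' argument and the $\tau(\rho)>t$ case explicitly. Your structural observation that IO-alternation forces the system-side order to coincide with the observed order (so that left-to-right processing is complete) is left implicit in the paper. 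In short, your write-up is the more careful of the two on exactly the points you flag as error-prone, but the underlying construction is the same.
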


\begin{proof}
We generalize the construction presented above for monitoring to testing, i.e., we show how to compute reach-sets and then intersect them with the non-empty language states. 
Due to Theorem~\ref{thm:testingverdictvsautverdict}, this suffices to compute $\monitordelayssymbolprime(\IO(\lang{\aut}), \IO(\lang{\overline{\aut}}))$.

Instead of just adding the clock~$\etime$ as in the case of monitoring under delay, we now add two clocks $\etime_I, \etime_O$, and change the initial zones to include the latency bounds as the differences between the clocks $\etime_O$, $\etime_I$ and $\globaltime$. The idea here is that $\etime_I \le \globaltime \le \etime_O$ and that $\globaltime-\etime_I$ reflects the input delay from the test harness to the system under test that applies to test stimuli, while $\etime_O-\globaltime$ corresponds to the output delay affecting responses from the system under test.
Given a symbolic state $(q, Z)$ where $Z$ is extended with clocks $\globaltime, \etime_I$, and $\etime_O$, % input and output alphabets $\Sigma_I, \Sigma_O$, 
and a pair $(\square, \tau) \in \{I, O\} % \cup \Sigma_I \cup \Sigma_O 
\times \nnrats$ we define the following zone operation:
\[
(q, Z) \land (\square, \tau) = 
\begin{cases} 
    (q, Z \land \etime_I \in [\tau, \tau+\jit_I]) &\text{if }\square = I, \\
    (q, Z \land \etime_O \in [\tau-\jit_O,\tau]) &\text{if }\square = O.
\end{cases}
\Bigg.
\]

Given a \tba~$\aut$ extended with clocks $\globaltime, \etime_I$ and $\etime_O$, and an observation $(\sigma, \tau) \in \Sigma \times \nnrats$, the set of possible successors of a symbolic state $(q, Z)$ is 
\begin{eqnarray*}
  Succ_{IO}((q, Z), (\sigma, \tau)) &=& \{(q', Z') \land (\dir(\sigma), \tau) \mid (q', Z') \in post((q, Z), \sigma)\}, \text{ where}\\
  \dir(\sigma) &=& \begin{cases}
      I & \text{if } \sigma \in \Sigma_I,\\
      O & \text{if }\sigma \in \Sigma_O,
  \end{cases}
\end{eqnarray*}

% \begin{multline*}
% succ_{IO}((q, Z), (\sigma, \tau)) = \\
% \{(q', Z' \land \etime_{\ind{\sigma}} \in [\tau-\jit_{\ind{\sigma}}, \tau+\jit_{\ind{\sigma}}] ) \mid  (q', Z') \in post((q, Z), \sigma)\}.
% \end{multline*}
and 
\[Succ_{IO}(S, (\sigma, \tau)) = \bigcup_{(q, Z)\in S} succ_{IO}((q, Z), (\sigma, \tau))\] 
collects the successors of a set of symbolic states $S$. In order to use $Succ_{IO}$ to compute the reach-set, we need a different initial zone than $Z_0$. We define $Z_0^{IO}$ to extend $Z_0$ with $\globaltime$, $\etime_I$ and $\etime_O$ and set the bounds on the differences of these clocks to be the input/output delay bounds. Given a \tba~$\aut = (Q, Q_0, \Sigma, C, \Delta, \mathcal{F})$ and latency bounds $\ell_I, u_I, \ell_O, u_O$ then we have that:\footnote{Technically speaking, the constraints defining~$Z_0^{IO}$ require negative clock values for $\etime_I$. This can be avoided by a shift in the bounds between $\globaltime$ and $\etime_I$, or by letting time pass.}
\begin{equation*}
\begin{split}
         Z_0^{IO} \equiv& \overbrace{\etime_O - \globaltime \le u_O \land \globaltime - \etime_O \le -\ell_O\,}^{\etime_O-\globaltime \in [\ell_O,u_O]} \land \\ 
         &\forall x \in C \cup \{\globaltime\}: x = 0 \ \land\\
         &\underbrace{\etime_I - \globaltime \le -\ell_I \land \globaltime - \etime_I \le u_I\,}_{\globaltime-\etime_I \in [\ell_I, u_I]}.
\end{split}
\end{equation*}

We can now compute the reach-set after a sequence of observations.
Given a \tba $\aut$, a finite timed word $\rho = (\sigma_1, \tau_1),\ldots,(\sigma_n, \tau_n) \in \TSigma^*$, a time point~$t \ge \tau(\rho)$, and a set of delays $\delays = \{(\lat_I, \jit_I, \lat_O, \jit_O) \mid \lat_I \in [\ell_I, u_I] \text{ and } \lat_O \in [\ell_O, u_O]\}$ for given $\ell_I, u_I, \ell_O, u_O, \jit_I, \jit_O \in \nnrats$, 
%input/output delays $\lat_I \in [\ell_I, u_I]$ $\lat_O \in [\ell_O, u_O]$, 
the reach-set can be computed as 
%$\bigcup_{\rho \in \testGT_{\delays_I, \delays_O}(\rho, \tau(\rho))} \terminal{\aut}{\rho,\tau(\rho))} = S_{n+1}$ 
$\terminalprime{\aut}{\delays}{\rho^*,t} = S_{n+1}$
where $S_0 = \{(q_0, Z_0^{IO}) \mid q_0 \in Q_0\}$ and $S_i = Succ_{IO}(S_{i-1}, (\sigma_i, \tau_i))$ for $i \in [1, n]$, and 
\[
S_{n+1} = 
\begin{cases}
    S_n \land (I, t) &\text{if } \dir(\sigma_n) = O\\
    S_n \land (O, t) &\text{if } \dir(\sigma_n) = I \textit{ and } t > l_I + u_O + \jit_O \\
    S_n & \text{otherwise}\\
\end{cases}
\]

The computation of the set of non-empty language states requires us to enforce IO-alternation by taking the intersection with an automaton $\aut_{IO}$ that accepts all IO-alternating words such that $
\IO(\lang{\aut}) = \lang{\aut \otimes \aut_{IO}}$ as described in Remark~\ref{remark_io}.\ref{remark_io_restr}.
This allows us to implement the testing function by computing the reach-sets and intersecting them with the set of non-empty language states.
\end{proof}
%%%%%%%%%%%%%%%%%%%%%%%%%%%%%%%%%%%%%%%%%%%%%%%%%%%%%%%%%%%%%%%%%%%%%%%%%%%
%%%%%%%%%%%%%%%%%%%%%%%%%%%%%%%%%%%%%%%%%%%%%%%%%%%%%%%%%%%%%%%%%%%%%%%%%%%
%%%%%%%%%%%%%%%%%%%%%%%%%%%%%%%%%%%%%%%%%%%%%%%%%%%%%%%%%%%%%%%%%%%%%%%%%%%

\lstdefinestyle{console}{
    frame=single,
    basicstyle=\ttfamily\scriptsize,
    breakatwhitespace=true,
    breaklines=true,
    captionpos=b,
    keepspaces=true,
    numbers=left,
    numbersep=5pt,
    showspaces=false,
    showstringspaces=false,
    showtabs=false,
    tabsize=4,
}

\section{Implementation}
\label{sec:implementation}

In this section we demonstrate our tool implementation of the monitoring and testing procedures described in this paper. We run experiments to demonstrate the efficiency and illustrate the evaluation of consistent latencies.

The tool \textsc{MoniTAal}\footnote{\url{https://github.com/DEIS-Tools/MoniTAal}} implements monitoring and testing under delay as described in this paper. This includes the difference-bounded matrix data structure to handle clock zones, parsing property automata modeled in \textsc{Uppaal}, 
computing the set of nonempty language states, computing the reach-sets in an online fashion over an observed word based on latency and jitter bounds in $[0, \infty[$, providing verdicts~$\top, \bot$ or $\unknown$, and providing bounds on the latency values that are consistent with $\top$ and $\bot$ for both inputs and outputs.

MoniTAal monitors a property by utilizing an automaton for the property and its complement. A series of observations can then be provided to compute verdicts. Bounds on latency and jitter can be given when monitoring under delay, where the valid latency values are computed for each verdict. 
% MoniTAal can be run as a binary where the automata are parsed from a UPPAAL xml file while the observations are parsed as as text from a file or standard input. MoniTAal can also be used as a C++ library.

We give a short demonstration going through Example~\ref{ex:method} using MoniTAal to monitor the property $F_{[0,10]}a \land G_{[0,20]} \neg b$ over the timed word $(a, 17.3), (b, 27.5)$, where the latency is between 0 and 10, and the jitter bound is $0.2$. Listing~\ref{lst:example} shows the output of the tool on lines 3-9 and 13-19 after each observation $(a,173)$ on line 1 and $(b, 271)$ on line 11. Note that we multiply all timing values by 10 in order to use integer, rather than rational, time points.
After the first observation, the verdict is $\unknown$ (inconclusive) (line 3). Furthermore, we see that the consistent latencies for the $\top$ (positive) verdict are now tightened to $[71, 100]$ (line 5). The consistent latencies for the $\bot$ (negative) verdict are still in $[0, 100]$ (line 8).
The second observation is $(b, 271)$ on line 11, after which the verdict is still $\unknown$ (inconclusive) (line 13), but the $\top$ consistent latencies are now tightened further to be within $[71,75)$ (line 15), while the $\bot$ consistent latencies are still in $[0, 100]$ (line 18).

%$ ./MoniTAal-bin -p positive delay.xml -n negative delay.xml -v --latencyl 0 --latencyu 10 --jitter 2
\begin{lstlisting}[style=console, basicstyle=\ttfamily\small, label={lst:example}, caption={Demonstration of MoniTAal over Example~\ref{ex:method}.}]
Input: @173 a

Verdict: INCONCLUSIVE
Positive:
Consistent latencies: {[71,100]}
Jitter bound: 2
Negative:
Consistent latencies: {[0,100]}
Jitter bound: 2

Input: @271 b

Verdict: INCONCLUSIVE
Positive:
Consistent latencies: {[71,75)}
Jitter bound: 2
Negative:
Consistent latencies: {[0,100]}
Jitter bound: 2
\end{lstlisting}

For realistic experiments, we took traces and properties from the gear controller model in \cite{gear}. The model, along with its formal requirements, was created by the company Mecel.
%on a trace generated by simulating the gear controller model from \cite{gear} in \textsc{Uppaal}~\cite{uppaal}.

To demonstrate the tightening of bounds for the latency parameters, we monitor the response property 
\[\varphi_{\textit{gear}} = G_{[0, \infty]}(\textit{ReqNewGear} \rightarrow F_{[150, 1205]} \textit{NewGear})\]
that requires the gear controller to change gear within 150 ms to 1205 ms after a shift request. In this experiment the event~\textit{ReqNewGear} is an input from the test harness and \textit{NewGear} is an output from the system under test, i.e., we are concerned with active testing where the inputs arrive after some delay at the system to be monitored and the outputs from the system are only observed after some delay.

To verify~$\varphi_{\textit{gear}}$, we start the monitor with the known bounds on the unknown latency and the jitter. 
We apply the monitor to original traces of the gearbox model, which have their timestamps modified by an assigned input-/output latency (which is unknown to the monitor) plus a random jitter value. The traces have a chance of providing an error, which means that the system changes the gear before 150 ms or after 1205 ms.
Because of the unknown exact delay, the monitor is not guaranteed to catch the error and give a $\bot$ verdict immediately; however, it will tighten the latency bounds and may provide a verdict after further observations.

First we monitor $\varphi_{\textit{gear}}$ with latency parameters $\lat_I \in [10, 50]$ and $\lat_O \in [60, 100]$ and a jitter bound of $10$. The actual input and output latency is $45$ and $65$. The trace provided had an error at observation 16 and 22, after which the monitor gave a $\bot$ verdict.

The changing bounds of $\top$-consistent latencies are illustrated in Fig.~\ref{fig:testgraph1} where the left side shows the upper and lower bounds on the input (red, at the top) and output (yellow, at the bottom) latency and the right side shows the bounds on the latencies added together (blue). When a gear is changed outside the bounds of the property, this is an error at the system, and is marked by a dashed vertical line at the index of that observation. As the first error occurs (at time 16), all  lower bounds on latencies are tightened significantly. The actual latency of the output (which is 65) here already becomes inconsistent with the $\top$ verdict, but as the actual latency is unknown, this remains undetectable and the verdict stays inconclusive at this point of time. Yet another error occurs at the 22nd event, at which point the set of consistent delays for the $\top$ verdict becomes empty and the $\bot$ verdict is given.

\begin{figure}[]
    \centering
        \begin{tikzpicture}
            \begin{axis}[
                legend style={at={(0.5,1.1)}, anchor=south},
                height=5cm,
                width=6cm,
                xlabel={\# Observation},
                ytick={10, 45, 65, 100},
                xtick={1, 16, 22},
                ymajorgrids=true,
                % xmajorgrids=true,
                legend entries={
                Bounds on $\lat_O$,
                Bounds on $\lat_I$
                }]
                \addlegendimage{only marks, mark=square*, color=yellow}
                \addlegendimage{only marks, mark=square*, color=red}
                \addplot[draw=none, name path=inlow] table [x=Obs, y=Posinup, col sep=comma] {data-in-out1.csv};
                \addplot[draw=none, name path=inup] table [x=Obs, y=Posinlow, col sep=comma] {data-in-out1.csv};
                \addplot[draw=none, name path=outlow] table [x=Obs, y=Posoutup, col sep=comma] {data-in-out1.csv};
                \addplot[name path=outup, draw=none] table [x=Obs, y=Posoutlow, col sep=comma] {data-in-out1.csv};
                \addplot[red] fill between[
                    of = inup and inlow, 
                    split, 
                    % soft clip={(0,0) square (21,200)}
                  ];
                % \addlegendentry{Bounds of input latency}
                \addplot[yellow] fill between[
                    of = outup and outlow, 
                    split
                  ];
                \addplot[color=black, very thick, dotted] table[row sep = crcr]{16 0 \\ 16 110 \\};
                \addplot[color=black, very thick, dotted] table[row sep = crcr]{22 0 \\ 22 110 \\};
                % \addplot[color=black, thick] table[row sep = crcr]{0 65 \\ 22 65 \\};
                % \addplot[color=black, thick] table[row sep = crcr]{0 45 \\ 22 45 \\};
            \end{axis}
        \end{tikzpicture}
        \begin{tikzpicture}
            \begin{axis}[
                legend style={at={(0.5,1.1)}, anchor=south},
                height=5cm,
                width=6cm,
                xlabel={\# Observations},
                ytick={0, 70, 110, 150},
                xtick={1, 16, 22},
                % xmajorgrids=true,
                ymajorgrids=true,
                legend entries={Bounds on $\lat_O + \lat_I$}]
                \addlegendimage{only marks, mark=square*, color=blue}
                \addplot[draw=none, name path=plot1] table [x=Obs, y=Posdifflow, col sep=comma] {data-in-out1.csv};
                \addplot[draw=none, name path=plot2] table [x=Obs, y=Posdiffup, col sep=comma] {data-in-out1.csv};
                \addplot[blue] fill between[
                    of = plot1 and plot2, 
                     split,
                  ];
                \addplot[color=black,very thick, dotted] table[row sep = crcr]{16 0 \\ 16 160 \\};
                \addplot[color=black,very thick, dotted] table[row sep = crcr]{22 0 \\ 22 160 \\};
                % \addplot[color=black, thick] table[row sep = crcr]{0 110 \\ 22 110 \\};
            \end{axis}
        \end{tikzpicture}
    \caption{Test run with 22 observations ending in a $\bot$ verdict where $\lat_I \in [10, 50], \lat_O \in [60, 100], \jit = 10$ and the actual input latency is $65$ and the actual output latency is $45$. Errors occurred at the 16.\ and 22.\ observation. The maximal size of reach-sets is five states.}
    \label{fig:testgraph1}
\end{figure}

We monitor $\varphi_{\textit{gear}}$ again, now with different latency bounds $\lat_I \in [0, 90]$ and $\lat_O \in [100, 200]$ and the actual input-/output latency is 60 and 120.

Fig.~\ref{fig:testgraph2} shows the changing latency bounds where a single error occurs in the first response (observation 2), after which the upper bounds of the input-/output latency lower slightly, while the upper bound on the combined latency is tightened significantly. It drops slightly below the actual combined latency, which is 180, but this remains undetectable as the actual latency is unknown. Over the course of subsequent observations, the lower bound on the combined latency increases repeatedly, even though no further erroneous behavior is observed. After 22 observations, the set of consistent combined latencies finally becomes empty, and consequently the verdict $\bot$ is given. Note that this conclusive verdict is a consequence of a complex temporal matching process that tries to align the correct observations between time instants 3 and 22 with the possible observation delays and only thereby conclusively detects the single error that occurred at the very beginning of the observation.

\begin{figure}[]
    \centering
        \begin{tikzpicture}
            \begin{axis}[
                legend style={at={(0.5,1.1)}, anchor=south},
                height=5cm,
                width=6cm,
                xlabel={\# Observations},
                ymajorgrids=true,
                legend entries={
                Bounds on $\lat_O$,
                Bounds on $\lat_I$,
                },
                xtick={2, 22},
                ytick={0,60,120,200}
                ]
                \addlegendimage{only marks, mark=square*, color=yellow}
                \addlegendimage{only marks, mark=square*, color=red}
                \addplot[draw=none, name path=inlow] table [x=Obs, y=Posinup, col sep=comma] {data-in-out-diff1.csv};
                \addplot[draw=none, name path=inup] table [x=Obs, y=Posinlow, col sep=comma] {data-in-out-diff1.csv};
                \addplot[draw=none, name path=outlow] table [x=Obs, y=Posoutup, col sep=comma] {data-in-out-diff1.csv};
                \addplot[draw=none, name path=outup] table [x=Obs, y=Posoutlow, col sep=comma] {data-in-out-diff1.csv};
                \addplot[red] fill between[
                    of = inlow and inup, 
                    split
                  ];
                % \addlegendentry{Bounds of input latency}
                \addplot[yellow] fill between[
                    of = outlow and outup, 
                    split
                  ];
                % \addlegendentry{Bounds of output latency}
                \addplot[color=black, thick, dotted] table[row sep = crcr]{2 0 \\ 2 200 \\};
                % \addplot[color=black, thick] table[row sep = crcr]{0 60 \\ 22 60 \\};
                % \addplot[color=black, thick] table[row sep = crcr]{0 120 \\ 22 120 \\};
            \end{axis}
        \end{tikzpicture}
        \begin{tikzpicture}
            \begin{axis}[
                legend style={at={(0.5,1.1)}, anchor=south},
                height=5cm,
                width=6cm,
                xlabel={\# Observations},
                ytick={0, 100, 180, 290},
                ymajorgrids=true,
                xtick={2, 8, 12, 22},
                % xmajorgrids=true,
                legend entries={Bounds on $\lat_O + \lat_I$}]
                \addlegendimage{only marks, mark=square*, color=blue}
                \addplot[draw=none, name path=plot1] table [x=Obs, y=Posdifflow, col sep=comma] {data-in-out-diff1.csv};
                \addplot[draw=none, name path=plot2] table [x=Obs, y=Posdiffup, col sep=comma] {data-in-out-diff1.csv};
                \addplot[blue] fill between[
                    of = plot1 and plot2, 
                    split
                  ];
                % \addlegendentry{Bounds of $\lat_o - \lat_i$}
                \addplot[color=black, thick, dotted] table[row sep = crcr]{2 0 \\ 2 300 \\};
                % \addplot[color=black, thick] table[row sep = crcr]{0 180 \\ 22 180 \\};
            \end{axis}
        \end{tikzpicture}
    \caption{Test run with 22 observations ending in a $\bot$ verdict where $\lat_I \in [0, 90], \lat_O \in [100, 200], \jit = 10$ and the actual input latency is $60$ and the actual output latency is $120$. A single error occurred at observation \# 2. The maximal size of reach-sets is five states.}
    \label{fig:testgraph2}
\end{figure}

To examine the performance of our monitoring algorithms, we again monitor the property $\varphi_{\textit{gear}}$, but in this case over a trace that does not feature any errors such that monitoring will not terminate early with a conclusive verdict. As the monitor does not terminate, we can evaluate the monitor performance over arbitrarily long traces ranging from 1,000 to 10,000 events.
In these experiments, we compare three scenarios: \textit{classic} is regular monitoring with no delays, \textit{ delay} is where every observation is an output, and \textit{ testing} is where the request is a delayed input and the response is a delayed output. In the delay and testing cases the delay parameters are $\lat_I, \lat_O \in [0, 100]$ and $\jit_I, \jit_O = 10$. 

The maximal response time and number of symbolic states in the reach-set of the three scenarios are plotted in Table~\ref{res:small}.
Response time is the time it takes from observing an event to providing a verdict. The size of the reach-set is the number of symbolic states stored in the representation of the reach-set. In the implementation, the reach-set is stored as a set of symbolic states which might contain redundant information. To minimize the size we check for zone inclusion, such that a symbolic state $(q, Z)$ is only stored if it is not included in another stored state $(q, Z')$ with $Z \subseteq Z'$. We also employ the inactive clock abstraction from \cite{DBLP:conf/rtss/DawsY96}.

%the reach-set negatively affects the response time, making it slower. 

The size of the reach-set obviously affects the response time negatively, as is evident from the differences between the classic, delay and testing cases. However, all response times remain below 300 $\mu$s, demonstrating that the implementation can sustain real-time monitoring of the response property over an arbitrary number of observations and in all scenarios, even in the complex testing scenario with both input and output delays.
It is also interesting to note that in the testing case, the maximal number of states increased significantly. This is due to the fact that we add two clocks that are continuously changed but never reset. 
The property itself only requires a single clock, so the total number of clocks in the classic, delay and testing scenarios are 1, 3, and 4.
The stored symbolic states then have more constraints that can be different from each other. If resulting zones have slightly shifted bounds between $etime_I$ and $etime_O$, it might be beneficial to check if any zones can be merged, in order to make the reach-set representation more memory-efficient.

\begin{filecontents*}{output-bench.csv}
index,response,size
1,157143,4
2,112791,4
3,113431,4
4,109486,4
5,115017,4
6,110321,4
7,107939,4
8,126737,4
9,113833,4
10,110354,4
\end{filecontents*}
\begin{filecontents*}{input-bench.csv}
index,response,size
1,329242,12
2,297766,12
3,307869,12
4,313467,12
5,316759,12
6,313454,12
7,301624,12
8,304337,12
9,300543,12
10,302571,12
\end{filecontents*}

\begin{table}[tb]
  \caption{Results for monitoring $\varphi_{\textit{gear}}$ under delay and test over traces with varying length.}
  \label{res:small}
  \begin{center}
  \renewcommand{\tabcolsep}{0.15cm}
  \begin{tabular}{r r r r r r r}
    \# Observ. & \multicolumn{3}{c}{Max.\ response\ time ($\mu$s)} & \multicolumn{3}{c}{Max. \# Symbolic States}\\
    & Classic & Delay & Testing & Classic & Delay & Testing \\
    \midrule
    1000 & 66 & 82 & 217 & 2 & 3 & 11\\
    2000 & 42 & 76 & 153 & 2 & 3 & 11\\
    3000 & 63 & 83 & 233 & 2 & 3 & 11\\
    4000 & 34 & 84 & 230 & 2 & 3 & 11\\
    5000 & 63 & 68 & 216 & 2 & 3 & 11\\
    6000 & 67 & 101 & 216 &2 &  3 & 11\\
    7000 & 66 & 64 & 228 & 2 & 3 & 11\\
    8000 & 59 & 73 & 215 & 2 & 3 & 11\\
    9000 & 77 & 70 & 256 & 2 & 3 & 11\\
    10000 & 47 & 77 & 216 & 2 & 3 & 11\\
\end{tabular}\end{center}\end{table}

\section{Conclusion}
\label{sec:conc}
We have introduced zone-based algorithms realizing optimal (in the sense of being anticipating~\cite{bauer2006monitoring} as well as conclusive under uncertainty \cite{FFKK22}) online operational monitoring and testing of embedded real-time systems when the communication between the monitor (or testing harness, respectively) and the system is subject to unknown (up to bounds) delay. 
This situation is rather typical in practice as observations are mediated by sensors, may involve conversion between analog and digital, or pass communication networks and consequently are indirect in general, leading to delays and inexact time-stamping. Our constructions thus fill a gap in the pre-existing theories for monitoring and testing hard real-time systems, which tend to assume full and exact temporal observability by immediate coupling or, equivalently, perfect synchrony between systems and their monitors or test harnesses. 

We assume no knowledge of timing information of observations, except knowing when they are received. However, one could imagine a scenario where some output observations include timing information from the system. This information would give us the exact delay (latency + jitter) of a specific observation and, while not providing an exact synthesis, could be used to refine the latency parameter. Thus the parameter would not necessarily be fully determined, and is still needed in the gaps between future observations and for observations without timing information.

A notable point of our construction is that it applies a reduction to simple timed automata and is purely zone-based despite the unknown communication delay being a timing parameter. The construction thus not only avoids the complexities of property analysis for parameterized timed automata \cite{AndreLR22}, but also provides an instance of monitoring and testing under uncertainty where the underlying arithmetic constraint systems remain of fixed dimensionality (namely the number of clocks in the property automata plus two for monitoring) despite their history dependence. This is in stark contrast to direct constraint encodings growing linearly over history length as in \cite{FFKK22}.

In further research, we study the question of monitorability~\cite{bauer2006monitoring}: some properties will never give definitive verdicts (e.g., \myquot{infinitely often $a$}) and are therefore not useful for monitoring. We conjecture that our zone-based approach can be exploited to decide monitorability of real-time properties.

\paragraph*{Acknowledgments}
M.\ Fränzle has been funded by by the State of Lower Saxony, Zukunftslabor Mobilität, and by Deutsche Forschungsgemeinschaft, grants FR\,2715/5-1 and FR 2715/6-1. T.M.\ Grosen and K.G.\ Larsen  have been funded by the VILLUM Investigator grant S4OS, and together with  M.\ Zimmermann they have been supported by DIREC - Digital Research Centre Denmark.

\bibliographystyle{ACM-Reference-Format}
\bibliography{bib}

\end{document}